\DeclarePairedDelimiter\floor{\lfloor}{\rfloor}
\newcommand*{\MyLaw}{\mathrm{law}}
\newcommand*{\eqlawU}{\ensuremath{\mathop{\overset{\MyLaw}{=}}}} 
\newcommand*{\eqlaw}{\mathop{\overset{\MyLaw}{\resizebox{\widthof{\eqlawU}}{\heightof{=}}{=}}}}
\newcommand\independent{\protect\mathpalette{\protect\independenT}{\perp}}
\def\independenT#1#2{\mathrel{\rlap{$#1#2$}\mkern2mu{#1#2}}}
\DeclareMathOperator*{\plim}{lim\vphantom{p}}
\DeclareMathOperator{\E}{\mathbb{E}}
\DeclareMathOperator{\Prob}{\mathbb{P}}
\DeclareMathOperator{\Ind}{\mathds{1}}
\newcommand{\RR}{\mathbb{R}}
\newsavebox\tboxa
\newsavebox\tboxb
\newlength\tdima
\newcommand*{\oversymb}{\mathpalette\@oversymb}
\newcommand*{\@oversymb}[2]{
    \sbox{\tboxa}{$\m@th#1\mathrm{#2}$}
    \setbox\tboxb\null
    \ht\tboxb\ht\tboxa
    \dp\tboxb\dp\tboxa
    \wd\tboxb\wd\tboxa
    \sbox{\tboxa}{$\m@th#1{#2}$}
    \setlength\tdima{\the\wd\tboxa}
    \addtolength\tdima{-\the\wd\tboxb}
    \sbox{\tboxb}{$\m@th#1\hskip\tdima\overline{\xusebox{\tboxb}}$}
    \rlap{\usebox\tboxb}{\usebox\tboxa}}
\newcommand*{\xusebox}[1]{\mathord{{\usebox{#1}}}}
\newtheorem{theorem}{Theorem}[section]
\newtheorem{proposition}[theorem]{Proposition}
\newtheorem{lemma}[theorem]{Lemma}
\newtheorem{remark}[theorem]{Remark}
\let\oldremark\remark
\renewcommand{\remark}{\oldremark\normalfont}
\numberwithin{equation}{section}
\date{}
\begin{document}

\title{A mixed Monte Carlo and PDE variance reduction method\\ for foreign exchange options under the Heston-CIR model}

\author{\scshape{andrei cozma}\thanks{\footnotesize\scshape{Mathematical Institute, University of Oxford, Oxford, OX2 6GG, UK}\newline
\hspace*{1.8em}andrei.cozma@maths.ox.ac.uk, christoph.reisinger@maths.ox.ac.uk} \and \scshape{christoph reisinger\footnotemark[1]}}

\maketitle

\begin{abstract}\noindent
In this paper, the valuation of European and path-dependent options in foreign exchange (FX) markets is considered when the currency exchange rate evolves according to the Heston model combined with the Cox-Ingersoll-Ross dynamics for the stochastic domestic and foreign short interest rates. The mixed Monte Carlo/PDE method requires that we simulate only the paths of the squared volatility and the two interest rates, while an ``inner'' Black-Scholes-type expectation is evaluated by means of a PDE. This can lead to a substantial variance reduction and complexity improvements under certain circumstances depending on the contract and the model parameters. In this work, we establish the uniform boundedness of moments of the exchange rate process and its approximation, and prove strong convergence in $L^p$ ($p\geq1$) of the latter. Then, we carry out a variance reduction analysis and obtain accurate approximations for quantities of interest. All theoretical contributions can be extended to multi-factor short rates in a straightforward manner. Finally, we illustrate the efficiency of the method for the four-factor Heston-CIR model through a detailed quantitative assessment.

\vspace{1em}\noindent
\textbf{Keywords:} conditional Monte Carlo, mixed Monte Carlo/PDE, stochastic volatility, stochastic interest rates, variance reduction, strong convergence.
\end{abstract}

\section{Introduction}\label{sec:intro}

In FX markets, option pricing with stochastic volatility and stochastic interest rates has seen a large amount of interest in the last few years \citep{Grzelak:2011,Haastrecht:2011,Ahlip:2013}, leading to the extension of the \citet{Heston:1993} two-factor stochastic volatility model and the Sch\"obel-Zhu \citep{Schobel:1999} model to currency derivatives. Although appealing due to its simplicity, assuming constant interest rates is inappropriate for long-dated FX products, and the effect of interest rate volatility can even outweigh that of FX rate volatility for long maturities, a fact confirmed by empirical results \citep{Haastrecht:2009}. Here, the spot FX rate is defined as the number of units of domestic currency per one unit of foreign currency.

In this paper, we consider the four-factor Heston-CIR model proposed and examined in \citet{Ahlip:2013} when the volatility and the exchange rate dynamics are correlated, whereas the domestic and foreign interest rates are pairwise independent and also independent of the exchange rate and the volatility. Our motivation comes from the fact that the square root (CIR) process \citep{Cox:1985} for the variance and interest rates is widely used in the industry due to its desirable properties, such as mean-reversion and non-negativity. Under these restrictive assumptions on the independence of the Brownian drivers, the authors argue that the model is affine and derive a semi-analytical formula for the European call option price. The importance of a non-zero correlation between the FX rate and the interest rate(s) is recognized in \citet{Hunter:2005}. However, any other non-zero correlations give rise to a non-affine model, in which case we lose analytical tractability.

Therefore, we see ourselves forced to turn to numerical algorithms, and the mixed Monte Carlo/PDE solver \citep{Loeper:2009} is a good alternative to the classical Monte Carlo and finite difference methods. Monte Carlo simulation methods \citep{Glasserman:2003} can handle path-dependent features easily and scale linearly with the dimension, however a considerable number of simulations is typically required for a good accuracy. Conversely, finite difference methods incorporate early exercise features easily and provide a fast convergence for low-dimensional problems (up to three dimensions), but become intractable as the dimensionality increases. This makes Monte Carlo methods more attractive for the four-factor Heston-CIR model, and the relatively slow convergence $\mathcal{O}\big(M^{-0.5}\big)$ in the number of simulations $M$ raises the question of finding an efficient variance reduction technique.

The idea behind the mixed Monte Carlo/PDE method is to write the option values as nested conditional expectations. Then, the innermost expectation is evaluated analytically, in case of European-style options, or using finite differences, and the outer expectation by simulation. The earliest related published work belongs to \citet{Hull:1987}. The authors consider a European call option under a two-dimensional stochastic volatility model with uncorrelated asset price and volatility and prove that, conditional on the integral of the variance process, the asset price is lognormally distributed and hence the option price can be expressed as a Black-Scholes price. \citet{Willard:1997} extends the analysis of \citet{Hull:1987} to path-independent options under stochastic volatility and instantaneously correlated factors, and uses the smoothness of the ``conditional price'' to calculate price sensitivities (the Greeks). The author also employs quasi-Monte Carlo (low-discrepancy) methods to further reduce the variance of price estimates, but with no effect on the discretization bias. The mixed Monte Carlo/PDE method develops this conditioning technique -- known as conditional Monte Carlo -- and allows the combined use of Monte Carlo and finite difference methods for the valuation of path-dependent contracts.

The utility of the method can be easily recognised when pricing European-style options under the Heston-CIR model. Conditioning on the entire paths of the squared volatility and the domestic and foreign interest rates, the dynamics of the exchange rate are governed by a geometric Brownian motion with time-dependent drift and diffusion coefficients. Combined with the existence of a closed-form solution for the conditional option price, the algorithm reduces the variance in Monte Carlo simulations -- by eliminating a source of noise -- and the dimension of the problem, from four to three. The mixed Monte Carlo/PDE method has been the object of several numerical studies in the past few years, e.g., \citet{Lipp:2013}, \citet{Ang:2013}, \citet{McGhee:2014}, \citet{Dang:2015}, and various extensions to the original idea have been considered. For instance, \citet{Dang:2015} use the Hull-White type dynamics of the interest rates and condition on the variance path to find a closed-form solution for the conditional price of a European option. A few of these references examine convergence properties of the algorithm by heuristic arguments, and none of them take into account the error arising from the discretization of the variance and interest rate processes.

To the best of our knowledge, the convergence of the mixed Monte Carlo/PDE method has not yet been established, and even the literature on Monte Carlo methods under stochastic volatility is scarce. \citet{Higham:2005} considered an Euler simulation of the Heston model with a reflection fix and proved strong convergence of the stopped approximation process, as well as for a European put and an up-and-out call, by using the boundedness of payoffs. \citet{Cozma:2015} extended these results to derivatives with unbounded payoffs, stochastic-local volatility, stochastic interest rates, and exotic payoffs. Their discretization scheme coincides with the one considered in this paper at the discrete time points, but necessarily differs in the continuous-time interpolation. Some results can be utilized from the earlier work, although stronger conditions on the model parameters were needed there and therefore the results here can also be seen as an improvement for the standard (i.e., non-mixed) scheme. In particular, the new results are always guaranteed for zero or negative correlation in the Heston model. The main conceptual difference, however, is the new continuous-time interpolation using conditional drifts, which is crucial for deriving the conditional PDEs and leads to different technical challenges. Also, for path-dependent options, neither the original scheme nor the analysis applies to the present work.

In this paper, we study convergence properties of the mixed Monte Carlo/PDE method with the full truncation Euler (FTE) discretization for the squared volatility and the two interest rates, and demonstrate the efficiency of the method for a European call and an up-and-out put option. We prefer the full truncation scheme \citep{Lord:2010} because it preserves positivity, is easy to implement and is found empirically to produce the smallest bias among all Euler schemes. An interesting alternative to the FTE scheme would be the backward Euler-Maruyama (BEM) scheme \citep{Szpruch:2014}. However, the quanto correction term in the dynamics of the foreign interest rate would lead to technical challenges in the convergence analysis. The major contributions of this paper are as follows.

\begin{itemize}
	\item We establish the uniform boundedness of moments of the four-dimensional process and its approximation, and prove strong convergence in $L^p$ ($p\geq1$) of the discretization scheme. Then, we deduce the convergence of mixed Monte Carlo/PDE estimators for computing option prices and discuss possible extensions to higher-dimensional models.
	\item We carry out a thorough theoretical variance reduction analysis of the mixed Monte Carlo/PDE method and employ standard Monte Carlo -- with the log-Euler discretization -- as the reference method, noting that the analysis applies to general interest rate dynamics. In particular, we investigate how different values of the underlying model parameters affect the variance of mixed estimators.
	\item We perform a series of numerical experiments and demonstrate the convergence of the mixed Monte Carlo/PDE method under the four-factor FX model for two financial derivatives: a European call and an up-and-out put option. In addition, we establish the efficiency of the method by comparison with alternative numerical schemes and examine the sensitivity of the variance reduction factor to changes in the parameters.
\end{itemize}

The remainder of this paper is structured as follows. In Section~\ref{sec:setup}, we introduce the four-factor FX model, define the mixed simulation scheme and describe the pricing algorithm. In Section~\ref{sec:convergence}, we prove strong convergence of the exchange rate approximations and then discuss some extensions. Detailed proofs of some technical results are given in the Appendix. In Section~\ref{sec:variance}, we carry out a variance reduction analysis of the mixed Monte Carlo/PDE method for a European option. Various numerical experiments are presented and discussed in Section~\ref{sec:numerics}. Finally, Section~\ref{sec:conclusion} summarizes the results and outlines possible future work.

\section{Preliminaries}\label{sec:setup}

\subsection{The four-factor model}\label{subsec:model}

We have in mind a model in an FX market, for the spot FX rate $S$, the variance of the FX rate $v$, the domestic short interest rate $r^{d}$ and the foreign short interest rate $r^{f}$. Unless otherwise stated, in this paper, the subscripts and superscripts ``$d$'' and ``$f$'' are used to indicate domestic and foreign, respectively. Consider a filtered probability space $\left(\Omega,\mathcal{F},\{\mathcal{F}_t\}_{t\geq0},\mathbb{Q}\right)$ and suppose that the dynamics of the underlying processes are governed by the system of stochastic differential equations (SDEs) below under the domestic risk-neutral measure $\mathbb{Q}$:
\begin{align}\label{eq2.1}
	\begin{dcases}
	dS_{t} = \big(r^{d}_{t}-r^{f}_{t}\big)S_{t}dt + \sqrt{v_{t}}\hspace{.5pt}S_{t}\hspace{1pt}dW^{s}_{t} \\[4pt]
	dv_{t} \hspace{1pt} = k\big(\theta-v_{t}\big)dt + \xi\sqrt{v_{t}}\,dW^{v}_{t} \\[0pt]
	dr^{d}_{t} = k_{d}\big(\theta_{d}-r^{d}_{t}\big)dt + \xi_{d}\sqrt{r^{d}_{t}}\,dW^{d}_{t} \\[-2pt]
	dr^{f}_{t} = \big(k_{f}\theta_{f}-k_{f}r^{f}_{t}-\rho_{s\hspace{-.7pt}f}\xi_{f}\sqrt{v_{t}r^{f}_{t}}\hspace{1pt}\big)dt + \xi_{f}\sqrt{r^{f}_{t}}dW^{f}_{t}\hspace{-1pt},
	\end{dcases}
\end{align}
where $\{W^{s},W^{v},W^{d},W^{f}\}$ are correlated standard Brownian motions (BMs) under the risk-neutral measure with constant correlation matrix $\Sigma$. We consider a full correlation structure between the four Brownian drivers, which reflects movements in the financial markets more accurately and allows for better calibrations. Then, we decouple $\{W^{s},W^{f},W^{d},W^{v}\}$ and express them as linear combinations of independent Brownian motions $\{W^{1},W^{2},W^{3},W^{4}\}$. Hence, define the two vectors $W = [W^{s},W^{f},W^{d},W^{v}]^{T}$ and $\tilde{W} = [W^{1},W^{2},W^{3},W^{4}]^{T}$. As $\Sigma$ is symmetric positive definite, a standard Cholesky factorisation gives rise to an upper triangular matrix of coefficients $A=\left(a_{ij}\right)_{1\leq i,j\leq4}$\hspace{1pt}, satisfying $\Sigma = AA^{T}$, which is given below.
\begin{equation}\label{eq2.2}
\Sigma = \begin{bmatrix}
												1 & \rho_{s\hspace{-.7pt}f} & \rho_{sd} & \rho_{sv} \\
												\rho_{s\hspace{-.7pt}f} & 1 & \rho_{d\hspace{.01pt}f} & \rho_{v\hspace{-.7pt}f} \\
												\rho_{sd} & \rho_{d\hspace{.01pt}f} & 1 & \rho_{v\hspace{.2pt}d} \\
												\rho_{sv} & \rho_{v\hspace{-.7pt}f}& \rho_{v\hspace{.2pt}d} & 1
								\end{bmatrix}
\quad\text{and}\quad\;
A = \begin{bmatrix}
				  a_{11} & a_{12} & a_{13} & a_{14} \\
					0 & a_{22} & a_{23} & a_{24} \\
					0 & 0 & a_{33} & a_{34} \\
					0 & 0 & 0 & 1
				\end{bmatrix}
\end{equation}
This decomposition implies that we can choose $\tilde{W}$ so that $W=A\tilde{W}$. We can determine the matrix of coefficients by solving a system of ten equations. Assuming $\rho_{v\hspace{.2pt}d}\neq\pm1$, we find:
\begin{align*}
&a_{14}=\rho_{sv},\hspace{2pt} a_{24}=\rho_{v\hspace{-.7pt}f},\hspace{2pt} a_{34}=\rho_{v\hspace{.2pt}d},\hspace{2pt} a_{33}=\big(1-\rho_{v\hspace{.2pt}d}^{2}\big)^{\frac{1}{2}}\hspace{.5pt},\hspace{2pt} a_{13}=\big(\rho_{sd}-\rho_{sv}\rho_{v\hspace{.2pt}d}\big)\big(1-\rho_{v\hspace{.2pt}d}^{2}\big)^{-\frac{1}{2}}\hspace{.5pt}, \\[1pt]
&a_{23}=\big(\rho_{d\hspace{.01pt}f}-\rho_{v\hspace{-.7pt}f}\rho_{v\hspace{.2pt}d}\big)\big(1-\rho_{v\hspace{.2pt}d}^{2}\big)^{-\frac{1}{2}}\hspace{.5pt},\hspace{2pt}	a_{22}=\big(1-\rho_{d\hspace{.01pt}f}^{2}-\rho_{v\hspace{-.7pt}f}^{2}-\rho_{v\hspace{.2pt}d}^{2}+2\rho_{d\hspace{.01pt}f}\rho_{v\hspace{-.7pt}f}\rho_{v\hspace{.2pt}d}\big)^{\frac{1}{2}}\big(1-\rho_{v\hspace{.2pt}d}^{2}\big)^{-\frac{1}{2}}\hspace{.5pt},
\end{align*}
whereas $a_{11}$ and $a_{12}$ can be found in a similar fashion.

The quanto correction term in the drift of the foreign interest rate in \eqref{eq2.1} comes from changing from the foreign to the domestic risk-neutral measure \citep{Clark:2011}. Alternatively, we can also think of \eqref{eq2.1} as a model in an equity market with asset price process $S$, interest rate $r^{d}$ and dividend yield $r^{f}$, in which case the quanto drift adjustment term vanishes.

\subsection{The mixed simulation scheme}\label{subsec:simulation}

First, we discretize the variance and the two interest rate processes using the full truncation Euler (FTE) scheme of \citet{Lord:2010}. Consider the square root process
\begin{equation}\label{eq2.3}
dy_{t} = k_{y}(\theta_{y}-y_{t})dt + \xi_{y}\sqrt{y_{t}}\,dW^{y}_{t}.
\end{equation}
For a time interval $[0,T]$, consider a uniform grid: $\delta t=T\hspace{-.5pt}/N$, $t_{n}=n\hspace{.5pt}\delta t,\; \forall\hspace{.5pt} n\in\{0,1,...,N\}$. We introduce the time discrete auxiliary process
\begin{equation}\label{eq2.4}
\tilde{y}_{t_{n+1}} = \tilde{y}_{t_{n}} + k_{y}(\theta_{y}-\tilde{y}_{t_{n}}^{+})\delta t + \xi_{y}\sqrt{\tilde{y}_{t_{n}}^{+}}\,\delta W^{y}_{t_{n}},
\end{equation}
where $y^{+} = \max\left(0,y\right)$ and $\delta W^{y}_{t_{n}} = W^{y}_{t_{n+1}} - W^{y}_{t_{n}}$, and the time continuous interpolation
\begin{equation}\label{eq2.5}
\tilde{y}_{t} = \tilde{y}_{t_{n}} + k_{y}(\theta_{y}-\tilde{y}_{t_{n}}^{+})(t-t_{n}) + \xi_{y}\sqrt{\tilde{y}_{t_{n}}^{+}}\big(W^{y}_{t}-W^{y}_{t_{n}}\big),\; \forall\hspace{.5pt} t \in [t_{n},t_{n+1}),
\end{equation}
as suggested in \citet{Higham:2005}. Moreover, we define the non-negative processes
\begin{equation}\label{eq2.6}
Y_{t} = \tilde{y}_{t}^{+}
\end{equation}
and
\begin{equation}\label{eq2.7}
\hspace{-.5pt}\oversymb{\hspace{.5pt}Y}_{\hspace{-2.5pt}t} = \tilde{y}_{t_{n}}^{+},
\end{equation}
whenever $t \in [t_{n},t_{n+1})$. Let $\oversymb{V}$ and $\oversymb{r}^{d}$ be the FTE discretizations -- as defined in \eqref{eq2.7} -- of the variance and the domestic interest rate, respectively. Taking into account the presence of the quanto correction term in the drift of the foreign interest rate, we similarly define
\begin{equation}\label{eq2.7.1}
\tilde{r}_{t}^{f} = \tilde{r}_{t_{n}}^{f} + \Big[k_{f}\theta_{f}-k_{f}\big(\tilde{r}_{t_{n}}^{f}\big)^{+}-\rho_{s\hspace{-.7pt}f}\xi_{f}\sqrt{\tilde{v}_{t_{n}}^{+}\big(\tilde{r}^{f}_{t_{n}}\big)^{+}}\hspace{1.5pt}\Big](t-t_{n}) + \xi_{f}\sqrt{\big(\tilde{r}_{t_{n}}^{f}\big)^{+}}\hspace{.5pt}\big(W^{f}_{t}-W^{f}_{t_{n}}\big),
\end{equation}
as well as
\begin{equation}\label{eq2.7.2}
\hat{r}^{f}_{t} = \big(\tilde{r}_{t}^{f}\big)^{+}
\end{equation}
and
\begin{equation}\label{eq2.7.3}
\oversymb{r}^{f}_{t} = \big(\tilde{r}_{t_{n}}^{f}\big)^{+},
\end{equation}
whenever $t \in [t_{n},t_{n+1})$. Next, we define $\hspace{1.5pt}\oversymb{\hspace{-1.5pt}S}$, the continuous-time approximation of $S$, as the solution to the following SDE:
\begin{align}\label{eq2.8}
d\hspace{1.5pt}\oversymb{\hspace{-1.5pt}S}_{t} &= \mu_{t}\hspace{2pt}\oversymb{\hspace{-1.5pt}S}_{t}\hspace{.5pt}dt + a_{11}\sqrt{\oversymb{V}_{\hspace{-2.5pt}t}}\,\hspace{1.5pt}\oversymb{\hspace{-1.5pt}S}_{t}\hspace{1pt}dW^{1}_{t}, \\[3pt]
\mu_{t} &= \oversymb{r}^{d}_{t} - \oversymb{r}^{f}_{t} - \frac{1}{2}\hspace{-1pt}\left(1-a_{11}^{2}\right)\hspace{-.5pt}\oversymb{V}_{\hspace{-2.5pt}t} + \sum_{j=2}^{4}{a_{1\hspace{-0.5pt}j}\hspace{0.5pt}\sqrt{\oversymb{V}_{\hspace{-2.5pt}t}}\hspace{2pt}\frac{\delta W_{t}^{j}}{\delta t}}\hspace{.5pt}, \nonumber
\end{align}
where $\delta W^{j}_{t} = W^{j}_{t_{n+1}} - W^{j}_{t_{n}},\hspace{1pt}\forall\hspace{.5pt} t \in [t_{n},t_{n+1})$, hence $\mu_{t}$ is piecewise constant. For convenience, we introduce the actual and the approximated log-processes, $x=\log S$ and $X=\log\hspace{1.5pt}\oversymb{\hspace{-1.5pt}S}$.

Conditioning on the trajectories of $\big\{W^{j}\hspace{-0.5pt}, \hspace{0.5pt} j=2,3,4\big\}$, i.e., on the complete knowledge of the paths of the variance and interest rates, $\hspace{1.5pt}\oversymb{\hspace{-1.5pt}S}$ evolves like a geometric Brownian motion with time-dependent drift and diffusion coefficients. It follows from It\^o's formula that
\begin{align*}
\hspace{1.5pt}\oversymb{\hspace{-1.5pt}S}_{T} = S_{0}\exp\bigg\{\int_{0}^{T}{\Big(\oversymb{r}^{d}_{u}-\oversymb{r}^{f}_{u}-\frac{1}{2}\hspace{.5pt}\oversymb{V}_{\hspace{-2.5pt}u}\Big) du} + \sum_{j=2}^{4}{a_{1\hspace{-0.5pt}j}\int_{0}^{T}{\hspace{-.2em}\sqrt{\oversymb{V}_{\hspace{-2.5pt}u}}\,\frac{\delta W^{j}_{u}}{\delta t}\,du}} + a_{11}\int_{0}^{T}{\hspace{-.2em}\sqrt{\oversymb{V}_{\hspace{-2.5pt}u}}\,dW^{1}_{u}}\bigg\}\hspace{.5pt}.
\end{align*}
However, we know the conditional probability law of the stochastic integral on the right-hand side to be that of a normal random variable, namely
\begin{equation*}
\int_{0}^{T}{\hspace{-.2em}\sqrt{\oversymb{V}_{\hspace{-2.5pt}u}} \, dW^{1}_{u}} \;\,\eqlaw\;\, \sqrt{\int_{0}^{T}{\oversymb{V}_{\hspace{-2.5pt}u}\,du}}\hspace{1pt}\cdot\hspace{1pt}Z\hspace{.5pt},
\end{equation*}
where $Z \sim \mathcal{N}\left(0,1\right)$, so we can think of $\hspace{1.5pt}\oversymb{\hspace{-1.5pt}S}_{T}$ as a function of $Z$. Therefore, we can express it as
\begin{equation}\label{eq2.9}
\hspace{1.5pt}\oversymb{\hspace{-1.5pt}S}_{T} \,\eqlaw\, S_{0}\exp\left\{\Big(r - q - \frac{1}{2}\sigma^{2}\Big)T + \sigma\sqrt{T}\hspace{1pt}Z\right\},
\end{equation}
where
\begin{equation*}
r \,=\, \frac{1}{T}\int_{0}^{T}{\oversymb{r}^{d}_{u}\hspace{1pt}du} \,=\, \frac{1}{N}\sum_{i=0}^{N-1}{\oversymb{r}^{d}_{t_{i}}}\hspace{1pt},\hspace{1em}
\sigma^{2} \,=\, \frac{a_{11}^{2}}{T}\int_{0}^{T}{\oversymb{V}_{\hspace{-2.5pt}u}\hspace{1pt}du} \,=\, \frac{a_{11}^{2}}{N}\sum_{i=0}^{N-1}{\oversymb{V}_{\hspace{-2.5pt}t_{i}}}\hspace{1pt},
\end{equation*}
and
\begin{align*}
q &\,=\, \frac{1}{T}\int_{0}^{T}{\oversymb{r}^{f}_{u}\hspace{1pt}du} + \frac{1 - a_{11}^2}{2T}\int_{0}^{T}{\oversymb{V}_{\hspace{-2.5pt}u}\hspace{1pt}du} - \frac{1}{T}\sum_{j=2}^{4}a_{1\hspace{-0.5pt}j}\int_{0}^{T}{\hspace{-.2em}\sqrt{\oversymb{V}_{\hspace{-2.5pt}u}}\,\frac{\delta W^{j}_{u}}{\delta t}\,du} \\[2pt]
&\,=\, \frac{1}{N}\sum_{i=0}^{N-1}{\oversymb{r}^{f}_{t_{i}}} + \frac{1 - a_{11}^2}{2N}\sum_{i=0}^{N-1}{\oversymb{V}_{\hspace{-2.5pt}t_{i}}} - \frac{1}{T}\sum_{j=2}^{4}a_{1\hspace{-0.5pt}j}\sum_{i=0}^{N-1}{\sqrt{\oversymb{V}_{\hspace{-2.5pt}t_{i}}}\,\delta W^{j}_{t_{i}}}\hspace{.5pt}.
\end{align*}
Conditional on the trajectories of $\big\{W^{j}\hspace{-0.5pt}, \hspace{0.5pt} j=2,3,4\big\}$, \eqref{eq2.9} has the same law as a terminal asset price that evolves as a geometric Brownian motion with interest rate $r$, continuous dividend yield $q$ and volatility $\sigma$, all constant. Then the arbitrage-free price at time $t=0$ of a European-style option with payoff $f(S_{T})$ is the discounted expectation under the risk-neutral measure, which can be approximated using the mixed Monte Carlo/PDE method by
\begin{equation}\label{eq2.10}
\hspace{1pt}\oversymb{\hspace{-1pt}U\hspace{-.5pt}}\hspace{.5pt} = \E\left[e^{-\int_{0}^{T}{\oversymb{r}^{d}_{u}\hspace{1pt}du}}\hspace{.5pt}f(\hspace{1.5pt}\oversymb{\hspace{-1.5pt}S}_{T})\right] = \E\left[\E\left[e^{-rT}f(\hspace{1.5pt}\oversymb{\hspace{-1.5pt}S}_{T})\big|\,\mathcal{G}_{T}^{f,d,v}\right]\right],
\end{equation}
where $\big\{\mathcal{G}_{t}^{f,d,v},\hspace{1.5pt} 0\hspace{-.5pt}\leq\hspace{-.5pt}t\hspace{-.5pt}\leq\hspace{-.5pt}T\big\}$ is the natural filtration generated by the independent Brownian motions $\big\{W^{2},W^{3},W^{4}\big\}$, i.e., generated by the processes $v$, $r^{d}$, $r^{f}$ as observed until time $T$. The second equality in \eqref{eq2.10} comes from the ``tower property'' of conditional expectations. Unless otherwise stated, all expectations are under $\mathbb{Q}$. Let the approximate conditional option price be the inner expectation in \eqref{eq2.10}, which is analytically tractable for European contracts,
\begin{equation*}
\E\left[e^{-rT}f(\hspace{1.5pt}\oversymb{\hspace{-1.5pt}S}_{T}) \big| \,\mathcal{G}_{T}^{f,d,v}\right] = e^{-rT}\int_{-\infty}^{\infty}{f\big(\hspace{1.5pt}\oversymb{\hspace{-1.5pt}S}_{T}(z)\big)\hspace{.5pt}\phi(z)\hspace{1pt}dz}\hspace{.5pt},
\end{equation*}
where $\phi$ and $\Phi$ are the standard normal PDF and CDF, respectively. The conditional prices of some popular financial instruments are given below,
\begin{align}\label{eq2.11}
	\begin{dcases}
	\text{European options:} &\psi S_{0}\hspace{.5pt}e^{-qT}\Phi(\psi d_{1}) - \psi Ke^{-rT}\Phi(\psi d_{2}) \\[2pt]
	\text{Cash-or-nothing options:} &e^{-rT}\Phi(\psi d_{2}) \\[2pt]
	\text{Asset-or-nothing options:} &S_{0}\hspace{.5pt}e^{-qT}\Phi(\psi d_{1})\hspace{.5pt},
	\end{dcases}
\end{align}
where $\psi = 1$ for a call and $\psi = -1$ for a put, whereas
\begin{equation}\label{eq2.12}
d_{1,2} = \frac{\log(S_{0}/K) + \left(r - q \pm \sigma^{2}/2\right)T}{\sigma \sqrt{T}}\hspace{1pt}.
\end{equation}
The approximate option price, i.e., the outer expectation in \eqref{eq2.10}, is estimated by a Monte Carlo average over a sufficiently large number of discrete trajectories of $\big\{W^{2},W^{3},W^{4}\big\}$.

There are many other derivatives that admit a closed-form solution for the conditional price, like the power option, the chooser option or the forward-start option. However, for most path-dependent derivatives, we need to use a different approach in order to compute the conditional price, in which case we will rely on finite difference methods (see Section \ref{sec:numerics}).

We conclude this section with a discussion on our choice of conditioning. For European option pricing, an analytical formula for the inner expectation is available only when conditioning on all three factors, which results in a dimension reduction of the problem by one. The high-dimensionality of the Heston-CIR model makes this the natural choice. For path-dependent option pricing, due to the quanto correction term in the drift of the foreign rate, $v$ and $r^{f}$ are coupled and hence we cannot condition on $r^{f}$ alone. Moreover, the variance of Monte Carlo estimators due to the short rates is typically much lower than the variance due to the instantaneous squared volatility. Hence, we could alternatively condition on $r^{d}$ and solve a three-dimensional PDE for the inner expectation. On one hand, we could reach the same level of accuracy with fewer Monte Carlo sample paths than when simulating $v$ and $r^{f}$ as well. On the other hand, the computational effort grows linearly with the dimension for Monte Carlo methods and exponentially for finite difference methods. Hence, we believe that conditioning on all three factors is more efficient. However, if the exchange rate and the foreign interest rate dynamics are independent, the quanto correction term vanishes. In this case, conditioning on the two short rates would be an interesting alternative.

\section{Convergence analysis}\label{sec:convergence}

Even though weak convergence is very important in financial mathematics when estimating expectations of payoffs, strong convergence may be required for complex path-dependent derivatives and plays a key role in multilevel Monte Carlo methods \citep{Giles:2008}. In this section, we prove the strong convergence of the approximation scheme defined in \eqref{eq2.8}. Hence, we first examine exponential integrability properties of the square root process and its discretization, and then the finiteness of moments of order higher than one of the exchange rate process and its approximation.

Let $y$ be the square root process defined in \eqref{eq2.3} and let $\hspace{-.5pt}\oversymb{\hspace{.5pt}Y}$ be the piecewise constant FTE interpolant from \eqref{eq2.7}. The exponential integrability of functionals of the two processes was already discussed in Propositions 3.2 and 3.6 in \citet{Cozma:2015.2}. However, we need to adjust their second result for our approximation scheme \eqref{eq2.8} in order to establish the convergence.

\begin{lemma}\label{Lem3.1}
Let $\lambda, \mu \in \mathbb{R}$ be given, $\Delta\equiv\lambda+\frac{1}{2}\hspace{1pt}\mu^{2}$, and define the stochastic process
\begin{equation}\label{eq3.1}
\hspace{1pt}\oversymb{\hspace{-1pt}\Theta\hspace{-1pt}}\hspace{1pt}_{t} \equiv \exp\bigg\{\lambda\int_{0}^{t}{\hspace{-.5pt}\oversymb{\hspace{.5pt}Y}_{\hspace{-2.5pt}u}\hspace{1pt}du} + \mu\int_{0}^{t}{\sqrt{\hspace{-.5pt}\oversymb{\hspace{.5pt}Y}_{\hspace{-2.5pt}u}}\,\frac{\delta W^{y}_{u}}{\delta t}\hspace{1.5pt}du}\bigg\},\hspace{3pt} \forall\hspace{.5pt}t\in[0,T].
\end{equation}
If $\Delta\leq0$ and $T\geq0$, or otherwise, if $\Delta>0$ and $T\leq T^{*}$, then there exists $\eta>0$ such that
\begin{equation}\label{eq3.2}
\sup_{\delta t \in (0,\eta)}\hspace{1.5pt}\sup_{t \in [0,T]}\E\big[\hspace{1.5pt}\oversymb{\hspace{-1pt}\Theta\hspace{-1pt}}\hspace{1pt}_{t}\big] < \infty,
\end{equation}
where $T^{*}$ is given below:
\begin{enumerate}
\item{If $k_{y}\leq\xi_{y}(\mu+\sqrt{0.5\hspace{.5pt}\Delta})$,
\begin{equation}\label{eq3.3.1}
T^{*} = \frac{1}{\xi_{y}(\mu+\sqrt{2\Delta})-k_{y}}\hspace{1pt}.
\end{equation}}
\item{If $k_{y}>\xi_{y}(\mu+\sqrt{0.5\hspace{.5pt}\Delta})$,
\begin{equation}\label{eq3.3.2}
T^{*} = \frac{2(k_{y}-\mu\xi_{y})}{\xi_{y}^{2}\Delta}\hspace{1pt}.
\end{equation}}
\end{enumerate}
\end{lemma}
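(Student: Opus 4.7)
The plan is to perform a discrete Girsanov change of measure that reduces the exponential functional to a bound already established for the FTE scheme in Proposition 3.6 of \citet{Cozma:2015.2}. First, I would observe that because $\hspace{-.5pt}\oversymb{\hspace{.5pt}Y}$ is piecewise constant and $\delta W^y_u$ is by definition the full Brownian increment over the interval containing $u$, at a grid point $t_n$ the two integrals in \eqref{eq3.1} collapse to discrete sums, giving (writing $Y_i := \tilde y_{t_i}^+$)
\begin{equation*}
\hspace{1pt}\oversymb{\hspace{-1pt}\Theta\hspace{-1pt}}\hspace{1pt}_{t_n} = \exp\bigg\{\sum_{i=0}^{n-1}\big[\lambda Y_i\delta t + \mu\sqrt{Y_i}\,\delta W^y_{t_i}\big]\bigg\} = M_n\cdot\exp\bigg\{\Delta\sum_{i=0}^{n-1}Y_i\delta t\bigg\},
\end{equation*}
where $M_n := \prod_{i=0}^{n-1}\exp\{\mu\sqrt{Y_i}\,\delta W^y_{t_i}-\tfrac12\mu^2 Y_i\delta t\}$. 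Because each factor is the exponential of an $N(0,\delta t)$ increment independent of its $\mathcal F_{t_i}$-measurable coefficient $\mu\sqrt{Y_i}$, iterated conditioning shows that $(M_n)$ is a positive $\mathbb Q$-martingale with $\E[M_N]=1$.

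Next, I would define $\tilde{\mathbb Q}$ on $\mathcal F_{t_N}$ by $d\tilde{\mathbb Q}/d\mathbb Q = M_N$. Completing the square in the Gaussian density reveals that, under $\tilde{\mathbb Q}$ and conditional on $\mathcal F_{t_n}$, $\delta\tilde W^y_{t_n} := \delta W^y_{t_n} - \mu\sqrt{Y_n}\,\delta t$ is $N(0,\delta t)$; substituting into \eqref{eq2.4} yields
\begin{equation*}
\tilde y_{t_{n+1}} = \tilde y_{t_n} + \big[k_y\theta_y - (k_y-\mu\xi_y)Y_n\big]\delta t + \xi_y\sqrt{Y_n}\,\delta\tilde W^y_{t_n},
\end{equation*}
i.e.\ under $\tilde{\mathbb Q}$ the process $\tilde y$ is itself an FTE discretization of a CIR SDE with modified mean-reversion rate $\tilde k_y := k_y - \mu\xi_y$ and unchanged $\xi_y$. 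Hence $\E[\hspace{1pt}\oversymb{\hspace{-1pt}\Theta\hspace{-1pt}}\hspace{1pt}_{t_n}] = \tilde\E[\exp\{\Delta\int_0^{t_n}\oversymb{Y}_u\,du\}]$, and Proposition~3.6 of \citet{Cozma:2015.2} applied to the modified FTE scheme gives a bound uniform in $\delta t$ for every $T$ when $\Delta\le 0$ and for $T\le T^\star$ when $\Delta>0$, the threshold splitting according as $\tilde k_y \lessgtr \xi_y\sqrt{\Delta/2}$. Substituting $\tilde k_y = k_y - \mu\xi_y$ turns this split into $k_y \lessgtr \xi_y(\mu+\sqrt{\Delta/2})$ and recovers precisely \eqref{eq3.3.1}--\eqref{eq3.3.2}; an easy check shows the two formulas coincide at the boundary.

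To upgrade from grid points to arbitrary $t\in[0,T]$, for $t\in[t_n,t_{n+1})$ I would condition on $\mathcal F_{t_n}$ and integrate out the $N(0,\delta t)$ variable $\delta W^y_{t_n}$, obtaining
\begin{equation*}
\E\big[\hspace{1pt}\oversymb{\hspace{-1pt}\Theta\hspace{-1pt}}\hspace{1pt}_t\,\big|\,\mathcal F_{t_n}\big] = \hspace{1pt}\oversymb{\hspace{-1pt}\Theta\hspace{-1pt}}\hspace{1pt}_{t_n}\cdot\exp\Big\{\big[\lambda(t-t_n)+\tfrac{\mu^2(t-t_n)^2}{2\delta t}\big]Y_n\Big\} \leq \hspace{1pt}\oversymb{\hspace{-1pt}\Theta\hspace{-1pt}}\hspace{1pt}_{t_n}\cdot e^{(\lambda^+ + \mu^2/2)\,Y_n\delta t}.
\end{equation*}
This $O(\delta t)$ bonus can be absorbed either by H\"older's inequality (using that $Y_n$ has uniformly bounded polynomial moments by Proposition~3.2 of \citet{Cozma:2015.2}, so $\E[e^{c\delta t\,Y_n}]$ is close to $1$ for small $\delta t$) or by rerunning the Girsanov argument with $\lambda$ perturbed slightly; since $T^\star$ is a continuous function of $\Delta$, the uniform-in-$\delta t$ bound persists strictly below $T^\star$ after shrinking $\eta$ if necessary.

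The hard part will be verifying the exact form of the threshold: one must check that the $T^\star$ prescribed by Proposition~3.6 in the $(\tilde k_y,\xi_y,\Delta)$ parameters matches \eqref{eq3.3.1}--\eqref{eq3.3.2} in both sub-cases (and agrees on the boundary), which amounts to solving the Riccati recursion that governs the discrete exponential-moment problem. Once that algebraic reconciliation is done, Steps 1, 2 and 4 are routine: a Gaussian MGF for independent increments, a discrete analogue of Girsanov, and an $O(\delta t)$ non-grid bonus that does not erode the threshold.
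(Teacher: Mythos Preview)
Your approach is correct and arrives at the same thresholds, but it is more elaborate than necessary. The paper's proof exploits two simplifications that avoid both the Girsanov step and the perturbation argument. First, at grid points $t_n$, because $\hspace{-.5pt}\oversymb{\hspace{.5pt}Y}$ is piecewise constant, one has
\[
\int_0^{t_n}\sqrt{\hspace{-.5pt}\oversymb{\hspace{.5pt}Y}_{\hspace{-2.5pt}u}}\,\frac{\delta W^y_u}{\delta t}\,du \;=\; \int_0^{t_n}\sqrt{\hspace{-.5pt}\oversymb{\hspace{.5pt}Y}_{\hspace{-2.5pt}u}}\,dW^y_u,
\]
so the functional in \eqref{eq3.1} coincides at grid points with \emph{exactly} the functional treated in Proposition~3.6 of \citet{Cozma:2015.2}; one simply follows that argument verbatim, and no change of measure is needed. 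Second, for $t\in[t_n,t_{n+1})$, the paper computes your conditional expectation formula but then uses the sharp identity
\[
\sup_{x\in[0,1]}\Big(\lambda x + \tfrac12\mu^2 x^2\Big) = \Delta\cdot\Ind_{\Delta>0},
\]
which yields $\E\big[\hspace{1pt}\oversymb{\hspace{-1pt}\Theta\hspace{-1pt}}\hspace{1pt}_t\mid\mathcal{G}^y_{t_n}\big] \le \hspace{1pt}\oversymb{\hspace{-1pt}\Theta\hspace{-1pt}}\hspace{1pt}_{t_n}$ when $\Delta\le0$ and $\E\big[\hspace{1pt}\oversymb{\hspace{-1pt}\Theta\hspace{-1pt}}\hspace{1pt}_t\mid\mathcal{G}^y_{t_n}\big] \le \E\big[\hspace{1pt}\oversymb{\hspace{-1pt}\Theta\hspace{-1pt}}\hspace{1pt}_{t_{n+1}}\mid\mathcal{G}^y_{t_n}\big]$ when $\Delta>0$. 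Hence the supremum over $t\in[0,T]$ is attained at a grid point with no $O(\delta t)$ slack, no H\"older step, and no need to shrink $\eta$ or perturb $\lambda$. Your discrete Girsanov is a valid alternative that buys a transparent identification of the threshold (you only invoke the cited proposition in its $\mu=0$ form), but you should check that the result applies to the modified scheme even when $\tilde k_y = k_y-\mu\xi_y\le0$, which occurs in your case~1.
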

\begin{proof}
See Appendix \ref{sec:aux3.1}.
\end{proof}

\subsection{Moment bounds}\label{subsec:moments}

For many stochastic volatility models, moments of order higher than one can explode in finite time \citep{Andersen:2007}. This can cause significant problems in practice, for instance when computing the arbitrage-free price of an option whose payoff function has super-linear growth. The same troublesome behaviour can be observed for the Euler-Maruyama approximation of some SDEs with super-linearly growing drift or diffusion coefficients, where moments diverge in finite time \citep{Jentzen:2015}. Next, we prove the boundedness of moments of the exchange rate process and its approximation.

At this point, we assume that $\rho_{v\hspace{.2pt}d}\neq\pm1$ and that $a_{13}$ is non-zero, i.e., $\rho_{sd}\neq\rho_{sv}\rho_{v\hspace{.2pt}d}$.

\begin{proposition}\label{Prop3.2}
For $\alpha\geq1$, define the two quantities
\begin{align}
q_{0}(\alpha) &\equiv \frac{1}{2\alpha^{2}\xi^{2}a_{13}^{2}}\bigg\{\sqrt{\big[2\alpha\rho_{sv}\xi k + \alpha^{2}\xi^{2}(a_{11}^{2}+a_{12}^{2})-\alpha\xi^{2}\big]^{2} + 4\alpha^{2}a_{13}^{2}\xi^{2}k^{2}} \nonumber \\[0pt]
												 &\hspace{11.5em} - \big[2\alpha\rho_{sv}\xi k + \alpha^{2}\xi^{2}(a_{11}^{2}+a_{12}^{2})-\alpha\xi^{2}\big]\bigg\}, \label{eq3.3} \\[2pt]
q_{1}(\alpha) &\equiv q_{0}(\alpha)\Ind_{\rho_{sv}\leq\hspace{1pt}0} \hspace{1pt}+\hspace{1pt} \min\left\{q_{0}(\alpha)\hspace{.5pt},\,\frac{k}{\alpha\rho_{sv}\xi}\right\}\Ind_{\rho_{sv}>\hspace{1pt}0}. \label{eq3.4}
\end{align}
If the following conditions on the model parameters are satisfied,
\begin{equation}\label{eq3.5}
k > \alpha\rho_{sv}\xi + \sqrt{\alpha(\alpha-1)}\hspace{1.5pt}\xi \quad\text{and}\quad \frac{k_{d}^{2}}{2\xi_{d}^{2}} > \frac{\alpha\hspace{1.5pt}q_{1}(\alpha)}{q_{1}(\alpha)-1}\,,
\end{equation}
then there exists $\alpha_{1}>\alpha$ such that for all $\omega \in [1,\alpha_{1})$,
\begin{equation}\label{eq3.6}
\sup_{t \in [0,T]} \E\big[S_{t}^{\omega}\big] < \infty.
\end{equation}
\end{proposition}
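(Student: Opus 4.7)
The plan is to bound $\E[S_t^\omega]$ by reducing it, through a tower-property integration of the ``extra'' Brownian drivers and H\"older's inequality, to a product of two CIR-type exponential moments whose finiteness is controlled by the two conditions in \eqref{eq3.5}.

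By It\^o's formula,
\begin{equation*}
S_t^\omega = S_0^\omega \exp\Big\{\omega\!\int_0^t\!(r_u^d - r_u^f)\,du - \tfrac{\omega}{2}\!\int_0^t\! v_u\,du + \omega\!\int_0^t\!\sqrt{v_u}\,dW_u^s\Big\}.
\end{equation*}
Since $r^f\geq 0$, the factor $\exp(-\omega\!\int r^f du)$ is bounded by $1$ and discarded. Decomposing $W^s = \sum_{j=1}^4 a_{1j}W^j$ via \eqref{eq2.2} and noting that $W^1, W^2$ are independent of $\sigma(W^3, W^4)$ while $v, r^d$ are $\sigma(W^3, W^4)$-measurable, the tower property integrates the $W^1$- and $W^2$-contributions out and produces the term $\tfrac{\omega^2(a_{11}^2+a_{12}^2)}{2}\!\int v_u du$ in the exponent. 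The remaining $W^4 = W^v$ integral is linearised via the CIR identity $\xi\!\int\!\sqrt{v}\,dW^v = v_t - v_0 - k\theta t + k\!\int v_u du$, so the surviving random terms in the exponent are $\int r^d du$, $\int v du$, $v_t$, and $\int\!\sqrt{v}\,dW^3$.

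Next, H\"older's inequality with exponents $q = q_1(\alpha)$ and $p = q/(q-1)$ separates the $r^d$-contribution from the rest. The first factor is the Laplace transform of the integrated CIR process $\int r^d du$ at level $p\omega$, which is finite uniformly on $[0,T]$ whenever $p\omega\leq k_d^2/(2\xi_d^2)$; at $\omega=\alpha$ this matches the second inequality in \eqref{eq3.5} strictly. In the second factor I condition on $\mathcal{F}^v = \sigma(W^v)$ and use independence of $W^3$ and $W^v$ to integrate $\int\!\sqrt{v}\,dW^3$ out as a conditionally Gaussian term, reducing the factor to
\begin{equation*}
\E\Big[\exp\Big(\lambda^*\!\int_0^t v_u du + \mu^* v_t\Big)\Big],
\end{equation*}
with $\lambda^* = \tfrac{q^2\omega^2 a_{13}^2}{2} + \tfrac{q\omega[\omega(a_{11}^2+a_{12}^2) - 1]}{2} + \tfrac{q\omega\rho_{sv}k}{\xi}$ and $\mu^* = \tfrac{q\omega\rho_{sv}}{\xi}$. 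Finiteness on $[0,T]$ for all $T\geq 0$ is equivalent to global boundedness of the Riccati solution $\psi' = \lambda^* - k\psi + \tfrac{\xi^2}{2}\psi^2$, $\psi(0) = \mu^*$, which amounts to $2\lambda^*\xi^2\leq k^2$ together with $\mu^*\xi^2\leq k - \sqrt{k^2 - 2\lambda^*\xi^2}$. A direct algebraic check identifies $q_0(\alpha)$ in \eqref{eq3.3} as the positive root of the quadratic obtained from $2\lambda^*\xi^2 = k^2$ at $\omega=\alpha$; the truncation to $k/(\alpha\rho_{sv}\xi)$ in $q_1(\alpha)$ for $\rho_{sv}>0$ enforces the sign condition $\mu^*\xi^2<k$, which is automatic when $\rho_{sv}\leq 0$.

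Two auxiliary points finish the argument. H\"older requires $q_1(\alpha) > 1$; evaluating the quadratic for $q_0$ at $q=1$ and using the row-sum identity $a_{11}^2 + a_{12}^2 + a_{13}^2 + \rho_{sv}^2 = 1$ from $\Sigma = AA^T$ reduces this to $(k - \alpha\rho_{sv}\xi)^2 > \alpha(\alpha-1)\xi^2$, i.e., the first inequality in \eqref{eq3.5}. To extend from $\omega=\alpha$ to a right-neighbourhood $[1,\alpha_1)$, I replace $q_1(\alpha)$ by a slightly smaller $\tilde q$, which makes both Laplace transform inequalities strict at $\omega=\alpha$; continuity in $\omega$ of $\lambda^*$, $\mu^*$, and $p\omega$ then preserves strictness on a small right-neighbourhood of $\alpha$, giving the claimed $\alpha_1>\alpha$. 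The principal obstacle I anticipate is the algebraic identification of $q_0(\alpha)$ as the critical Riccati exponent and a clean case analysis for both signs of $\rho_{sv}$ that keeps the explicit dependence on the model parameters.
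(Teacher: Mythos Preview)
Your approach is essentially the paper's: bound $S_t^\omega$ by dropping $r^f$, integrate out $W^1,W^2$ via the tower property, split off $r^d$ by H\"older, integrate out $W^3$, and reduce to a CIR exponential moment. The only presentational difference is that the paper leaves the $W^4$-integral in place and cites an external exponential-moment result (Proposition~3.2 in \citet{Cozma:2015.2}) for $\E[\exp\{\lambda\int v\,du+\mu\int\sqrt{v}\,dW^v\}]$, whereas you substitute the CIR identity first and analyse the Riccati ODE directly; these are the same computation in two parameterisations, and the quadratic whose positive root is $q_0(\alpha)$ is identical.

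One correction that does not affect your argument: the non-explosion condition for the Riccati equation $\psi'=\lambda^*-k\psi+\tfrac{\xi^2}{2}\psi^2$, $\psi(0)=\mu^*$, is $\mu^*\xi^2\le k+\sqrt{k^2-2\lambda^*\xi^2}$ (the larger fixed point), not $k-\sqrt{\cdots}$. Your proof still goes through because the truncation $q\le k/(\alpha\rho_{sv}\xi)$ in $q_1(\alpha)$ enforces the stronger sufficient condition $\mu^*\xi^2<k$, which is exactly the paper's condition \eqref{eq3.17}; for $\rho_{sv}\le0$ the initial datum is nonpositive and the condition is automatic. Also note that the paper chooses $q$ strictly below $q_1(\alpha)$ from the outset (via the strict inequalities in \eqref{eq3.5}), rather than taking $q=q_1(\alpha)$ and then shrinking, but this is the same idea as your $\tilde q$.
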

\begin{proof}
See Appendix \ref{sec:aux3.2}.
\end{proof}

Next, assume that $a_{13}$ and $a_{14}$ are not simultaneously zero, i.e., $\rho_{sv}^{2}+\rho_{sd}^{2}\neq0$.

\begin{proposition}\label{Prop3.4} For $\alpha\geq1$, define
\begin{align}\label{eq3.22}
q_{2}(\alpha) &\equiv \bigg\{\sqrt{\big[\alpha\rho_{sv}\xi + T\alpha^{2}\xi^{2}(a_{11}^{2}+a_{12}^{2})/4-T\alpha\xi^{2}/4\big]^{2} + T\alpha^{2}\xi^{2}(a_{13}^{2}+a_{14}^{2})k} \nonumber\\[2pt]
&- \big[\alpha\rho_{sv}\xi + T\alpha^{2}\xi^{2}(a_{11}^{2}+a_{12}^{2})/4-T\alpha\xi^{2}/4\big]\bigg\}\hspace{1pt}\frac{2}{T\alpha^{2}\xi^{2}(a_{13}^{2}+a_{14}^{2})}\hspace{1pt}.
\end{align}
If the following conditions on the model parameters are satisfied,
\begin{equation}\label{eq3.23}
k > \alpha\rho_{sv}\xi + \frac{1}{4}\hspace{1pt}\alpha(\alpha-1)T\xi^{2} \hspace{-2pt}\quad\text{and}\quad \frac{2k_{d}}{T\xi_{d}^{2}} > \frac{\alpha\hspace{1.5pt}q_{2}(\alpha)}{q_{2}(\alpha)-1}\,,
\end{equation}
then there exists $\alpha_{2}>\alpha$ such that for all $\omega\in[1,\alpha_{2})$, we can find $\eta_{\omega}>0$ so that
\begin{equation}\label{eq3.24}
\sup_{\delta t \in (0,\eta_{\omega})}\hspace{1pt}\sup_{t \in [0,T]}\E\big[\hspace{1.5pt}\oversymb{\hspace{-1.5pt}S}_{\hspace{-.5pt}t}^{\omega}\big] < \infty.
\end{equation}
\end{proposition}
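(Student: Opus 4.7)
The plan is to adapt the proof of Proposition~\ref{Prop3.2} to the discrete setting of \eqref{eq2.8}, using Lemma~\ref{Lem3.1} to control the surviving exponential moments in $\oversymb V$ and $\oversymb{r}^d$. First I apply It\^o's formula to \eqref{eq2.8} and condition on $\mathcal{G}^{f,d,v}_t$ to integrate out the $W^1$-stochastic integral via the exponential martingale identity, using that $W^1$ is independent of $\mathcal{G}^{f,d,v}_t$. Since $\oversymb{r}^f\geq0$, the factor $\exp\{-\omega\!\int_0^t\!\oversymb{r}^f_u du\}\leq 1$ can be discarded at once, yielding
\begin{equation*}
\E\big[\hspace{1.5pt}\oversymb{\hspace{-1.5pt}S}_t^{\omega}\big] \leq S_0^{\omega}\,\E\bigg[\exp\bigg\{\omega\!\int_0^t\!\oversymb{r}^d_u du + \frac{\omega^2 a_{11}^2 - \omega}{2}\!\int_0^t\!\oversymb V_u du + \omega\sum_{j=2}^{4} a_{1j}\!\sum_n\!\sqrt{\tilde{v}_{t_n}^{+}}\,\delta W^j_{t_n}\bigg\}\bigg],
\end{equation*}
where I have collapsed each piecewise-constant correction integral to its discrete Riemann sum $\int_0^t\sqrt{\oversymb V_u}\,\delta W^j_u/\delta t\,du = \sum_n\sqrt{\tilde{v}_{t_n}^{+}}\,\delta W^j_{t_n}$.

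Next, since $W^2$ is independent of all the remaining drivers, Fubini combined with the Gaussian exponential moment formula replaces the $j=2$ sum with an additional factor $\exp\{\tfrac{1}{2}\omega^2 a_{12}^2\!\int_0^t\!\oversymb V_u du\}$. I then apply H\"older's inequality with exponent $q_2(\omega)$ and its conjugate to split the surviving expectation into a factor in $\oversymb{r}^d$ alone and a factor in $\oversymb V, W^3, W^4$. The first factor is bounded by Lemma~\ref{Lem3.1} applied to $r^d$ with $\lambda=\omega q_2(\omega)/(q_2(\omega)-1)$ and $\mu=0$; under the $T\leq T^*$ regime this is finite precisely when the second condition in \eqref{eq3.23} holds. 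For the other factor I again integrate out $W^3$ (independent of $\oversymb V$ and $W^4$), which converts the $j=3$ sum into an additional $\exp\{\tfrac{1}{2}q_2(\omega)^2\omega^2 a_{13}^2\!\int_0^t\!\oversymb V_u du\}$; what remains is an exponential in $\int_0^t\oversymb V_u du$ and $\int_0^t\sqrt{\oversymb V_u}\,\delta W^4_u/\delta t\,du$, bounded by Lemma~\ref{Lem3.1} applied to $v$ with effective parameters depending on $\omega$ and $q_2(\omega)$.

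The quantity $q_2(\omega)$ is precisely the positive root of the quadratic $F_\omega(q)=k$, where
\begin{equation*}
F_\omega(q) = \frac{T\omega^2\xi^2(a_{13}^2+a_{14}^2)}{4}q^2 + \bigg[\omega\rho_{sv}\xi + \frac{T\omega\xi^2(\omega(a_{11}^2+a_{12}^2)-1)}{4}\bigg]q,
\end{equation*}
so that the constraint ``$T\leq T^*$'' for the $v$-factor above becomes tight; the first condition in \eqref{eq3.23} is equivalent to $F_\alpha(1) < k$, which together with $F_\omega$ being increasing on $[1,\infty)$ ensures $q_2(\alpha)>1$ and hence that the H\"older exponent is admissible. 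Since both conditions in \eqref{eq3.23} are strict at $\omega=\alpha$ and $q_2$ depends continuously on $\omega$, the required bounds remain finite on an open interval $[1,\alpha_2)$ with $\alpha_2>\alpha$, giving \eqref{eq3.24} for $\delta t$ below the threshold $\eta_\omega$ from Lemma~\ref{Lem3.1}. I anticipate the main obstacle to be the careful bookkeeping of the effective $(\lambda,\mu)$ for the $v$-factor once all Gaussian integrations have been carried out, and the verification that the resulting $\Delta=\lambda+\tfrac{1}{2}\mu^2$ together with $T^*$ from \eqref{eq3.3.2} coincide with the quadratic defining $q_2$: it is precisely this identification that produces the somewhat opaque-looking formula \eqref{eq3.22}.
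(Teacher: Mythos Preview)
Your proposal is correct and follows essentially the same route as the paper: discard the nonnegative $\oversymb{r}^f$-factor, integrate out $W^1$ and $W^2$ by conditioning and the Gaussian MGF, split off $\oversymb{r}^d$ via H\"older, integrate out $W^3$, and invoke Lemma~\ref{Lem3.1} on the two surviving factors, then extend by continuity and Jensen.

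Two small points of precision. First, the identity $\int_0^t\sqrt{\oversymb V_u}\,\delta W^j_u/\delta t\,du = \sum_n\sqrt{\tilde v_{t_n}^+}\,\delta W^j_{t_n}$ is only exact at grid times; for $t\in[t_n,t_{n+1})$ the last increment carries the factor $(t-t_n)/\delta t$, and the paper handles this by bounding $(t-t_n)^2/(\delta t)^2\leq (t-t_n)/\delta t$ to recover the inequality $\leq \exp\{\tfrac{1}{2}\omega^2 a_{1j}^2\int_0^t\oversymb V_u\,du\}$ (see \eqref{eq3.28}). Second, the paper takes the H\"older exponent $q$ strictly below $q_2(\alpha)$, so that both Lemma~\ref{Lem3.1} conditions are strict and the continuity extension is immediate; your choice $q=q_2(\omega)$ sits exactly at $T=T^*$ for the $v$-factor, which is still admissible by Lemma~\ref{Lem3.1}, but note that your aside ``$F_\omega$ increasing on $[1,\infty)$'' is neither needed nor always true---$q_2(\alpha)>1$ follows directly from $F_\alpha(1)<k$ because the product of the roots of $F_\alpha(q)-k=0$ is negative.
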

\begin{proof}
See Appendix \ref{sec:aux3.4}.
\end{proof}

Since the most popular FX and equity contracts grow at most linearly in FX and asset prices, and their valuation requires the computation of the expected discounted payoff under the risk-neutral measure, it is useful to study finiteness of moments under discounting. Let $R$ be the discounted exchange rate process,
\begin{align}\label{eq3.36}
R_{t} = S_{0}\exp\bigg\{\hspace{-2pt}-\hspace{-1pt}\int_{0}^{t}{\Big(r^{f}_{u}+\frac{1}{2}\hspace{1pt}v_{u}\Big)du} + \int_{0}^{t}{\sqrt{v_{u}}\,dW_{u}^{s}}\bigg\},
\end{align}
and let $\hspace{1.5pt}\oversymb{\hspace{-1.5pt}R}$ be its continuous-time approximation,
\begin{align}\label{eq3.37}
\hspace{1.5pt}\oversymb{\hspace{-1.5pt}R}_{t} = S_{0}\exp\bigg\{\hspace{-2pt}-\hspace{-1pt}\int_{0}^{t}{\Big(\oversymb{r}^{f}_{u}+\frac{1}{2}\hspace{1pt}\oversymb{V}_{\hspace{-2.5pt}u}\Big)du} + a_{11}\int_{0}^{t}{\sqrt{\oversymb{V}_{\hspace{-2.5pt}u}}\,dW^{1}_{u}} + \sum_{j=2}^{4}{a_{1\hspace{-0.5pt}j}\int_{0}^{t}{\sqrt{\oversymb{V}_{\hspace{-2.5pt}u}}\,\frac{\delta W^{j}_{u}}{\delta t}\,du}}\bigg\}.
\end{align}

\begin{proposition}\label{Prop3.6}
Let $\alpha\geq1$. If $T<T^{*}$, there exists $\alpha_{1}>\alpha$ such that for all $\omega \in [1,\alpha_{1})$,
\begin{equation}\label{eq3.39.1}
\sup_{t \in [0,T]} \E\big[R_{t}^{\omega}\big] < \infty.
\end{equation}
If $\alpha>1$ and $T\geq T^{*}$, then
\begin{equation}\label{eq3.39.2}
\E\big[R_{T}^{\hspace{.5pt}\alpha}\big] = \infty.
\end{equation}
If $\alpha=1$, then $T^{*}=\infty$, whereas if $\alpha>1$, then $T^{*}$ is given below:
\begin{enumerate}
\item{If $k<\alpha\rho_{sv}\xi-\sqrt{\alpha(\alpha-1)}\hspace{1.5pt}\xi$,
\begin{equation}\label{eq3.38.1}
T^{*} = \frac{1}{\nu(\alpha)}\log\left(\frac{\alpha\rho_{sv}\xi-k+\nu(\alpha)}{\alpha\rho_{sv}\xi-k-\nu(\alpha)}\right),
\end{equation}
where
\begin{equation*}
\nu(\alpha)=\sqrt{(\alpha\rho_{sv}\xi-k)^{2}-\alpha(\alpha-1)\xi^{2}}\hspace{1pt}.
\end{equation*}}
\item{If $k=\alpha\rho_{sv}\xi-\sqrt{\alpha(\alpha-1)}\hspace{1.5pt}\xi$,
\begin{equation}\label{eq3.38.2}
T^{*} = \frac{2}{\alpha\rho_{sv}\xi-k}\hspace{1pt}.
\end{equation}}
\item{If $\alpha\rho_{sv}\xi-\sqrt{\alpha(\alpha-1)}\hspace{1.5pt}\xi<k<\alpha\rho_{sv}\xi+\sqrt{\alpha(\alpha-1)}\hspace{1.5pt}\xi$,
\begin{equation}\label{eq3.38.3}
T^{*} = \frac{2}{\hat{\nu}(\alpha)}\left[\frac{\pi}{2}-\arctan\left(\frac{\alpha\rho_{sv}\xi-k}{\hat{\nu}(\alpha)}\right)\right],
\end{equation}
where
\begin{equation*}
\hat{\nu}(\alpha)=\sqrt{\alpha(\alpha-1)\xi^{2}-(\alpha\rho_{sv}\xi-k)^{2}}\hspace{1pt}.
\end{equation*}}
\item{If $k\geq\alpha\rho_{sv}\xi+\sqrt{\alpha(\alpha-1)}\hspace{1.5pt}\xi$,
\begin{equation}\label{eq3.38.4}
T^{*} = \infty\hspace{.5pt}.
\end{equation}}
\end{enumerate}
\end{proposition}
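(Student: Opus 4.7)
The plan is to reduce the problem to the classical moment explosion analysis for the Heston model and apply the affine structure of the CIR process. First, because $r^{f}$ is pathwise non-negative and (in the paper's Ahlip setup) independent of both $v$ and $W^{s}$, I would factor
\begin{equation*}
\E\big[R_{t}^{\omega}\big] \,=\, S_{0}^{\omega}\,\E\!\Big[\exp\!\Big(\hspace{-1pt}-\omega\!\int_{0}^{t}\!r^{f}_{u}\,du\Big)\Big]\;\E\!\Big[\exp\!\Big(\hspace{-1pt}-\tfrac{\omega}{2}\!\int_{0}^{t}\!v_{u}\,du + \omega\!\int_{0}^{t}\!\sqrt{v_{u}}\,dW^{s}_{u}\Big)\Big].
\end{equation*}
For $\omega\geq 0$ the first factor lies in $(0,1]$ and is strictly positive, so finiteness and divergence of $\E[R_{t}^{\omega}]$ are equivalent to those of the $(v,W^{s})$-factor. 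I would then decompose $W^{s}=\rho_{sv}W^{v}+\sqrt{1-\rho_{sv}^{2}}\,W^{\perp}$ with $W^{\perp}\perp W^{v}$, condition on the $v$-filtration, and integrate out the Gaussian $W^{\perp}$-stochastic-integral to reduce the $(v,W^{s})$-factor to
\begin{equation*}
\E\!\Big[\exp\!\Big(\lambda\!\int_{0}^{t}\!v_{u}\,du+\mu\!\int_{0}^{t}\!\sqrt{v_{u}}\,dW^{v}_{u}\Big)\Big], \quad \lambda=\tfrac{1}{2}\big(\omega^{2}(1-\rho_{sv}^{2})-\omega\big),\;\; \mu=\omega\rho_{sv},
\end{equation*}
with $\Delta\equiv\lambda+\tfrac{1}{2}\mu^{2}=\tfrac{1}{2}\omega(\omega-1)$, which vanishes at $\omega=1$ (immediately giving $T^{*}=\infty$ by monotone convergence and the trivial submartingale bound) and is positive for $\omega>1$.

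For $\omega>1$ I would invoke the affine property of the CIR process. Using the identity $\xi\!\int_{0}^{t}\!\sqrt{v_{u}}\,dW^{v}_{u}=v_{t}-v_{0}-k\theta t+k\!\int_{0}^{t}\!v_{u}\,du$ rewrites the integrand as an exponential in $v_{t}$ and $\int_{0}^{t}\!v_{u}\,du$, and the affine transform formula yields the expectation in the form $\exp\big(\tilde A(t)+\tilde B(t)v_{0}\big)$ where, after absorbing the boundary shift $\mu/\xi$,
\begin{equation*}
\dot{\tilde B}(t)=\tfrac{1}{2}\xi^{2}\tilde B(t)^{2}+(\mu\xi-k)\tilde B(t)+\Delta,\qquad \tilde B(0)=0.
\end{equation*}
Finiteness at time $t$ is equivalent to non-explosion of $\tilde B$ up to $t$. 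The roots of the quadratic $\tfrac{1}{2}\xi^{2}B^{2}+(\mu\xi-k)B+\Delta$ (when real) have product $2\Delta/\xi^{2}>0$ and sum $2(k-\mu\xi)/\xi^{2}$, so a simple phase-plane analysis gives: in case (4), $k\geq\omega\rho_{sv}\xi+\sqrt{\omega(\omega-1)}\,\xi$ means both roots are positive, $\tilde B$ is trapped below the smaller one, and $T^{*}=\infty$; in the remaining three cases the quadratic stays positive along the trajectory, $\tilde B$ eventually escapes to $+\infty$, and separation of variables with the corresponding $\log$, reciprocal, and $\arctan$ antiderivatives produces the explicit $T^{*}$-formulas of cases (1), (2), and (3).

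Finally, for $T<T^{*}(\alpha)$ the continuity of $\alpha\mapsto T^{*}(\alpha)$ supplies some $\alpha_{1}>\alpha$ with $T<T^{*}(\omega)$ for all $\omega\in[1,\alpha_{1})$, and the supremum over $t\in[0,T]$ is controlled by monotonicity of $\tilde B$ on its interval of existence. For the divergence part ($T\geq T^{*}$, $\alpha>1$) I would localize $v$ by stopping times $\tau_{n}=\inf\{t:v_{t}\geq n\}$, for which the affine formula holds exactly on $[0,T\wedge\tau_{n}]$, and pass to the limit by Fatou's lemma, using $\tilde A(t)+\tilde B(t)v_{0}\to+\infty$ as $t\uparrow T^{*}$; the strictly positive factor $\E[\exp(-\alpha\!\int_{0}^{T}\!r^{f}_{u}\,du)]$ keeps the product infinite. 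The principal obstacle will be the case-by-case integration of the Riccati ODE and the bookkeeping required to match its explicit solutions to the four stated formulas for $T^{*}$; by contrast, the factorization, conditioning, and Fatou steps are routine affine-process manipulations.
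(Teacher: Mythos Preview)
Your finiteness argument is essentially the paper's route: reduce to the exponential functional $\E[\exp(\lambda\int_0^t v_u\,du + \mu\int_0^t\sqrt{v_u}\,dW^v_u)]$ with $\lambda = \tfrac12(\omega^2(1-\rho_{sv}^2)-\omega)$, $\mu=\omega\rho_{sv}$, and read off the critical time. The paper gets there via the crude bound $e^{-\omega\int r^f}\le 1$ (yielding inequality \eqref{eqD.1}) and then cites an external sharp result (Proposition~3.2 in \citet{Cozma:2015.2}) for the four $T^*$-formulas, together with a case-by-case continuity argument to pass from $\alpha$ to some $\alpha_1>\alpha$. You instead integrate the Riccati ODE directly; this is more self-contained and recovers exactly the same four cases, and your continuity step matches the paper's.

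There is, however, a genuine gap. Your factorization
\[
\E[R_t^\omega]=S_0^\omega\,\E\big[e^{-\omega\int_0^t r^f_u\,du}\big]\cdot\E\big[\text{Heston part}\big]
\]
is not valid in this paper's model: the authors work under a \emph{full} correlation structure, not the restricted Ahlip setup you invoke. In \eqref{eq2.1} the foreign rate $r^f$ is coupled to $v$ through the quanto drift $-\rho_{s\hspace{-.7pt}f}\xi_f\sqrt{v_t r^f_t}$, and $W^f$ is correlated with both $W^s$ and $W^v$, so $r^f$ is not independent of $(v,W^s)$. For the finiteness direction this is a trivial repair---replace the factorization by the bound $e^{-\omega\int r^f}\le 1$, exactly as the paper does in \eqref{eqD.1}. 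For the divergence claim $\E[R_T^\alpha]=\infty$, though, it breaks your argument: the sentence ``the strictly positive factor $\E[\exp(-\alpha\int_0^T r^f_u\,du)]$ keeps the product infinite'' uses the factorization, and the inequality \eqref{eqD.1} goes the wrong way for a lower bound. To rescue it you would need to decouple $r^f$ from the exploding Heston part by a separate estimate---for instance a reverse-H\"older step exploiting exponential integrability of $\int_0^T r^f_u\,du$---rather than independence. The paper's appendix is itself not explicit on this point, delegating to the cited external proposition, but your present argument does not close the gap.
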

\begin{proof}
See Appendix \ref{sec:aux3.6}.
\end{proof}

When the domestic and the foreign interest rates are constant, Proposition \ref{Prop3.6} examines moment boundedness in the Heston model. It is an extension of Proposition 3.1 in \citet{Andersen:2007} from bounds on moments of order $\alpha>1$ to bounds on all moments of order $\omega\in[\alpha,\alpha_{1})$, for some $\alpha_{1}>\alpha\geq1$. We use this result to prove the strong convergence of the discretized discounted spot FX rate process.

\begin{proposition}\label{Prop3.7} Let $\alpha\geq1$. If $T<T^{*}$, there exists $\alpha_{2}>\alpha$ and $\eta_{\omega}>0$ such that for all $\omega \in [1,\alpha_{2})$,
\begin{equation}\label{eq3.41}
\sup_{\delta t \in (0,\eta_{\omega})}\hspace{1pt}\sup_{t \in [0,T]}\E\big[\hspace{1.5pt}\oversymb{\hspace{-1.5pt}R}_{\hspace{-.5pt}t}^{\omega}\big] < \infty,
\end{equation}
where $T^{*}$ is given below:
\begin{enumerate}
\item{If $k<\alpha\rho_{sv}\xi+\frac{1}{2}\hspace{1pt}\sqrt{\alpha(\alpha-1)}\hspace{1.5pt}\xi$,
\begin{equation}\label{eq3.40.1}
T^{*} = \frac{1}{\alpha\rho_{sv}\xi+\sqrt{\alpha(\alpha-1)}\hspace{1.5pt}\xi-k}\hspace{1pt}.
\end{equation}}
\item{If $k\geq\alpha\rho_{sv}\xi+\frac{1}{2}\hspace{1pt}\sqrt{\alpha(\alpha-1)}\hspace{1.5pt}\xi$,
\begin{equation}\label{eq3.40.2}
T^{*} = 
\begin{dcases}
\omit\hfill$\infty$\hfill\hspace{-8pt}&, \hspace{4pt}\text{if}\hspace{5pt} \alpha=1 \\[2pt]
\frac{4(k-\alpha\rho_{sv}\xi)}{\alpha(\alpha-1)\xi^{2}}\hspace{-8pt}&, \hspace{4pt}\text{if}\hspace{5pt} \alpha>1.
\end{dcases}
\end{equation}}
\end{enumerate}
\end{proposition}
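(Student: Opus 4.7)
The plan is to adapt the approach used for Proposition~\ref{Prop3.6} to the discretized setting. The key structural observation is that $\oversymb{r}^{f}\geq0$ by construction, so the discount factor $\exp\bigl\{-\omega\int_{0}^{t}\oversymb{r}^{f}_{u}du\bigr\}$ is bounded above by one and can be dropped for free, while $\oversymb{V}$ is $\sigma(W^{4})$-measurable because $W^{v}=W^{4}$ is the sole Brownian driver of the variance. After integrating out $W^{1},W^{2},W^{3}$---which are independent of $W^{4}$ and hence of $\oversymb{V}$---what remains is an expectation involving only $\oversymb{V}$ and $W^{4}$, precisely of the form $\E[\hspace{1pt}\oversymb{\hspace{-1pt}\Theta\hspace{-1pt}}\hspace{1pt}_{t}]$ covered by Lemma~\ref{Lem3.1}, and the two cases of the proposition will mirror the two cases of that lemma.

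Concretely, I would raise \eqref{eq3.37} to the power $\omega$, drop the foreign-rate discount, and condition first on $\sigma(W^{2},W^{3},W^{4})$. Since $W^{1}$ is independent of $(W^{2},W^{3},W^{4})$ and $\oversymb{V}$ is $\sigma(W^{4})$-measurable, $a_{11}\int_{0}^{t}\sqrt{\oversymb{V}_{u}}\,dW^{1}_{u}$ is, conditionally, centred Gaussian with variance $a_{11}^{2}\int_{0}^{t}\oversymb{V}_{u}du$; evaluating its moment generating function at $\omega$ contributes a factor $\exp\bigl\{\tfrac{1}{2}\omega^{2}a_{11}^{2}\int_{0}^{t}\oversymb{V}_{u}du\bigr\}$. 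I would then condition once more on $\sigma(W^{4})$: for $j=2,3$ each discretized integral $\int_{0}^{t}\sqrt{\oversymb{V}_{u}}(\delta W^{j}_{u}/\delta t)du$ is a sum of independent Gaussians with conditional variance $\sum_{i=0}^{n-1}\oversymb{V}_{t_{i}}\delta t+\oversymb{V}_{t_{n}}(t-t_{n})^{2}/\delta t\leq\int_{0}^{t}\oversymb{V}_{u}du$, the inequality following from the elementary bound $(t-t_{n})^{2}/\delta t\leq t-t_{n}$ on the terminal partial interval. Combined with $a_{11}^{2}+a_{12}^{2}+a_{13}^{2}=1-a_{14}^{2}=1-\rho_{sv}^{2}$ and $a_{14}=\rho_{sv}$, the Gaussian MGFs collect into
\begin{equation*}
\E\bigl[\hspace{1.5pt}\oversymb{\hspace{-1.5pt}R}_{t}^{\omega}\bigr]\leq S_{0}^{\omega}\,\E\bigg[\exp\Big\{\tfrac{\omega(\omega(1-\rho_{sv}^{2})-1)}{2}\!\int_{0}^{t}\!\oversymb{V}_{u}du+\omega\rho_{sv}\!\int_{0}^{t}\!\sqrt{\oversymb{V}_{u}}\,\tfrac{\delta W^{4}_{u}}{\delta t}du\Big\}\bigg],
\end{equation*}
which is $S_{0}^{\omega}\,\E[\hspace{1pt}\oversymb{\hspace{-1pt}\Theta\hspace{-1pt}}\hspace{1pt}_{t}]$ in the notation of Lemma~\ref{Lem3.1} with $\lambda=\omega(\omega(1-\rho_{sv}^{2})-1)/2$, $\mu=\omega\rho_{sv}$, $\hspace{-.5pt}\oversymb{\hspace{.5pt}Y}=\oversymb{V}$, $W^{y}=W^{4}$, $k_{y}=k$, and $\xi_{y}=\xi$.

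A direct computation gives $\Delta=\lambda+\tfrac{1}{2}\mu^{2}=\omega(\omega-1)/2\geq 0$; the threshold $k_{y}\leq\xi_{y}(\mu+\sqrt{\Delta/2})$ of Lemma~\ref{Lem3.1} simplifies to $k\leq\omega\rho_{sv}\xi+\tfrac{1}{2}\sqrt{\omega(\omega-1)}\,\xi$, and \eqref{eq3.3.1}--\eqref{eq3.3.2} reduce verbatim to \eqref{eq3.40.1}--\eqref{eq3.40.2} upon setting $\omega=\alpha$, while the degenerate case $\omega=1\Rightarrow\Delta=0$ of the lemma accounts for $\alpha=1$ in case~2. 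This yields \eqref{eq3.41} at $\omega=\alpha$ uniformly in $\delta t\in(0,\eta)$, with $\eta$ inherited from Lemma~\ref{Lem3.1}. To upgrade to $\omega\in[1,\alpha_{2})$ for some $\alpha_{2}>\alpha$, I would invoke (i) monotonicity of $L^{p}$-norms (Jensen) for $\omega\in[1,\alpha]$ and (ii) continuity of $\omega\mapsto T^{*}(\omega)$ at $\omega=\alpha$---the two branches \eqref{eq3.40.1}, \eqref{eq3.40.2} agree at the switching threshold $k=\omega\rho_{sv}\xi+\tfrac{1}{2}\sqrt{\omega(\omega-1)}\,\xi$---to conclude $T<T^{*}(\omega)$ on a right-neighbourhood of $\alpha$. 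The main technical care point is the terminal partial-interval contribution to the conditional variance of the discretized integrals, which is cleanly absorbed by $(t-t_{n})^{2}/\delta t\leq t-t_{n}$; the rest is a direct specialisation of Lemma~\ref{Lem3.1}.
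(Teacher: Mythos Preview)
Your proof is correct and follows essentially the same route as the paper: drop the non-negative foreign-rate discount, integrate out $W^{1},W^{2},W^{3}$ conditionally on $\mathcal{G}^{v}_{T}$ using the Gaussian MGF together with the partial-interval bound $(t-t_{n})^{2}/\delta t\leq t-t_{n}$ (this is precisely \eqref{eq3.28} in the proof of Proposition~\ref{Prop3.4}), and then invoke Lemma~\ref{Lem3.1} with $\lambda=\tfrac{1}{2}\omega(\omega(1-\rho_{sv}^{2})-1)$, $\mu=\omega\rho_{sv}$, whose two cases reproduce \eqref{eq3.40.1}--\eqref{eq3.40.2}. The only cosmetic difference is that the paper carries out the extension to $\omega\in(\alpha,\alpha_{2})$ by an explicit three-case analysis (including the boundary $k=\alpha\rho_{sv}\xi+\tfrac{1}{2}\sqrt{\alpha(\alpha-1)}\,\xi$, where it checks that the threshold function is increasing so that one lands in case~1 for $\omega>\alpha$), whereas you appeal directly to continuity of $\omega\mapsto T^{*}(\omega)$ and the matching of the two branches at the switch; both arguments are equivalent.
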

\begin{proof}
See Appendix \ref{sec:aux3.7}.
\end{proof}

To the best of our knowledge, the boundedness of moments of discretization schemes for the Heston model and extensions thereof had not been established until recently \citep{Cozma:2015} -- a fact that is also mentioned in \citet{Kloeden:2012} -- and Proposition \ref{Prop3.7} is only the second to address this issue. For the Heston model, Proposition \ref{Prop3.7} can be seen as an improvement of Proposition 3.9 in \citet{Cozma:2015} due to the sharper conditions on the critical time.

\subsection{The four-dimensional system}\label{subsec:fxrate}

The strong mean square convergence of the discretized variance and domestic interest rate processes was established in Proposition 3.5 in \citet{Cozma:2015}. First of all, we prove an equivalent result for the foreign interest rate.

\begin{proposition}\label{Prop3.8.0} If $2k_{f}\theta_{f}>\xi_{f}^{2}$, then the process $\oversymb{r}^{f}$ converges strongly in $L^{2}$, i.e.,
\begin{equation}\label{eq3.42.0}
\plim_{\delta t\to0}\hspace{1pt}\sup_{t\in[0,T]}\E\Big[\big|r^{f}_{t}-\oversymb{r}^{f}_{t}\big|^{2}\Big] = 0.
\end{equation}
\end{proposition}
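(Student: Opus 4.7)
The plan is to adapt the $L^{2}$ convergence argument used for the pure CIR FTE scheme in Proposition~3.5 of \citet{Cozma:2015}, with the added complication of the quanto correction term $-\rho_{sf}\xi_{f}\sqrt{v_{t}r^{f}_{t}}$ in \eqref{eq2.1} that couples the foreign rate to the variance. The Feller condition $2k_{f}\theta_{f}>\xi_{f}^{2}$ is invoked to guarantee uniform positive-moment bounds on $r^{f}$ on $[0,T]$ and to justify manipulations involving square roots of the process.

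First, I would reduce the claim to a bound on $\sup_{t\in[0,T]}\E\bigl[|\tilde{r}^{f}_{t}-r^{f}_{t}|^{2}\bigr]$, where $\tilde{r}^{f}$ is the continuous-time interpolant from \eqref{eq2.7.1}. The discrepancy between $\tilde{r}^{f}$ and the piecewise-constant process $\oversymb{r}^{f}$ obeys $\sup_{t\in[0,T]}\E\bigl[|\tilde{r}^{f}_{t}-\oversymb{r}^{f}_{t}|^{2}\bigr]\to0$ as $\delta t\to 0$ by a standard step-size estimate using boundedness of the truncated drift, Brownian increment moments, $L^{p}$ bounds on $\oversymb{r}^{f}$, and control of the probability mass on which $\tilde{r}^{f}_{t_{n}}<0$.

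Setting $e_{t}=\tilde{r}^{f}_{t}-r^{f}_{t}$ and applying It\^o's formula to $e_{t}^{2}$ yields
\begin{align*}
\E\bigl[e_{t}^{2}\bigr] &= -2k_{f}\E\!\int_{0}^{t}\! e_{s}\bigl(\oversymb{r}^{f}_{s}-r^{f}_{s}\bigr)ds - 2\rho_{sf}\xi_{f}\E\!\int_{0}^{t}\! e_{s}\bigl(\sqrt{\oversymb{V}_{s}\oversymb{r}^{f}_{s}}-\sqrt{v_{s}r^{f}_{s}}\bigr)ds \\
&\quad + \xi_{f}^{2}\,\E\!\int_{0}^{t}\bigl(\sqrt{\oversymb{r}^{f}_{s}}-\sqrt{r^{f}_{s}}\bigr)^{2}ds.
\end{align*}
The quadratic-variation term is controlled by the Yamada--Watanabe-type inequality $(\sqrt{a}-\sqrt{b})^{2}\leq|a-b|$, and the quanto cross term is split as
\begin{align*}
\sqrt{\oversymb{V}_{s}\oversymb{r}^{f}_{s}}-\sqrt{v_{s}r^{f}_{s}} = \sqrt{\oversymb{V}_{s}}\bigl(\sqrt{\oversymb{r}^{f}_{s}}-\sqrt{r^{f}_{s}}\bigr) + \sqrt{r^{f}_{s}}\bigl(\sqrt{\oversymb{V}_{s}}-\sqrt{v_{s}}\bigr).
\end{align*}
Young and Cauchy--Schwarz inequalities, the uniform moment bounds on $v$, $\oversymb{V}$, $r^{f}$ and $\oversymb{r}^{f}$ (Proposition~\ref{Prop3.4} and the companion result of \citet{Cozma:2015}), together with the $L^{2}$ convergence of $\oversymb{V}$ to $v$, then combine to give an inequality of the form $\E[e_{t}^{2}]\leq C\int_{0}^{t}\E[e_{s}^{2}]ds+\varepsilon(\delta t)$ with $\varepsilon(\delta t)\to0$, and Gronwall's lemma closes the argument uniformly in $t$.

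The main obstacle is the cross term $\sqrt{r^{f}_{s}}\bigl(\sqrt{\oversymb{V}_{s}}-\sqrt{v_{s}}\bigr)$: since $x\mapsto\sqrt{x}$ is only H\"older-$\tfrac{1}{2}$, the known $L^{2}$ convergence of $\oversymb{V}$ transfers only to a weaker $L^{1}$-type bound for $\sqrt{\oversymb{V}}-\sqrt{v}$, so the Cauchy--Schwarz split must be arranged so that the residual factors of $\sqrt{r^{f}}$ are absorbed by its positive moments rather than inflating the Gronwall argument. A secondary technicality is that $v$ and $r^{f}$ are driven by correlated Brownians, so moment bounds cannot be factored trivially and any such factorisation has to be justified by a further H\"older or Cauchy--Schwarz step.
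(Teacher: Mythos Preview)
Your Gronwall closure does not go through as stated. After applying It\^o to $e_t^2$ and using $(\sqrt{a}-\sqrt{b})^2\le|a-b|$ for the quadratic-variation term, you obtain a contribution of the form $\xi_f^2\int_0^t\E\bigl[|\oversymb{r}^f_s-r^f_s|\bigr]ds$, and the first piece of the quanto split likewise produces $\E\bigl[|e_s|\sqrt{\oversymb{V}_{\!s}}\,|\oversymb{r}^f_s-r^f_s|^{1/2}\bigr]$. These are of order $\E|e_s|\sim f(s)^{1/2}$, not $f(s)=\E[e_s^2]$, so the inequality you actually get is $f(t)\le C_1\int_0^t f(s)\,ds+C_2\int_0^t f(s)^{1/2}\,ds+\varepsilon(\delta t)$. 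Any attempt to force this into linear Gronwall via $f^{1/2}\le \eta/2+f/(2\eta)$ leaves a residual $C_2T\eta/2$ multiplied by $\exp\{(C_1+C_2/(2\eta))T\}$, which does not vanish as $\eta\to0$. The H\"older-$\tfrac12$ diffusion is precisely the obstruction; this is the well-known reason why direct $L^2$ Gronwall arguments fail for square-root diffusions without further structure.

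The paper's proof avoids this by localisation. It introduces stopping times $\tau_\kappa=\inf\{t:r^f_t\le\kappa^{-1}\}$ and $\tau_l=\inf\{t:v_t\ge l\}$, so that on $\{u<\tau_\kappa\wedge\tau_l\}$ one has the genuine Lipschitz bound $(\sqrt{r^f_u}-\sqrt{\hat r^f_u})^2\le\kappa(r^f_{u\wedge\tau}-\hat r^f_{u\wedge\tau})^2$, which \emph{does} close a Gronwall inequality for the stopped squared error (with $\kappa$- and $l$-dependent constants). The contribution from $\{\tau\le T\}$ is then controlled by Cauchy--Schwarz against fourth moments of $r^f$ and $\hat r^f$ (established separately) and the smallness of $\Prob(\tau_\kappa\le T)$ and $\Prob(\tau_l\le T)$. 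This is also where the Feller condition actually enters: it is used, via It\^o applied to $U(x)=x^{-\alpha}$ with $\alpha=(2k_f\theta_f-\xi_f^2)/(2\xi_f^2)>0$, to prove $\Prob(\tau_\kappa\le T)\to0$ as $\kappa\to\infty$. Your proposal invokes Feller for positive-moment bounds on $r^f$, but those hold without it; the role of Feller here is to keep $r^f$ away from the origin so that the square-root becomes locally Lipschitz. Finally, the cross term you flag as the ``main obstacle'', namely $\sqrt{r^f_s}(\sqrt{\oversymb{V}_{\!s}}-\sqrt{v_s})$, is in fact the benign one: it is handled directly by Cauchy--Schwarz and the known $L^2$ convergence of $\oversymb{V}$ to $v$, and does not obstruct Gronwall.
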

\begin{proof}
See Appendix \ref{sec:aux3.8.0}.
\end{proof}

As will become clear from the proof, the Feller condition $2k_{f}\theta_{f}>\xi_{f}^{2}\hspace{.5pt}$, which ensures that the process $r^{f}$ does not hit zero, allows us to control the potential growth of the absolute difference between the original and the discretized processes that comes from the sublinear correction term in the drift.

Second, we consider the logarithm of the process from \eqref{eq2.8} and examine its convergence properties. The formulae of the log-process, $x$, and its approximation, $X$, are given below.
\begin{align}
x_{t} &= x_{0} + \int_{0}^{t}{\Big(r^{d}_{u}-r^{f}_{u}-\frac{1}{2}\hspace{1pt}v_{u}\Big)du} + \int_{0}^{t}{\sqrt{v_{u}}\,dW_{u}^{s}}, \label{eq3.42}\\[1pt]
X_{t} &= x_{0} + \int_{0}^{t}{\Big(\oversymb{r}^{d}_{u}-\oversymb{r}^{f}_{u}-\frac{1}{2}\hspace{1pt}\oversymb{V}_{\hspace{-2.5pt}u}\Big)du} + a_{11}\int_{0}^{t}{\sqrt{\oversymb{V}_{\hspace{-2.5pt}u}}\,dW^{1}_{u}} + \sum_{j=2}^{4}{a_{1\hspace{-0.5pt}j}\int_{0}^{t}{\sqrt{\oversymb{V}_{\hspace{-2.5pt}u}}\,\frac{\delta W^{j}_{u}}{\delta t}\,du}}\hspace{.5pt}. \label{eq3.43}
\end{align}

\begin{proposition}\label{Prop3.8} If $2k_{f}\theta_{f}>\xi_{f}^{2}$, then the log-process converges uniformly in $L^{2}$, i.e.,
\begin{equation}\label{eq3.44}
\plim_{\delta t \to 0} \E\bigg[\sup_{t \in [0,T]}\big|x_{t}-X_{t}\big|^{2}\bigg] = 0.
\end{equation}
\end{proposition}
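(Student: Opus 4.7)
The plan is to expand $W^s = \sum_{j=1}^4 a_{1j}W^j$ via the Cholesky factorization of Section~\ref{subsec:model} and decompose $\epsilon_t \equiv x_t - X_t$ as
\begin{align*}
\epsilon_t ={}& \int_0^t\!(r^d_u-\oversymb{r}^d_u)\,du \,-\, \int_0^t\!(r^f_u-\oversymb{r}^f_u)\,du \,-\, \tfrac12\int_0^t\!(v_u-\oversymb{V}_{\hspace{-2.5pt}u})\,du \\
&+\, a_{11}\!\int_0^t\!\big(\sqrt{v_u}-\sqrt{\oversymb{V}_{\hspace{-2.5pt}u}}\big)\,dW^1_u \,+\, \sum_{j=2}^4 a_{1j}\bigg[\int_0^t\!\sqrt{v_u}\,dW^j_u - \int_0^t\!\sqrt{\oversymb{V}_{\hspace{-2.5pt}u}}\,\tfrac{\delta W^j_u}{\delta t}\,du\bigg],
\end{align*}
then bound each of the seven pieces separately after applying $\bigl(\sum_{i=1}^7 a_i\bigr)^2 \leq 7\sum_i a_i^2$ and pulling the supremum inside the expectation.

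For the first three bounded-variation pieces, Cauchy-Schwarz yields $\sup_{t\leq T}\bigl|\int_0^t(\cdot)\,du\bigr|^2 \leq T\!\int_0^T(\cdot)^2\,du$; taking expectations and applying Fubini, these vanish thanks to the $L^2$ convergence of $\oversymb V$ and $\oversymb{r}^d$ from Proposition~3.5 of \citet{Cozma:2015} together with that of $\oversymb{r}^f$ from Proposition~\ref{Prop3.8.0}. For the aligned stochastic integral (the $j=1$ term), Doob's $L^2$ inequality combined with the elementary bound $(\sqrt{a}-\sqrt{b})^2 \leq |a-b|$ for $a,b \geq 0$ reduces matters to $\E\!\int_0^T|v_u-\oversymb V_{\hspace{-2.5pt}u}|\,du$, controlled again by the variance convergence. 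For each $j \in \{2,3,4\}$, I would add and subtract $\int_0^t\sqrt{\oversymb V_{\hspace{-2.5pt}u}}\,dW^j_u$: the piece $\int_0^t(\sqrt{v_u}-\sqrt{\oversymb V_{\hspace{-2.5pt}u}})\,dW^j_u$ is handled exactly as in the $j=1$ case, while the residual
\[ R^j_t \,\equiv\, \int_0^t\!\sqrt{\oversymb V_{\hspace{-2.5pt}u}}\,dW^j_u \,-\, \int_0^t\!\sqrt{\oversymb V_{\hspace{-2.5pt}u}}\,\tfrac{\delta W^j_u}{\delta t}\,du \]
enjoys a crucial telescoping property: since $\sqrt{\oversymb V_{\hspace{-2.5pt}u}}$ is $\mathcal F_{t_n}$-measurable and constant on $[t_n,t_{n+1})$, both integrals agree exactly on every completed subinterval, so $R^j_{t_n}=0$ at every grid point and
\[ R^j_t \,=\, \sqrt{\oversymb V_{\eta(t)}}\bigg[\big(W^j_t-W^j_{\eta(t)}\big) - \tfrac{t-\eta(t)}{\delta t}\big(W^j_{\eta(t)+\delta t}-W^j_{\eta(t)}\big)\bigg],\quad \eta(t)=\max\{t_n:t_n\leq t\}. \]
Using $\E[\max_n Y_n]\leq\bigl(\sum_n\E[Y_n^2]\bigr)^{1/2}$ applied to $Y_n=\sup_{t\in[t_n,t_{n+1})}|R^j_t|^2$, the independence of $\oversymb V_{t_n}$ from the Brownian oscillation on $[t_n,t_{n+1}]$, and the uniform-in-$\delta t$ boundedness of $\E[\oversymb V_{t_n}^2]$ (a standard FTE moment estimate, also implicit in the proofs of Propositions~\ref{Prop3.4} and \ref{Prop3.7}), one obtains $\E\bigl[\sup_{t\in[0,T]}|R^j_t|^2\bigr] = O(\delta t)\to 0$.

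The main obstacle is the $L^2$ convergence of the foreign rate $\oversymb{r}^f$ itself, whose dynamics contain the non-Lipschitz coupling quanto term $-\rho_{s\hspace{-.7pt}f}\xi_f\sqrt{v_t r^f_t}\,$; this is precisely why Proposition~\ref{Prop3.8.0} requires the Feller condition $2k_f\theta_f > \xi_f^2$, which keeps $r^f$ bounded away from zero and tames the growth of $|r^f-\oversymb{r}^f|$ arising from the sublinear drift correction. Conditional on that proposition, the remaining steps above are standard applications of Cauchy-Schwarz, Doob's inequality, and the variance-discretization bounds already available from the earlier results and the companion paper \citet{Cozma:2015}.
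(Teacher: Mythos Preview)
Your proof is correct and follows essentially the same route as the paper's: the same decomposition into drift differences, a stochastic-integral difference, and the residual $R^j_t=-\sqrt{\oversymb V_{\hspace{-2.5pt}t}}\,Z^j_t$ (the paper's \eqref{eq3.45}--\eqref{eq3.46}), with the same inputs (Proposition~\ref{Prop3.8.0} and Proposition~3.5 of \citet{Cozma:2015}) for the non-residual pieces. The only tactical difference is in bounding the residual: the paper separates $\sqrt{\oversymb V_{\hspace{-2.5pt}t}}$ from $Z^j_t$ via H\"older, controls $\E[\sup_t V_t^2]$ directly, and then invokes the Euler modulus of continuity of Brownian motion \citep{Fischer:2009} for $\E[\sup_t|Z^j_t|^4]$, whereas you use a subinterval union bound with independence of $\oversymb V_{t_n}$ from the Brownian oscillation on $[t_n,t_{n+1}]$. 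Both work; note only that your stated rate is a slip---your inequality $\E[\max_n Y_n]\le(\sum_n\E[Y_n^2])^{1/2}$ combined with $\E[Y_n^2]=\E[\oversymb V_{t_n}^2]\cdot C(\delta t)^2$ and $N=T/\delta t$ gives $O(\sqrt{\delta t})$, not $O(\delta t)$, though the conclusion is unaffected.
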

\begin{proof}
See Appendix \ref{sec:aux3.8}.
\end{proof}

Third, we prove convergence of the discretized spot FX rate process.

\begin{proposition}\label{Prop3.9} If $2k_{f}\theta_{f}>\xi_{f}^{2}$, then the process $\hspace{1.5pt}\oversymb{\hspace{-1.5pt}S}$ converges uniformly in probability, i.e.,
\begin{equation}\label{eq3.54}
\plim_{\delta t \to 0} \Prob\bigg(\sup_{t \in [0,T]}\big|S_{t}-\hspace{1.5pt}\oversymb{\hspace{-1.5pt}S}_{t}\big| > \epsilon \bigg) = 0\hspace{.5pt},\hspace{2pt} \forall\hspace{.5pt}\epsilon > 0.
\end{equation}
\end{proposition}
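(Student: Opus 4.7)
The plan is to transfer the uniform $L^2$ convergence of the log-processes established in Proposition~\ref{Prop3.8} to convergence in probability of the exponentials. Since the exponential map is not globally Lipschitz, the argument requires localisation. I proceed in three steps.

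First (localisation), for any truncation level $R > 0$ and any $\epsilon > 0$ I split
\begin{align*}
\Prob\bigg(\sup_{t \in [0,T]}\big|S_{t}-\hspace{1.5pt}\oversymb{\hspace{-1.5pt}S}_{t}\big| > \epsilon\bigg) \leq\, &\Prob\Big(\sup_{t}S_{t} > R\Big) + \Prob\Big(\sup_{t}\hspace{1.5pt}\oversymb{\hspace{-1.5pt}S}_{t} > R\Big) \\
&+ \Prob\bigg(\sup_{t}\big|S_{t}-\hspace{1.5pt}\oversymb{\hspace{-1.5pt}S}_{t}\big|>\epsilon,\; \sup_{t}S_{t} \leq R,\; \sup_{t}\hspace{1.5pt}\oversymb{\hspace{-1.5pt}S}_{t} \leq R\bigg).
\end{align*}
On the event appearing in the third term, the elementary inequality $|e^{a}-e^{b}| \leq e^{\max(a,b)}|a-b|$ gives $|S_{t}-\hspace{1.5pt}\oversymb{\hspace{-1.5pt}S}_{t}| \leq R\hspace{1pt}|x_{t}-X_{t}|$, so this term is bounded by $\Prob(\sup_{t}|x_{t}-X_{t}| > \epsilon/R)$. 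By Markov's inequality and Proposition~\ref{Prop3.8}, this vanishes as $\delta t \to 0$ for every fixed $R$.

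Second (tightness of the suprema), I need to show that the first two probabilities can be made arbitrarily small by choosing $R$ large, uniformly in $\delta t$ for the second one. The first is immediate because $S$ has a.s.~continuous paths on $[0,T]$, hence $\sup_{t}S_{t} < \infty$ a.s. For the second, observe that $\sup_{t}\hspace{1.5pt}\oversymb{\hspace{-1.5pt}S}_{t} > R$ implies $\sup_{t}|X_{t}-x_{0}| > \log(R/S_{0})$, so by Markov
\begin{equation*}
\Prob\Big(\sup_{t}\hspace{1.5pt}\oversymb{\hspace{-1.5pt}S}_{t} > R\Big) \leq \frac{\E\big[\sup_{t}|X_{t}-x_{0}|^{2}\big]}{\log^{2}(R/S_{0})}\hspace{.5pt},
\end{equation*}
and it suffices to bound $\E[\sup_{t}|X_{t}-x_{0}|^{2}]$ uniformly in $\delta t$. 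Using the representation \eqref{eq3.43}, this splits into (i) a bounded-variation drift piece $\int_{0}^{T}(\oversymb{r}^{d}+\oversymb{r}^{f}+\tfrac{1}{2}\oversymb{V})du$, controlled in $L^{2}$ by Cauchy--Schwarz and the uniform-in-$\delta t$ second moment bounds on the FTE scheme for the square root process; (ii) the true stochastic integral $a_{11}\int_{0}^{T}\sqrt{\oversymb{V}}\hspace{1pt}dW^{1}$, whose supremum is controlled by Doob's $L^{2}$ inequality, giving $\E[\sup_{t}|\int_{0}^{t}\sqrt{\oversymb{V}_{u}}\hspace{1pt}dW^{1}_{u}|^{2}] \leq 4\hspace{1pt}\E[\int_{0}^{T}\oversymb{V}_{u}du]$; (iii) the piecewise linear interpolants $\int_{0}^{t}\sqrt{\oversymb{V}_{u}}\hspace{1pt}\delta W^{j}_{u}/\delta t\hspace{1pt}du$ for $j=2,3,4$, whose suprema are attained at grid points and are bounded by the discrete martingale $\sum_{i \leq n}\sqrt{\oversymb{V}_{t_{i}}}\hspace{1pt}\delta W^{j}_{t_{i}}$, treatable by Doob in the same way.

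Third (conclusion), given any $\eta > 0$ I first choose $R$ large enough so that the first two probabilities are each at most $\eta/3$, then let $\delta t \to 0$ to make the third at most $\eta/3$ with this $R$ fixed; since $\eta$ is arbitrary, this yields \eqref{eq3.54}. The main technical obstacle is the uniform-in-$\delta t$ control in step two of the supremum of the log-process, and particularly of the piecewise-linear terms arising from the new continuous-time interpolation in \eqref{eq2.8}; however, because those terms are piecewise linear in time, their suprema reduce to suprema of discrete partial sums, after which the Feller condition $2k_{f}\theta_{f} > \xi_{f}^{2}$ (needed only through Proposition~\ref{Prop3.8}) together with standard FTE moment estimates suffices. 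Note that no moment bound on $\hspace{1.5pt}\oversymb{\hspace{-1.5pt}S}$ itself is required, only bounds on the second moment of $X$, so the argument is less delicate than Propositions~\ref{Prop3.4}--\ref{Prop3.7} and is not constrained by the critical time $T^{*}$.
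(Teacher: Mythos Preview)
Your proof is correct but follows a different route from the paper's. Both arguments transfer the uniform $L^2$ convergence of the log-processes (Proposition~\ref{Prop3.8}) to convergence in probability of the exponentials via localisation, but they differ in \emph{which} process gets localised.

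The paper observes that if $|x_t - X_t| < \gamma$ and $|e^{x_t} - e^{X_t}| \geq \epsilon$, then necessarily $x_t > \log\big(\epsilon/(e^\gamma - 1)\big)$. This yields the inclusion
\[
\Big\{\sup_t |S_t - \hspace{1.5pt}\oversymb{\hspace{-1.5pt}S}_t| > \epsilon\Big\} \subseteq \Big\{\sup_t |x_t - X_t| \geq \gamma\Big\} \cup \Big\{\sup_t x_t > a(\epsilon,\gamma)\Big\},
\]
so only the \emph{original} log-process $x$ needs a tail bound. Since $\E[\sup_t |x_t|]$ is finite and does not involve $\delta t$ at all, the argument is short and requires no uniform-in-$\delta t$ estimates on the approximation.

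Your approach localises both $S$ and $\hspace{1.5pt}\oversymb{\hspace{-1.5pt}S}$ at level $R$, which forces you to control $\Prob(\sup_t \hspace{1.5pt}\oversymb{\hspace{-1.5pt}S}_t > R)$ uniformly in $\delta t$. You do this correctly by bounding $\E[\sup_t |X_t - x_0|^2]$: the drift piece uses the uniform FTE moment bounds, the It\^o integral uses Doob, and your observation that the interpolants $\int_0^t \sqrt{\oversymb{V}_u}\,\delta W^j_u/\delta t\,du$ are piecewise linear (hence their absolute suprema are attained at grid points and reduce to discrete martingale maxima) is exactly the right way to handle the new continuous-time interpolation. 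This is more work than the paper's route, but it is self-contained and makes explicit where the uniform-in-$\delta t$ control comes from; the paper's trick is more economical precisely because it sidesteps that issue entirely.
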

\begin{proof}
See Appendix \ref{sec:aux3.9}.
\end{proof}

\begin{theorem}\label{Thm3.10} Let $\alpha\geq1$ and assume that the following conditions are satisfied:
\begin{align}\label{eq3.60}
& k > \alpha\rho_{sv}\xi + \max\bigg\{\sqrt{\alpha(\alpha-1)}\hspace{1pt}\xi\hspace{.5pt},\hspace{1.5pt} \frac{1}{4}\hspace{1pt}\alpha(\alpha-1)T\xi^{2}\bigg\}, \nonumber\\[3pt]
&\hspace{-1em} \frac{k_{d}^{2}}{2\xi_{d}^{2}} > \frac{\alpha\hspace{1.5pt}q_{1}(\alpha)}{q_{1}(\alpha)-1}\hspace{1pt},\hspace{8pt} \frac{2k_{d}}{T\xi_{d}^{2}} > \frac{\alpha\hspace{1.5pt}q_{2}(\alpha)}{q_{2}(\alpha)-1} \hspace{4pt}\text{ and }\hspace{5pt} 2k_{f}\theta_{f}>\xi_{f}^{2}\hspace{1pt}.
\end{align}
Then the process converges strongly in $L^{\alpha}$ in the sense that
\begin{equation}\label{eq3.61}
\plim_{\delta t \to 0}\hspace{1pt}\sup_{t \in [0,T]}\E\Big[\big|S_{t}-\hspace{1.5pt}\oversymb{\hspace{-1.5pt}S}_{t}\big|^{\alpha}\Big] = 0.
\end{equation}
\end{theorem}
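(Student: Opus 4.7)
The plan is to deduce the strong $L^{\alpha}$ convergence by upgrading the uniform-in-$t$ convergence in probability already established in Proposition~\ref{Prop3.9} to convergence in $L^{\alpha}$, via a standard uniform integrability argument. The integrability ingredient will be supplied by higher-moment bounds of order strictly greater than $\alpha$, which come respectively from Proposition~\ref{Prop3.2} (for the exact process $S$) and Proposition~\ref{Prop3.4} (for the discretization $\hspace{1.5pt}\oversymb{\hspace{-1.5pt}S}$).

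The first step is a matter of bookkeeping: I would verify that the assumptions~(\ref{eq3.60}) are tailored precisely to invoke Propositions~\ref{Prop3.2}, \ref{Prop3.4}, \ref{Prop3.8}, and~\ref{Prop3.9} simultaneously. The condition on $k$ involving $\max\{\cdot,\cdot\}$ is the conjunction of the $k$-conditions in~(\ref{eq3.5}) and~(\ref{eq3.23}); the two $k_{d}$-conditions are copied verbatim from the two moment propositions; and the Feller-type inequality $2k_{f}\theta_{f}>\xi_{f}^{2}$ underlies Propositions~\ref{Prop3.8} and~\ref{Prop3.9}. Letting $\alpha_{1}>\alpha$ and $\alpha_{2}>\alpha$ denote the exponents produced by the two moment propositions and setting $\alpha^{*}=\min\{\alpha_{1},\alpha_{2}\}>\alpha$, one obtains a single common exponent for which
\begin{equation*}
\sup_{\delta t \in (0,\eta)}\sup_{t\in[0,T]}\Big(\E\big[S_{t}^{\alpha^{*}}\big]+\E\big[\hspace{1.5pt}\oversymb{\hspace{-1.5pt}S}_{t}^{\alpha^{*}}\big]\Big) < \infty
\end{equation*}
for some $\eta>0$. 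By Minkowski's inequality, this yields a uniform bound on $\E\big[|S_{t}-\hspace{1.5pt}\oversymb{\hspace{-1.5pt}S}_{t}|^{\alpha^{*}}\big]$ over both $t\in[0,T]$ and $\delta t\in(0,\eta)$.

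The second step is a truncation argument. Pick any $\omega\in(\alpha,\alpha^{*})$. For arbitrary $\epsilon>0$, split
\begin{equation*}
\E\big[|S_{t}-\hspace{1.5pt}\oversymb{\hspace{-1.5pt}S}_{t}|^{\alpha}\big] \leq \epsilon^{\alpha} + \E\big[|S_{t}-\hspace{1.5pt}\oversymb{\hspace{-1.5pt}S}_{t}|^{\alpha}\Ind_{\{|S_{t}-\hspace{1.5pt}\oversymb{\hspace{-1.5pt}S}_{t}|>\epsilon\}}\big],
\end{equation*}
and apply H\"older's inequality with conjugate exponents $\omega/\alpha$ and $\omega/(\omega-\alpha)$ to bound the second term by $C\cdot\Prob\big(|S_{t}-\hspace{1.5pt}\oversymb{\hspace{-1.5pt}S}_{t}|>\epsilon\big)^{1-\alpha/\omega}$, where $C$ is independent of $t$ and $\delta t$ thanks to the moment bound from the first step. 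Taking the supremum over $t$, using $\sup_{t}\Prob(|S_{t}-\hspace{1.5pt}\oversymb{\hspace{-1.5pt}S}_{t}|>\epsilon)\leq\Prob(\sup_{t}|S_{t}-\hspace{1.5pt}\oversymb{\hspace{-1.5pt}S}_{t}|>\epsilon)$ and invoking Proposition~\ref{Prop3.9}, one lets $\delta t\to0$ and then $\epsilon\to0$ to conclude~(\ref{eq3.61}).

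The substantive work has already been done in Propositions~\ref{Prop3.2}, \ref{Prop3.4}, and~\ref{Prop3.9}. The only delicate point in this final assembly is ensuring that one can genuinely find a common exponent $\alpha^{*}>\alpha$ under which both $S$ and $\hspace{1.5pt}\oversymb{\hspace{-1.5pt}S}$ enjoy uniformly finite $\alpha^{*}$-th moments; this is precisely why Propositions~\ref{Prop3.2} and~\ref{Prop3.4} are phrased with the ``there exists $\alpha_{i}>\alpha$'' clause and why the inequalities in~(\ref{eq3.60}) are strict. Theorem~\ref{Thm3.10} thus reads as a corollary that synthesizes the preceding results of Section~\ref{sec:convergence}.
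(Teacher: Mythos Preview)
Your proposal is correct and follows essentially the same route as the paper: split on the event $\{|S_t-\hspace{1.5pt}\oversymb{\hspace{-1.5pt}S}_t|>\epsilon\}$, apply H\"older with exponents $\omega/\alpha$ and $\omega/(\omega-\alpha)$ for some $\omega\in(\alpha,\min\{\alpha_1,\alpha_2\})$, and then feed in Propositions~\ref{Prop3.2}, \ref{Prop3.4}, and~\ref{Prop3.9}. The only cosmetic slip is that $\alpha^{*}=\min\{\alpha_1,\alpha_2\}$ itself need not be an admissible exponent (the moment propositions give finiteness only for $\omega<\alpha_i$), but since you immediately work with $\omega\in(\alpha,\alpha^{*})$ in the truncation step this causes no harm.
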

\begin{proof}
See Appendix \ref{sec:aux3.10}.
\end{proof}

Since the payoff of a typical FX or equity contract grows at most linearly in the exchange rate or the asset price, we only need to know the strong convergence in $L^{1}$ of the discounted process to deduce the convergence of the time-discretization error to zero. The following theorem can be generalized to the $L^{\alpha}$ case relatively easily, for all $\alpha\geq1$, upon noticing that the critical time $T^{*}$ from \eqref{eq3.38.1} -- \eqref{eq3.38.4} is always greater than the one from \eqref{eq3.40.1}~--~\eqref{eq3.40.2}.
\begin{theorem}\label{Thm3.11} If $2k_{f}\theta_{f}>\xi_{f}^{2}$ and $T<T^{*}$, then the discounted process converges strongly in $L^{1}$, i.e.,
\begin{equation}\label{eq3.79.1}
\plim_{\delta t \to 0}\hspace{1pt}\sup_{t \in [0,T]}\E\Big[\big|R_{t}-\hspace{1.5pt}\oversymb{\hspace{-1.5pt}R}_{t}\big|\Big] = 0,
\end{equation}
where $T^{*}$ is given below:
\begin{equation}\label{eq3.79.2}
T^{*} =
\begin{dcases}
\frac{1}{\rho_{sv}\xi-k}\hspace{-8pt}&, \hspace{4pt}\text{if}\hspace{5pt} k<\rho_{sv}\xi \\[2pt]
\omit\hfill$\infty$\hfill\hspace{-8pt}&, \hspace{4pt}\text{if}\hspace{5pt} k\geq\rho_{sv}\xi.
\end{dcases}
\end{equation}
\end{theorem}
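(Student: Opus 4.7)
The plan is to exploit the product decomposition $R_t = S_t\hspace{1pt}e^{-\int_0^t r^d_u\hspace{1pt}du}$ and $\oversymb{R}_t = \oversymb{S}_t\hspace{1pt}e^{-\int_0^t \oversymb{r}^d_u\hspace{1pt}du}$, combining the uniform-in-$t$ convergence in probability of $\oversymb{S}\to S$ from Proposition~\ref{Prop3.9} and of the discount factors with the moment bounds from Propositions~\ref{Prop3.6} and~\ref{Prop3.7} via a H\"older / uniform-integrability argument.

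\textbf{Step 1 (pathwise decomposition and convergence in probability).} Since $r^d, \oversymb{r}^d\geq 0$, the two discount factors lie in $[0,1]$ and satisfy $|e^{-a}-e^{-b}|\leq|a-b|$ for $a,b\geq 0$. The product decomposition then gives
\[
\sup_{t\in[0,T]}|R_t - \oversymb{R}_t| \,\leq\, \sup_{t\in[0,T]}|S_t - \oversymb{S}_t| \,+\, \Big(\sup_{t\in[0,T]}S_t\Big)\int_0^T\big|r^d_u - \oversymb{r}^d_u\big|\,du.
\]
By Jensen's inequality and the strong $L^2$-convergence of $\oversymb{r}^d$ recalled at the start of Section~\ref{subsec:fxrate}, the integral tends to zero in $L^1$ and hence in probability; and $\sup_t S_t$ is a.s.\ finite by path continuity on the compact interval $[0,T]$, hence bounded in probability. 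Combined with Proposition~\ref{Prop3.9}, this yields $\sup_{t\in[0,T]}|R_t-\oversymb{R}_t|\to 0$ in probability as $\delta t\to 0$.

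\textbf{Step 2 (uniform $L^\omega$-bound for some $\omega>1$).} The $T^*$ in the theorem coincides with the critical time from Proposition~\ref{Prop3.7} evaluated at $\alpha=1$. Examining both regimes ($k<\rho_{sv}\xi$, where that critical time is a continuous function of $\alpha$, and $k\geq\rho_{sv}\xi$, where it diverges to $+\infty$ as $\alpha\to 1^+$), we can find $\omega=1+\epsilon>1$ such that $T$ is strictly less than the critical time of Proposition~\ref{Prop3.7} at $\omega$; by the inequality between the two propositions' critical times highlighted in the paragraph preceding the theorem, $T$ is then also less than the critical time of Proposition~\ref{Prop3.6} at $\omega$. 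Applying the two propositions, together with the $c_r$-inequality $|a-b|^\omega\leq 2^{\omega-1}(|a|^\omega+|b|^\omega)$, yields
\[
C \,:=\, \sup_{\delta t\in(0,\eta_\omega)}\sup_{t\in[0,T]}\E\big[|R_t-\oversymb{R}_t|^\omega\big] \,<\, \infty.
\]

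\textbf{Step 3 (H\"older conclusion).} For any $\gamma>0$, splitting on $\{|R_t-\oversymb{R}_t|>\gamma\}$ and using H\"older's inequality gives
\[
\E[|R_t - \oversymb{R}_t|] \,\leq\, \gamma \,+\, C^{1/\omega}\,\Prob\big(|R_t-\oversymb{R}_t|>\gamma\big)^{(\omega-1)/\omega}.
\]
Taking sup over $t\in[0,T]$ and using $\sup_t\Prob(|R_t-\oversymb{R}_t|>\gamma)\leq\Prob(\sup_t|R_t-\oversymb{R}_t|>\gamma)\to 0$ from Step~1, we obtain $\limsup_{\delta t\to 0}\sup_t\E[|R_t-\oversymb{R}_t|]\leq\gamma$, and since $\gamma>0$ is arbitrary, the result follows.

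The main obstacle I anticipate is Step~2: upgrading the integrability index from $\alpha=1$ (the range of assumed hypothesis) to some $\omega>1$ requires checking the continuity of the Proposition~\ref{Prop3.7} critical time at $\alpha=1^+$ in each of the two regimes, and the (already flagged) comparison of critical times between Propositions~\ref{Prop3.6} and~\ref{Prop3.7}. Once this small buffer of integrability is secured, the remainder is a standard combination of convergence in probability with uniform integrability.
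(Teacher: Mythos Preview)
Your proposal is correct and follows essentially the same route as the paper: convergence in probability plus uniform $L^\omega$ moment bounds (Propositions~\ref{Prop3.6} and~\ref{Prop3.7}) combined via H\"older, exactly mirroring the proof of Theorem~\ref{Thm3.10}. Two minor remarks: in Step~1 the paper obtains convergence in probability more directly by observing that $R,\oversymb{R}$ are simply $S,\oversymb{S}$ with $r^d\equiv 0$, so Proposition~\ref{Prop3.9} applies verbatim without the product decomposition; and your Step~2 worry is unnecessary, since Propositions~\ref{Prop3.6} and~\ref{Prop3.7} applied at $\alpha=1$ already furnish $\alpha_1,\alpha_2>1$ with the required moment bounds for all $\omega\in[1,\min\{\alpha_1,\alpha_2\})$ as part of their statements.
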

\begin{proof}
The convergence in probability of the discounted process is a consequence of Proposition \ref{Prop3.9}, by taking the domestic interest rate to be zero. The rest of the proof follows the argument of Theorem \ref{Thm3.10} closely and makes use of Propositions \ref{Prop3.6} and \ref{Prop3.7}.
\end{proof}

We can extend the convergence analysis from the four-dimensional Heston-CIR model to multi-factor short rates with CIR dynamics and a term structure, in which case Propositions \ref{Prop3.6} to \ref{Prop3.9}, and hence Theorem \ref{Thm3.11}, still hold, albeit with slightly modified proofs.
\begin{table}[hbt]\renewcommand{\arraystretch}{1.10}\addtolength{\tabcolsep}{0pt}\small
\begin{center}
\caption{The calibrated Heston parameters. Column 2: for USD/EUR market data of 2 January 2004 - 27 September 2005 \citep{Jessen:2013}. Column 3: for EUR/USD market data of 22 August 2006 \citep{Elices:2013}. Column 4: for the S\&P 500 index between 2 January 1990 - 30 September 2003, using VIX data \citep{Ait-Sahalia:2007}. Column 5: for the S\&P 500 index between 2 January 1990 - 30 December 2011, using two out-of-the-money options written on the index \citep{Hurn:2014}.}\label{table:1}
\begin{tabular}{ c c c c c }
  \toprule[.1em]
  Parameter \hspace{.1em}& Jessen \& Poulsen & Elices \& Gim\'enez & A\"it-Sahalia \& Kimmel & Hurn et al. \\
  \hline
	\addlinespace[4pt]
  $k$ &	$2.2200$ & $1.1000$ & $5.1300$ & $1.9775$ \\
  $\theta$ & $0.0120$ & $0.0097$ & $0.0436$ & $0.0376$ \\
	$\xi$ & $0.1830$ & $0.1400$ & $0.5200$ & $0.4568$ \\
	$\rho_{sv}$ & $0.0634$ & $0.1400$ & $\hspace{-.75em}-0.7540$ & $\hspace{-.75em}-0.7591$ \\
	\bottomrule[.1em]
\end{tabular}
\end{center}
\end{table}

The condition $k\geq\rho_{sv}\xi$, also known as the \textit{good correlation regime} \citep{Zeliade:2011}, is almost always satisfied in both FX and equity markets. This is because the speed of mean reversion $k$ is usually larger than the volatility of volatility $\xi$, a fact clearly illustrated in Table \ref{table:1}. And even if this was not the case, a negative correlation between the underlying process and the variance, as is typically the case in equity markets (the so-called \textit{leverage effect}), or a small absolute value of this correlation, as is typically the case in FX markets, would ensure the validity of the condition. Furthermore, the Feller condition $2k_{f}\theta_{f}>\xi_{f}^{2}$ in \eqref{eq3.60} for the foreign interest rate is generally satisfied in practice, a fact clearly illustrated in Table \ref{table:2}. We do not require a Feller condition for the stochastic volatility, and this is not always given in practice.
\begin{table}[ht]\renewcommand{\arraystretch}{1.10}\addtolength{\tabcolsep}{0pt}\small
\begin{center}
\caption{The calibrated Cox-Ingersoll-Ross parameters. Column 2: to the 3-month US Treasury bill yield between January 1964 - December 1998 \citep{Driffill:2003}. Column 3: to the US Treasury bill yield between October 1982 - April 2011 \citep{Erismann:2011}. Column 4: to the Euro ATM caps volatility curve on 17 January 2000 \citep{Brigo:2006}. Column 5: to the Euro OverNight Index Average between 1 January 2008 - 6 October 2008 \citep{Laffers:2009}. Column 6: using historical data for Euro between 1 January 2001 - 1 September 2011 \citep{Amin:2012}.}\label{table:2}
\begin{tabular}{ c c c c c c }
  \toprule[.1em]
  Parameter \hspace{.1em}& Driffill et al. & Erismann & Brigo \& Mercurio \hspace{.1em}& Laff\'ers &\hspace{1.4em} Amin \\
  \hline
	\addlinespace[4pt]
  $k_{d,f}$ & $0.0684$ & $0.1104$ & $0.3945$ & $0.2820$ &\hspace{1.4em} $0.1990$ \\
  $\theta_{d,f}$ & $0.0161$ & $0.0509$ & $0.2713$ & $0.0411$ &\hspace{1.4em} $0.0497$ \\
	$\xi_{d,f}$ & $0.0177$ & $0.0498$ & $0.0545$ & $0.0058$ &\hspace{1.4em} $0.0354$ \\
	\bottomrule[.1em]
\end{tabular}
\end{center}
\end{table}

\subsection{Option pricing}\label{subsec:options}

We conclude this section with a brief study on the convergence of mixed Monte Carlo/PDE estimators for computing FX option prices. Define the fair price of an option written on $S$:
\begin{equation}\label{eq3.80}
U = \E\Big[e^{-\int_{0}^{T}{r^{d}_{t} dt}}f(\hspace{.25pt}S\hspace{.75pt})\Big],
\end{equation}
and its approximation under \eqref{eq2.8}:
\begin{equation}\label{eq3.81}
\hspace{1pt}\oversymb{\hspace{-1pt}U\hspace{-.5pt}}\hspace{.5pt} = \E\Big[e^{-\int_{0}^{T}{\oversymb{r}^{d}_{t} dt}}f(\hspace{.25pt}\hspace{1.5pt}\oversymb{\hspace{-1.5pt}S}\hspace{1pt})\Big],
\end{equation}
where the payoff function $f$ may depend on the entire path of the process and all expectations in this section are under the domestic risk-neutral measure $\mathbb{Q}$. The following theorem is concerned with the convergence of the time-discretization error to zero under the four-factor FX model, and can be extended to multi-factor CIR short rates in a straightforward manner. The proof employs Propositions \ref{Prop3.6}, \ref{Prop3.7} and \ref{Prop3.9}, and Theorem \ref{Thm3.11}. However, we omit it here because of its similarity to the proof of Theorem 2.1 in \citet{Cozma:2015}, once the aforementioned results are established.

\begin{theorem}\label{Thm3.13} Suppose that $2k_{f}\theta_{f}>\xi_{f}^{2}$. Then the following two statements hold:
\begin{itemize}
\item[(i)] The approximations to the values of the European put, the up-and-out call and any barrier put option defined in \eqref{eq3.81} converge as $\delta t\to0$.
\item[(ii)] If $T\hspace{-1pt}<T^{*}$, with $T^{*}$ from \eqref{eq3.79.2}, the approximations to the values of the European call, Asian options, the down-and-in/out and the up-and-in call option defined in \eqref{eq3.81} converge as $\delta t\to0$.
\end{itemize}
\end{theorem}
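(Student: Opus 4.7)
The plan is to upgrade the convergence in probability from Proposition~\ref{Prop3.9} to convergence of discounted payoff expectations, drawing on the moment bounds in Propositions~\ref{Prop3.6} and~\ref{Prop3.7} and the $L^{1}$ bound of Theorem~\ref{Thm3.11} to supply uniform integrability where required. Setting $D_{T}=\exp\{-\!\int_{0}^{T}\!r^{d}_{u}du\}$ and $\oversymb{D}_{T}=\exp\{-\!\int_{0}^{T}\!\oversymb{r}^{d}_{u}du\}$, both take values in $[0,1]$, and the strong $L^{2}$ convergence of $\oversymb{r}^{d}$ to $r^{d}$ (Proposition 3.5 in \citet{Cozma:2015}) together with dominated convergence gives $\oversymb{D}_{T}\to D_{T}$ in $L^{p}$ for every $p\geq1$. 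I would then decompose
\begin{equation*}
\bigl|U-\hspace{1pt}\oversymb{\hspace{-1pt}U\hspace{-.5pt}}\hspace{.5pt}\bigr| \,\leq\, \E\bigl[|D_{T}-\oversymb{D}_{T}|\,|f(S)|\bigr] \,+\, \E\bigl[\oversymb{D}_{T}\,|f(S)-f(\hspace{1.5pt}\oversymb{\hspace{-1.5pt}S})|\bigr].
\end{equation*}

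For part (i), the payoffs are globally bounded by some constant $K$ (the European put by $K$, the up-and-out call by its barrier, any barrier put by $K$). The first term is then dominated by $K\,\E|D_{T}-\oversymb{D}_{T}|\to 0$. For the second, Proposition~\ref{Prop3.9} provides $\sup_{t}|S_{t}-\hspace{1.5pt}\oversymb{\hspace{-1.5pt}S}_{t}|\to 0$ in probability, which implies $f(\hspace{1.5pt}\oversymb{\hspace{-1.5pt}S})\to f(S)$ in probability whenever $f$ is continuous a.s.\ under the law of $S$. This is trivial for the put; for the barrier payoffs it reduces to showing $\mathbb{Q}(\sup_{t}S_{t}=B)=0$ (and similarly for the minimum), which follows from the non-degeneracy of the joint Heston-CIR diffusion away from zero variance. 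Bounded convergence closes part~(i).

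For part (ii), the payoffs grow at most linearly in $\sup_{t\leq T}S_{t}$, and the hypothesis $T<T^{*}$ with $T^{*}$ from \eqref{eq3.79.2} makes Propositions~\ref{Prop3.6} and~\ref{Prop3.7} applicable with some $\omega>1$, so that $\sup_{t\leq T}\E[R_{t}^{\omega}]<\infty$ and $\sup_{\delta t\in(0,\eta_{\omega})}\sup_{t\leq T}\E[\hspace{1.5pt}\oversymb{\hspace{-1.5pt}R}_{t}^{\omega}]<\infty$. Doob's $L^{\omega}$ inequality promotes these bounds to the running suprema of the positive martingales $R$ and $\hspace{1.5pt}\oversymb{\hspace{-1.5pt}R}$, which dominate the discounted payoff up to a multiplicative constant. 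Hölder's inequality handles the first error term. For the second, the family $\{\oversymb{D}_{T}f(\hspace{1.5pt}\oversymb{\hspace{-1.5pt}S})\}_{\delta t\in(0,\eta)}$ is bounded in $L^{\omega}$, hence uniformly integrable, and combined with the convergence in probability (using the atomlessness argument for barriers as in part~(i)) Vitali's theorem yields convergence in $L^{1}$. When the payoff only involves $S_{T}$, one can alternatively invoke Theorem~\ref{Thm3.11} directly, since e.g.\ $\oversymb{D}_{T}(\hspace{1.5pt}\oversymb{\hspace{-1.5pt}S}_{T}-K)^{+}=(\hspace{1.5pt}\oversymb{\hspace{-1.5pt}R}_{T}-K\oversymb{D}_{T})^{+}$ is 1-Lipschitz in the pair $(\hspace{1.5pt}\oversymb{\hspace{-1.5pt}R}_{T},\oversymb{D}_{T})$.

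The principal obstacle is the discontinuity of barrier-type payoffs in the path sup-norm: establishing that the running extremum of $S$ has no atom at the barrier under the true Heston-CIR law requires a careful diffusion argument, as Proposition~\ref{Prop3.9} gives only convergence in probability and not of distributional features. A secondary technical point is that Proposition~\ref{Prop3.7} supplies $\omega>1$ only in a range that shrinks as $T\to T^{*}$, so one must verify that a suitable $\omega>1$ exists throughout the prescribed regime $T<T^{*}$, and that the discounting-free case needed in part~(i) (which does not require $T<T^{*}$) is covered by the bounded-payoff argument alone.
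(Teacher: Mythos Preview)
Your overall architecture---convergence in probability from Proposition~\ref{Prop3.9} plus uniform integrability from the moment bounds in Propositions~\ref{Prop3.6} and~\ref{Prop3.7}, with Theorem~\ref{Thm3.11} available for terminal-value payoffs---is exactly the route the paper takes (it defers to Theorem~2.1 of \citet{Cozma:2015} built on these same inputs). The treatment of bounded payoffs in part~(i) via bounded convergence is correct, and you rightly flag the atomlessness of the running extremum at the barrier as the residual technical point.

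There is, however, a concrete gap in part~(ii). The processes $R$ and $\hspace{1.5pt}\oversymb{\hspace{-1.5pt}R}$ are \emph{not} martingales: from \eqref{eq3.36} one has $dR_{t}=-r^{f}_{t}R_{t}\,dt+R_{t}\sqrt{v_{t}}\,dW^{s}_{t}$, so $R$ is only a supermartingale, and $\hspace{1.5pt}\oversymb{\hspace{-1.5pt}R}$ in \eqref{eq3.37} is not even an It\^o semimartingale in the usual sense because of the $\delta W^{j}_{u}/\delta t$ terms in the drift. Doob's $L^{\omega}$ maximal inequality therefore does not apply as you invoke it, and you have not established $\E[\sup_{t}\hspace{1.5pt}\oversymb{\hspace{-1.5pt}R}_{t}^{\omega}]<\infty$.

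The fix is that you do not need the running-supremum bound at all for the payoffs listed. For the European and barrier call options in~(ii) one has $\oversymb{D}_{T}f(\hspace{1.5pt}\oversymb{\hspace{-1.5pt}S})\leq\oversymb{D}_{T}\hspace{1.5pt}\oversymb{\hspace{-1.5pt}S}_{T}=\hspace{1.5pt}\oversymb{\hspace{-1.5pt}R}_{T}$, and for Asian options $\oversymb{D}_{T}\cdot\tfrac{1}{T}\!\int_{0}^{T}\!\hspace{1.5pt}\oversymb{\hspace{-1.5pt}S}_{t}\,dt\leq\tfrac{1}{T}\!\int_{0}^{T}\!\hspace{1.5pt}\oversymb{\hspace{-1.5pt}R}_{t}\,dt$ since $\oversymb{D}_{T}\leq\oversymb{D}_{t}$; then Jensen and Fubini give $\E\bigl[(\oversymb{D}_{T}f(\hspace{1.5pt}\oversymb{\hspace{-1.5pt}S}))^{\omega}\bigr]\leq\sup_{t}\E[\hspace{1.5pt}\oversymb{\hspace{-1.5pt}R}_{t}^{\omega}]$, which is precisely what Proposition~\ref{Prop3.7} supplies. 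With that correction, your uniform-integrability argument via Vitali goes through and your proof aligns with the paper's.
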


For European contracts, we can evaluate the conditional option price, i.e., the innermost expectation in \eqref{eq2.10}, analytically. Hence, consider $M$ simulations of the discrete paths of the variance and the interest rates and, for $1\leq j\leq M$, let $\omega_{j}$ denote the $j$-th sample. Then
\begin{equation}\label{eq3.82}
\frac{1}{M}\sum_{j=1}^{M}{\hspace{1pt}\E\left[e^{-\int_{0}^{T}{\oversymb{r}^{d}_{t}dt}}f(\hspace{1.5pt}\oversymb{\hspace{-1.5pt}S}_{T})\hspace{1pt}\big|\hspace{2pt}\mathcal{G}_{T}^{f,d,v}, \omega=\omega_{j}\hspace{.5pt}\right]}
= \frac{1}{M}\sum_{j=1}^{M}{\hspace{1pt}\E\left[e^{-\int_{0}^{T}{\oversymb{r}^{d}_{t}\scalebox{0.65}{$(j)$}\hspace{.5pt}dt}}f(\hspace{1.5pt}\oversymb{\hspace{-1.5pt}S}_{T}\scalebox{0.85}{$(j)$})\hspace{1pt}\big|\hspace{2pt}\mathcal{G}_{T}^{f,d,v}\right]}
\end{equation}
is the mixed Monte Carlo/PDE estimator of the European option price at time $t=0$. The global error can be split into two parts:
\begin{align*}
\text{Error} &= \E\Big[e^{-\int_{0}^{T}{r^{d}_{t} dt}}f(S_{T})\Big] - \frac{1}{M}\sum_{j=1}^{M}{\hspace{1pt}\E\left[e^{-\int_{0}^{T}{\oversymb{r}^{d}_{t} dt}}f(\hspace{1.5pt}\oversymb{\hspace{-1.5pt}S}_{T})\hspace{1pt}\big|\hspace{2pt}\mathcal{G}_{T}^{f,d,v}, \omega=\omega_{j}\hspace{.5pt}\right]} \\[1pt]
&= \bigg(\E\Big[e^{-\int_{0}^{T}{r^{d}_{t} dt}}f(S_{T})\Big] - \E\Big[e^{-\int_{0}^{T}{\oversymb{r}^{d}_{t} dt}}f(\hspace{1.5pt}\oversymb{\hspace{-1.5pt}S}_{T})\Big]\bigg) \\[1pt]
&+ \bigg(\E\Big[\E\left[e^{-\int_{0}^{T}{\oversymb{r}^{d}_{t} dt}}f(\hspace{1.5pt}\oversymb{\hspace{-1.5pt}S}_{T})\hspace{1pt}\big|\hspace{2pt}\mathcal{G}_{T}^{f,d,v}\right]\Big] - \frac{1}{M}\sum_{j=1}^{M}{\hspace{1pt}\E\left[e^{-\int_{0}^{T}{\oversymb{r}^{d}_{t} dt}}f(\hspace{1.5pt}\oversymb{\hspace{-1.5pt}S}_{T})\hspace{1pt}\big|\hspace{2pt}\mathcal{G}_{T}^{f,d,v}, \omega=\omega_{j}\hspace{.5pt}\right]}\bigg).
\end{align*}
The first term is the \textit{time-discretization error} and the second term is the \textit{statistical error}. The convergence to zero of the former was derived in Theorem \ref{Thm3.13} for European put and call options, result that can be easily extended to other financial derivatives, such as binary options. The convergence to zero of the latter follows from the Central Limit Theorem \citep[see][]{Glasserman:2003} upon noticing the following upper bound on the variance,
\begin{equation*}
\text{Var}\bigg(\hspace{-1pt}\E\!\left[e^{-\int_{0}^{T}{\oversymb{r}^{d}_{t} dt}}f(\hspace{1.5pt}\oversymb{\hspace{-1.5pt}S}_{T})\hspace{1pt}\big|\hspace{2pt}\mathcal{G}_{T}^{f,d,v}\right]\hspace{-1pt}\bigg)
\leq \E\!\left[\E\!\left[e^{-\int_{0}^{T}{\oversymb{r}^{d}_{t} dt}}f(\hspace{1.5pt}\oversymb{\hspace{-1.5pt}S}_{T})\hspace{1pt}\big|\hspace{2pt}\mathcal{G}_{T}^{f,d,v}\right]^{2}\right]
\leq \E\!\Big[e^{-2\int_{0}^{T}{\oversymb{r}^{d}_{t} dt}}f(\hspace{1.5pt}\oversymb{\hspace{-1.5pt}S}_{T})^{2}\Big].
\end{equation*}
Assuming $f$ has at most polynomial growth, we can employ Proposition \ref{Prop3.4} to deduce the finiteness of the variance under some conditions on the model parameters. In particular, if $f$ is Lipschitz, Proposition \ref{Prop3.7} gives sharper sufficient conditions, for $\alpha=2$.

For path-dependent contracts, closed-form solutions are rarely available and we rely on finite differences instead to compute the conditional option price. Conditioning on the $j$-th realization of the variance and interest rates paths, let $u_{\hspace{.5pt}j}^{f,d,v}\big(t,\hspace{1.5pt}\oversymb{\hspace{-1.5pt}S}_{t}\big)$ and $\bar{u}_{\hspace{.5pt}j}^{f,d,v}\big(t,\hspace{1.5pt}\oversymb{\hspace{-1.5pt}S}_{t};P,L\big)$ be the solutions to the conditional PDE and to the associated finite difference scheme, when a uniform mesh with $P$ time steps and $L$ spatial steps is employed, respectively. It can be shown that the conditional PDE has a unique solution \citep[see Section 7.1.2 in][]{Evans:1998} which is, in fact, the conditional option price \citep[see Theorem 7.3.1 in][]{Shreve:2004}. Then
\begin{equation}\label{eq3.83}
\frac{1}{M}\sum_{j=1}^{M}{\bar{u}_{\hspace{.5pt}j}^{f,d,v}\big(0,S_{0};P,L\big)}
\end{equation}
is the mixed Monte Carlo/PDE estimator of the option price at $t=0$. The global error can be split into three parts:
\begin{align*}
\text{Error} &= \E\Big[e^{-\int_{0}^{T}{r^{d}_{t} dt}}f(\hspace{.25pt}S\hspace{.75pt})\Big] - \frac{1}{M}\sum_{j=1}^{M}{\bar{u}_{\hspace{.5pt}j}^{f,d,v}\big(0,S_{0};P,L\big)} \\[1pt]
&= \bigg(\E\Big[e^{-\int_{0}^{T}{r^{d}_{t} dt}}f(\hspace{.25pt}S\hspace{.75pt})\Big] - \E\Big[e^{-\int_{0}^{T}{\oversymb{r}^{d}_{t} dt}}f(\hspace{.25pt}\hspace{1.5pt}\oversymb{\hspace{-1.5pt}S}\hspace{1pt})\Big]\bigg) \\[1pt]
&+ \bigg(\E\Big[\E\left[e^{-\int_{0}^{T}{\oversymb{r}^{d}_{t} dt}}f(\hspace{.25pt}\hspace{1.5pt}\oversymb{\hspace{-1.5pt}S}\hspace{1pt})\hspace{1pt}\big|\hspace{2pt}\mathcal{G}_{T}^{f,d,v}\right]\Big] - \frac{1}{M}\sum_{j=1}^{M}{\hspace{1pt}\E\left[e^{-\int_{0}^{T}{\oversymb{r}^{d}_{t} dt}}f(\hspace{.25pt}\hspace{1.5pt}\oversymb{\hspace{-1.5pt}S}\hspace{1pt})\hspace{1pt}\big|\hspace{2pt}\mathcal{G}_{T}^{f,d,v}, \omega=\omega_{j}\hspace{.5pt}\right]}\bigg) \\[-2pt]
&+ \frac{1}{M}\sum_{j=1}^{M}{\left(u_{\hspace{.5pt}j}^{f,d,v}\big(0,S_{0}\big)-\bar{u}_{\hspace{.5pt}j}^{f,d,v}\big(0,S_{0};P,L\big)\right)}.
\end{align*}
The first term is the \textit{time-discretization error}, the second term is the \textit{statistical error} and the third term is the \textit{finite difference (FD) discretization error}. The convergence to zero of the first term was derived in Theorem \ref{Thm3.13} for Asian and barrier options, and the convergence to zero of the second term is a consequence of the Central Limit Theorem (see the above discussion). However, the convergence of the third term, i.e., of the finite difference scheme, depends on the contract and the particular scheme employed.

\section{Variance reduction analysis}\label{sec:variance}

In this section, we carry out a variance reduction analysis for European option valuation with the mixed Monte Carlo/PDE method under the four-factor FX model. However, the theory extends naturally to general interest rate dynamics. We use standard Monte Carlo with the log-Euler discretization as the reference method and define $\hat{X}$ and $\hat{S}$ to be the time-continuous approximations of $x$, defined in \eqref{eq3.42}, and $S$, defined in \eqref{eq2.1}, respectively. Then
\begin{equation}\label{eq4.1}
\hat{X}_{t} = \hat{X}_{t_{n}} + \Big(\oversymb{r}^{d}_{t_{n}} - \oversymb{r}^{f}_{t_{n}} - \frac{1}{2}\hspace{1pt}\oversymb{V}_{\hspace{-2.5pt}t_{n}}\Big)\big(t-t_{n}\big) + \sqrt{\oversymb{V}_{\hspace{-2.5pt}t_{n}}}\,\Delta W^{s}_{t},
\end{equation}
where $\Delta W^{s}_{t} = W^{s}_{t}-W^{s}_{t_{n}}$ whenever $t\in[t_{n},t_{n+1})$. Integrating \eqref{eq4.1} leads to
\begin{equation}\label{eq4.2}
\hat{S}_{t} = S_{0}\exp\bigg\{\int_{0}^{t}{\Big(\oversymb{r}^{d}_{u}-\oversymb{r}^{f}_{u}-\frac{1}{2}\hspace{1pt}\oversymb{V}_{\hspace{-2.5pt}u}\Big)du} + \int_{0}^{t}{\sqrt{\oversymb{V}_{\hspace{-2.5pt}u}}\,dW^{s}_{u}}\bigg\}.
\end{equation}
We prefer the log-Euler scheme to the standard Euler scheme because it preserves positivity. Moreover, if the processes $v$, $r^{d}$, $r^{f}$ are constant, then the first scheme is exact. Recall that
\begin{equation*}
\hspace{1.5pt}\oversymb{\hspace{-1.5pt}S}_{t} = S_{0}\exp\bigg\{\int_{0}^{t}{\Big(\oversymb{r}^{d}_{u}-\oversymb{r}^{f}_{u}-\frac{1}{2}\hspace{1pt}\oversymb{V}_{\hspace{-2.5pt}u}\Big)du} + a_{11}\int_{0}^{t}{\sqrt{\oversymb{V}_{\hspace{-2.5pt}u}}\,dW^{1}_{u}} + \sum_{j=2}^{4}{a_{1\hspace{-0.5pt}j}\int_{0}^{t}{\sqrt{\oversymb{V}_{\hspace{-2.5pt}u}}\,\frac{\delta W^{j}_{u}}{\delta t}\,du}}\bigg\}.
\end{equation*}
Since $\oversymb{V}$ is piecewise constant, we deduce that $\hat{S}_{t_{n}}=\hspace{1.5pt}\oversymb{\hspace{-1.5pt}S}_{t_{n}}$, $\forall\hspace{1pt} 0\leq n\leq N$. The quantity that we want to estimate is the fair price of a European option with payoff function $f$, i.e.,
\begin{equation}\label{eq4.3}
\Theta = \E\Big[e^{-\int_{0}^{T}{r^{d}_{t} dt}}f(S_{T})\Big].
\end{equation}
Then the corresponding standard and mixed Monte Carlo estimators are
\begin{align}
\Theta_{\text{stdMC}} &= \frac{1}{M}\sum_{j=1}^{M}{\hspace{1pt}e^{-\int_{0}^{T}{\oversymb{r}^{d}_{t}\scalebox{0.65}{$(j)$}\hspace{.5pt}dt}}f(\hat{S}_{T}\scalebox{0.85}{$(j)$})}, \label{eq4.4} \\[1pt]
\Theta_{\text{mixMC}} &= \frac{1}{M}\sum_{j=1}^{M}{\hspace{1pt}\E\left[e^{-\int_{0}^{T}{\oversymb{r}^{d}_{t}\scalebox{0.65}{$(j)$}\hspace{.5pt}dt}}f(\hspace{1.5pt}\oversymb{\hspace{-1.5pt}S}_{T}\scalebox{0.85}{$(j)$})\hspace{1pt}\big|\hspace{2pt}\mathcal{G}_{T}^{f,d,v}\right]}. \label{eq4.5}
\end{align}
Define
\begin{equation}\label{eq4.6}
\text{Var}_\text{stdMC} = \text{Var}\big(\Theta_{\text{stdMC}}\big) = \frac{1}{M}\hspace{1pt}\text{Var}\left(e^{-\int_{0}^{T}{\oversymb{r}^{d}_{t}dt}}f(\hat{S}_{T})\right)
\end{equation}
and
\begin{equation}\label{eq4.7}
\text{Var}_\text{mixMC} = \text{Var}\big(\Theta_{\text{mixMC}}\big) = \frac{1}{M}\hspace{1pt}\text{Var}\left(\E\left[e^{-\int_{0}^{T}{\oversymb{r}^{d}_{t}dt}}f(\hspace{1.5pt}\oversymb{\hspace{-1.5pt}S}_{T})\hspace{1pt}\big|\hspace{2pt}\mathcal{G}_{T}^{f,d,v}\right]\right).
\end{equation}
Let the variance reduction factor \citep[as in][]{Dang:2015} and the standard deviation ratio be
\begin{equation}\label{eq4.8}
\Gamma_{\text{var}} = \frac{\text{Var}_\text{stdMC}}{\text{Var}_\text{mixMC}} \hspace{4pt}\text{ and }\hspace{4pt} \Gamma_{\text{dev}} = \sqrt{\Gamma_{\text{var}}}\hspace{1pt}.
\end{equation}
For convenience, we also define the discount factors, $D=e^{-\int_{0}^{T}{r^{d}_{t}dt}}$ and $\hspace{1.5pt}\oversymb{\hspace{-1.5pt}D\hspace{-.5pt}}\hspace{.5pt}=e^{-\int_{0}^{T}{\oversymb{r}^{d}_{t}dt}}$.

\begin{remark}\label{Rem4.1}
From the ``tower property'' of conditional expectations, since $\hat{S}_{T}=\hspace{1.5pt}\oversymb{\hspace{-1.5pt}S}_{T}$,
\begin{equation*}
\E\big[\Theta_{\text{stdMC}}\big] = \E\big[\hspace{1.5pt}\oversymb{\hspace{-1.5pt}D\hspace{-.5pt}}\hspace{.5pt}f(\hat{S}_{T})\big]
= \E\big[\hspace{1.5pt}\oversymb{\hspace{-1.5pt}D\hspace{-.5pt}}\hspace{.5pt}f(\hspace{1.5pt}\oversymb{\hspace{-1.5pt}S}_{T})\big]
= \E\big[\E\big[\hspace{1.5pt}\oversymb{\hspace{-1.5pt}D\hspace{-.5pt}}\hspace{.5pt}f(\hspace{1.5pt}\oversymb{\hspace{-1.5pt}S}_{T})\hspace{1pt}|\hspace{2pt}\mathcal{G}_{T}^{f,d,v}\big]\big]
= \E\big[\Theta_{\text{mixMC}}\big].
\end{equation*}
Therefore, the standard and the mixed Monte Carlo estimators have the same discretization bias, i.e.,
\begin{equation}\label{eq4.9}
\text{Bias}\big(\Theta_{\text{stdMC}}\big) = \text{Bias}\big(\Theta_{\text{mixMC}}\big).
\end{equation}
\end{remark}

\begin{remark}\label{Rem4.2}
From the law of total variance we know that
\begin{equation} \label{eq4.10}
\text{Var}\big(\hspace{1.5pt}\oversymb{\hspace{-1.5pt}D\hspace{-.5pt}}\hspace{.5pt}f(\hspace{1.5pt}\oversymb{\hspace{-1.5pt}S}_{T})\big)
= \text{Var}\big(\E\big[\hspace{1.5pt}\oversymb{\hspace{-1.5pt}D\hspace{-.5pt}}\hspace{.5pt}f(\hspace{1.5pt}\oversymb{\hspace{-1.5pt}S}_{T})\hspace{1pt}|\hspace{2pt}\mathcal{G}_{T}^{f,d,v}\big]\big) + \E\big[\text{Var}\big(\hspace{1.5pt}\oversymb{\hspace{-1.5pt}D\hspace{-.5pt}}\hspace{.5pt}f(\hspace{1.5pt}\oversymb{\hspace{-1.5pt}S}_{T})\hspace{1pt}|\hspace{2pt}\mathcal{G}_{T}^{f,d,v}\big)\big].
\end{equation}
However, since $\hat{S}_{T}=\hspace{1.5pt}\oversymb{\hspace{-1.5pt}S}_{T}$ and the variance is non-negative, we deduce from \eqref{eq4.6} -- \eqref{eq4.7} that the variance of the standard Monte Carlo estimator is greater than or equal to the variance of the mixed estimator, i.e.,
\begin{equation}\label{eq4.11}
\text{Var}_\text{stdMC} \geq \text{Var}_\text{mixMC}\hspace{.5pt}.
\end{equation}
Assuming a non-trivial payoff function $f$, equality occurs in \eqref{eq4.11} if and only if the second expectation on the right-hand side of \eqref{eq4.10} is zero, i.e., if and only if $\hspace{1.5pt}\oversymb{\hspace{-1.5pt}S}_{T}$ is $\mathcal{G}_{T}^{f,d,v}$-measurable. Since $\Sigma=AA^{T}$ and after some straightforward calculations, we find an equivalent condition:
\begin{align*}
a_{11}=0 \hspace{2pt}\Leftrightarrow\hspace{1pt} 1 &- \rho_{v\hspace{.2pt}d}^{2} - \rho_{v\hspace{-.7pt}f}^{2} - \rho_{d\hspace{.01pt}f}^{2} + 2\rho_{v\hspace{.2pt}d}\rho_{v\hspace{-.7pt}f}\rho_{d\hspace{.01pt}f}
= \rho_{sv}^{2}\big(1-\rho_{d\hspace{.01pt}f}^{2}\big) + \rho_{sd}^{2}\big(1-\rho_{v\hspace{-.7pt}f}^{2}\big) + \rho_{s\hspace{-.7pt}f}^{2}\big(1-\rho_{v\hspace{.2pt}d}^{2}\big) \\[4pt]
&+ 2\rho_{sv}\rho_{sd}\big(\rho_{v\hspace{-.7pt}f}\rho_{d\hspace{.01pt}f}-\rho_{v\hspace{.2pt}d}\big) + 2\rho_{sv}\rho_{s\hspace{-.7pt}f}\big(\rho_{v\hspace{.2pt}d}\rho_{d\hspace{.01pt}f}-\rho_{v\hspace{-.7pt}f}\big) + 2\rho_{sd}\rho_{s\hspace{-.7pt}f}\big(\rho_{v\hspace{.2pt}d}\rho_{v\hspace{-.7pt}f}-\rho_{d\hspace{.01pt}f}\big).
\end{align*}
In particular, if the variance and the two interest rates are pairwise independent, then
\begin{equation*}
a_{11}=0 \hspace{2pt}\Leftrightarrow\hspace{1pt} \rho_{sv}^{2} + \rho_{sd}^{2} + \rho_{s\hspace{-.7pt}f}^{2} = 1.
\end{equation*}
Therefore, apart from this case, combining conditioning with Monte Carlo always reduces the variance of estimates. This is, however, to be expected since we eliminate the additional noise that comes from simulating the Brownian motion $W^{1}$.
\end{remark}

\begin{remark}\label{Rem4.3}
Note that for any $2\leq j\leq4$, using the It\^o isometry, we have
\begin{equation}\label{eq4.11.1}
\text{Var}\bigg(\int_{0}^{T}{\hspace{-1pt}\sqrt{\oversymb{V}_{\hspace{-2.5pt}t}}\hspace{1.5pt}dW^{j}_{t}}\bigg) = \E\bigg[\int_{0}^{T}{\oversymb{V}_{\hspace{-2.5pt}t}\hspace{1pt}dt}\bigg].
\end{equation}
Moreover, using Cauchy's and H\"older's inequalities, Fubini's theorem, and Remark 3.2 and Propositions 3.4 and 3.5 in \citet{Cozma:2015}, one can easily prove that
\begin{equation}\label{eq4.11.2}
\lim_{\delta t\to0}\hspace{1.5pt}\frac{\E\left[\int_{0}^{T}{\oversymb{V}_{\hspace{-2.5pt}t}\hspace{1pt}dt}\right]}{\text{Var}\left(\int_{0}^{T}{\oversymb{V}_{\hspace{-2.5pt}t}\hspace{1pt}dt}\right)} \hspace{1pt}=\hspace{1pt}
\frac{\E\left[\int_{0}^{T}{v_{t}\hspace{.5pt}dt}\right]}{\text{Var}\left(\int_{0}^{T}{v_{t}\hspace{.5pt}dt}\right)}\hspace{1pt}.
\end{equation}
From \citet{Dufresne:2001}, we can compute the first two moments of the integrated square root process explicitly. Hence, we find
\begin{equation}\label{eq4.11.3}
\E\bigg[\int_{0}^{T}{v_{t}\hspace{.5pt}dt}\bigg] = \theta T + \frac{v_{0}}{k} - \frac{\theta}{k} + e^{-kT}\left(\frac{\theta}{k} - \frac{v_{0}}{k}\right)
\end{equation}
and
\begin{align}\label{eq4.11.4}
\text{Var}\bigg(\int_{0}^{T}{v_{t}\hspace{.5pt}dt}\bigg) &= \frac{\xi^{2}}{k^{2}}\left[\theta T + \frac{v_{0}}{k} - \frac{5\theta}{2k} + 2e^{-kT}\left(\frac{\theta}{k} + \theta T - v_{0}T\right) + e^{-2kT}\left(\frac{\theta}{2k} - \frac{v_{0}}{k}\right)\right] \nonumber\\[3pt]
&< \frac{\xi^{2}}{k^{2}}\big(1+e^{-kT}\big)\E\bigg[\int_{0}^{T}{v_{t}\hspace{.5pt}dt}\bigg] + \frac{\xi^{2}}{k^{2}}\left[\frac{\theta}{2k}\hspace{1pt}e^{-kT}\left(4 + 2kT - e^{-kT} - 3e^{kT}\right)\right] \nonumber\\[3pt]
&< \frac{\xi^{2}}{k^{2}}\big(1+e^{-kT}\big)\E\bigg[\int_{0}^{T}{v_{t}\hspace{.5pt}dt}\bigg].
\end{align}
Combining \eqref{eq4.11.1} -- \eqref{eq4.11.4}, we deduce that for sufficiently small values of $\delta t$,
\begin{equation}\label{eq4.11.5}
\frac{\text{Var}\left(\int_{0}^{T}{\hspace{-1pt}\sqrt{\oversymb{V}_{\hspace{-2.5pt}t}}\hspace{1.5pt}dW^{j}_{t}}\right)}{\text{Var}\left(\int_{0}^{T}{\oversymb{V}_{\hspace{-2.5pt}t}\hspace{1pt}dt}\right)} > \frac{k^{2}}{\xi^{2}\left(1+e^{-kT}\right)} > \frac{k^{2}}{2\xi^{2}}\hspace{1pt}.
\end{equation}
The data in Tables \ref{table:1} and \ref{table:2} suggest that the interest rates have little impact on the variance of the mixed Monte Carlo estimator, and also that $k\gg\xi$ in both FX and equity markets. Hence, the stochastic integral on the left-hand side of \eqref{eq4.11.5} -- part of the dividend yield $q$ defined in \eqref{eq2.9} -- contributes to the overall variance of the mixed estimator considerably more than the squared volatility $\sigma^{2}$ defined in \eqref{eq2.9}. Therefore, we expect the minimum variance of the mixed estimator to be attained when all but the first term in $q$ vanish, i.e., when $a_{11}=1$ and $a_{12}=a_{13}=a_{14}=0$.
\end{remark}

In practice, the volatility of interest rates in the CIR model is very small, i.e., $\xi_{d,f}\ll1$, a fact which can be observed in Table \ref{table:2}. Moreover, the volatility of volatility in the Heston model -- calibrated to FX market data -- is also small, i.e., $\xi\ll1$, and is significantly smaller than the rate of mean reversion, i.e., $\xi\ll k$, a fact clearly illustrated in Table \ref{table:1}. Hence, the drift term in the square root model for the variance dominates the diffusion term. Therefore, we assume in the subsequent analysis ``almost deterministic'' dynamics of the variance and interest rates. Let $\hspace{.5pt}\oversymb{\hspace{-.5pt}\gamma}$, $\hspace{.5pt}\oversymb{\hspace{-.5pt}\gamma}^{d}$ and $\hspace{.5pt}\oversymb{\hspace{-.5pt}\gamma}^{f}$ be the FTE discretizations of $v$, $r^{d}$ and $r^{f}$ corresponding to $\xi=\xi_{d}=\xi_{f}=0$, i.e., when the volatility of volatility parameters are equal to zero. Then
\begin{equation*}
\int_{0}^{T}{\hspace{-1.5pt}\oversymb{r}^{d}_{t}dt} \approx \int_{0}^{T}{\hspace{-1.5pt}\oversymb{\hspace{-.5pt}\gamma}^{d}_{t}dt}\hspace{.5pt},\hspace{1pt}
\int_{0}^{T}{\hspace{-1.5pt}\oversymb{r}^{f}_{t}dt} \approx \int_{0}^{T}{\hspace{-1.5pt}\oversymb{\hspace{-.5pt}\gamma}^{f}_{t}dt}\hspace{.5pt},\hspace{1pt}
\int_{0}^{T}{\hspace{-1.5pt}\oversymb{V}_{\hspace{-2.5pt}t}\hspace{.5pt}dt} \approx \int_{0}^{T}{\hspace{-1.5pt}\oversymb{\hspace{-.5pt}\gamma}_{t}\hspace{.5pt}dt}\hspace{.5pt},\hspace{1pt}
\int_{0}^{T}{\hspace{-2.5pt}\sqrt{\oversymb{V}_{\hspace{-2.5pt}t}}\hspace{1.5pt}dW^{s}_{t}} \approx \int_{0}^{T}{\hspace{-2.5pt}\sqrt{\hspace{.5pt}\oversymb{\hspace{-.5pt}\gamma}_{t}}\hspace{1.5pt}dW^{s}_{t}}.
\end{equation*}

\begin{remark}\label{Rem4.4}
Suppose that $W^{s}$ is independent of the Brownian motions $W^{v}$, $W^{d}$ and $W^{f}$, i.e., that $a_{11}=1$ and $a_{12}=a_{13}=a_{14}=0$. Employing \eqref{eq4.7} as well as the above assumption on the dynamics of the variance and the domestic and foreign interest rates,
\begin{equation*}
\text{Var}_\text{mixMC} = \frac{1}{M}\hspace{1pt}\text{Var}\left(e^{-\int_{0}^{T}{\oversymb{r}^{d}_{t}dt}}\E\big[f(\hspace{1.5pt}\oversymb{\hspace{-1.5pt}S}_{T})\hspace{1pt}|\hspace{2pt}\mathcal{G}_{T}^{f,d,v}\big]\right)
\approx \frac{1}{M}\hspace{1pt}e^{-2\int_{0}^{T}{\hspace{.5pt}\oversymb{\hspace{-.5pt}\gamma}^{d}_{t}dt}}\hspace{2pt}\text{Var}\big(\E\big[f(\hspace{1.5pt}\oversymb{\hspace{-1.5pt}S}_{T})\big]\big) = 0.
\end{equation*}
On the other hand, since $\hat{S}_{T}=\hspace{1.5pt}\oversymb{\hspace{-1.5pt}S}_{T}$, we know from \eqref{eq4.6} that
\begin{equation*}
\text{Var}_\text{stdMC} = \frac{1}{M}\hspace{1pt}\text{Var}\left(e^{-\int_{0}^{T}{\oversymb{r}^{d}_{t}dt}}f(\hspace{1.5pt}\oversymb{\hspace{-1.5pt}S}_{T})\right)
\approx \frac{1}{M}\hspace{1pt}e^{-2\int_{0}^{T}{\hspace{.5pt}\oversymb{\hspace{-.5pt}\gamma}^{d}_{t}dt}}\hspace{2pt}\text{Var}\big(f(\hspace{1.5pt}\oversymb{\hspace{-1.5pt}S}_{T})\big).
\end{equation*}
Assuming a non-trivial payoff $f$ as well as a non-zero squared volatility $v$, since the variance of the mixed estimator is close to zero, this results in a substantial variance reduction.
\end{remark}

\begin{remark}\label{Rem4.6}
Let $f$ be the European call option payoff and suppose that the dynamics of the variance and interest rates are ``almost deterministic'' (i.e., their own volatility parameters are small compared to their mean-reversion speed). Then we can approximate the discounted payoff as follows:
\begin{align}\label{eq4.14}
P &= e^{-\int_{0}^{T}{\oversymb{r}^{d}_{t}dt}}\bigg(S_{0}\exp\bigg\{\int_{0}^{T}{\Big(\oversymb{r}^{d}_{t}-\oversymb{r}^{f}_{t}-\frac{1}{2}\hspace{1pt}\oversymb{V}_{\hspace{-2.5pt}t}\Big)dt} + \int_{0}^{T}{\sqrt{\oversymb{V}_{\hspace{-2.5pt}t}}\,dW^{s}_{t}}\bigg\}-K\bigg)^{+} \nonumber\\[2pt]
&\approx e^{-\int_{0}^{T}{\hspace{.5pt}\oversymb{\hspace{-.5pt}\gamma}^{d}_{t}dt}}\bigg(S_{0}\exp\bigg\{\int_{0}^{T}{\Big(\hspace{.5pt}\oversymb{\hspace{-.5pt}\gamma}^{d}_{t}-\hspace{.5pt}\oversymb{\hspace{-.5pt}\gamma}^{f}_{t}-\frac{1}{2}\hspace{1.5pt}\oversymb{\hspace{-.5pt}\gamma}_{t}\Big)dt} + \int_{0}^{T}{\hspace{-1pt}\sqrt{\hspace{.5pt}\oversymb{\hspace{-.5pt}\gamma}_{t}}\,dW^{s}_{t}}\bigg\}-K\bigg)^{+}.
\end{align}
For now, assume that $a_{11}>0$. For convenience, define the quantities
\begin{equation}\label{eq4.21}
\varrho = \sqrt{1-a_{11}^{2}}\hspace{1pt},\hspace{4pt}
\tilde{\sigma} = \sqrt{\int_{0}^{T}{\hspace{.5pt}\oversymb{\hspace{-.5pt}\gamma}_{t}\hspace{.5pt}dt}}\hspace{1pt},\hspace{4pt}
\tilde{D} = e^{-\int_{0}^{T}{\hspace{.5pt}\oversymb{\hspace{-.5pt}\gamma}^{d}_{t}dt}}\hspace{1pt},\hspace{4pt}
F = S_{0}\exp\bigg\{\int_{0}^{T}{\big(\hspace{.5pt}\oversymb{\hspace{-.5pt}\gamma}^{d}_{t}-\hspace{.5pt}\oversymb{\hspace{-.5pt}\gamma}^{f}_{t}\big)dt}\bigg\}\hspace{.5pt},
\end{equation}
as well as
\begin{equation}\label{eq4.22}
a = \frac{\varrho}{\sqrt{1-\varrho^{2}}} \hspace{4pt}\text{ and }\hspace{4pt}
b = \frac{\log(F/K) + (0.5+\varrho^{2})\tilde{\sigma}^{2}}{\sqrt{1-\varrho^{2}}\hspace{1pt}\tilde{\sigma}}\hspace{1pt}.
\end{equation}
We use \eqref{eq4.14}, the conditional option price formula in \eqref{eq2.11} and differentiate the variance of the mixed Monte Carlo estimator with respect to the parameter $\varrho$ defined in \eqref{eq4.21} to find, using lengthy integration by parts,
\begin{equation}\label{eq4.23}
\frac{\partial}{\partial\varrho}\hspace{1pt}\text{Var}_{\text{mixMC}} \approx \frac{2}{M}\hspace{1pt}\tilde{D}^{2}F^{2}\varrho\tilde{\sigma}^{2}e^{\varrho^{2}\tilde{\sigma}^{2}}\E\big[\Phi(aZ+b)^{2}\big],
\end{equation}
where $\Phi$ is the standard normal CDF and $Z\sim\mathcal{N}(0,1)$. Note that if $a_{11}<1$, i.e., if $\varrho>0$, the right-hand side of \eqref{eq4.23} is strictly positive. This implies that the variance of the mixed estimator decreases as $a_{11}$ increases, and attains its minimum when $a_{11}=1$. In fact, we know from Remark \ref{Rem4.4} that
\begin{equation}\label{eq4.24}
\text{Var}_{\text{mixMC}}(a_{11}=1) \approx 0.
\end{equation}
Moreover,
\begin{equation}\label{eq4.25}
\E\big[\Phi(aZ+b)^{2}\big] = \Phi\bigg(\frac{b}{\sqrt{1+a^{2}}}\bigg) - 2\mathrm{T}\bigg(\frac{b}{\sqrt{1+a^{2}}}\hspace{1pt},\hspace{1.5pt}\frac{1}{\sqrt{1+2a^{2}}}\bigg),
\end{equation}
where Owen's $\mathrm{T}$ function \citep{Owen:1980} is
\begin{equation}\label{eq4.26}
\mathrm{T}(\beta,\vartheta) = \phi(\beta)\int_{0}^{\vartheta}{\frac{\phi(\beta x)}{1+x^{2}}\hspace{1pt}dx}\hspace{.5pt}, \hspace{2pt}\text{ with }\hspace{2pt}
\mathrm{T}(\beta,1) = \frac{1}{2}\hspace{1pt}\Phi(\beta) - \frac{1}{2}\hspace{1pt}\Phi(\beta)^{2},
\end{equation}
and $\phi$ is the standard normal PDF. Note that
\begin{equation}\label{eq4.26.1}
\frac{b}{\sqrt{1+a^{2}}} = \underbrace{\frac{1}{\tilde{\sigma}}\log\frac{F}{K}}_{\equiv\beta_{1}} + \underbrace{\bigg(\frac{1}{2}+\varrho^{2}\bigg)\tilde{\sigma}}_{\equiv\beta_{2}(\varrho)} \hspace{4pt}\text{ and }\hspace{5pt}
\frac{1}{\sqrt{1+2a^{2}}} = \underbrace{\sqrt{\frac{1-\varrho^{2}}{1+\varrho^{2}}}}_{\equiv\vartheta(\varrho)}\hspace{1pt}.
\end{equation}
According to the data in Table \ref{table:1}, $\theta\ll1$ and so $\hspace{.5pt}\oversymb{\hspace{-.5pt}\gamma}\ll1$. Hence, for low maturities $T$ we have $\tilde{\sigma}\ll1$. From Remark \ref{Rem4.2}, using \eqref{eq4.24}, we know that the variance of the standard Monte Carlo estimator is obtained by integrating \eqref{eq4.23} over $[0,\hspace{-.5pt}1]$. Therefore,
\begin{equation}\label{eq4.26.2}
\Gamma_{\text{var}} \hspace{1pt}\approx\hspace{1pt} \frac{\int_{0}^{1}{\nu\big[\Phi\big(\beta_{1}+\beta_{2}(\nu)\big) - 2\mathrm{T}\big(\beta_{1}+\beta_{2}(\nu),\vartheta(\nu)\big)\big]d\nu}}{\int_{0}^{\varrho}{\nu\big[\Phi\big(\beta_{1}+\beta_{2}(\nu)\big) - 2\mathrm{T}\big(\beta_{1}+\beta_{2}(\nu),\vartheta(\nu)\big)\big]d\nu}}\hspace{1pt}.
\end{equation}
From \eqref{eq4.26} and \eqref{eq4.26.1}, we deduce that $\forall\hspace{1pt}\nu\in[0,\hspace{-.5pt}1]$,
\begin{equation}\label{eq4.26.3}
\Phi\big(\beta_{1}+0.5\hspace{.5pt}\tilde{\sigma}\big)^{2} \leq \Phi\big(\beta_{1}+\beta_{2}(\nu)\big) - 2\mathrm{T}\big(\beta_{1}+\beta_{2}(\nu),\vartheta(\nu)\big) \leq \Phi\big(\beta_{1}+1.5\hspace{.5pt}\tilde{\sigma}\big).
\end{equation}
Therefore,
\begin{equation}\label{eq4.26.4}
\frac{\Phi\big(\beta_{1}+0.5\hspace{.5pt}\tilde{\sigma}\big)^{2}}{\Phi\big(\beta_{1}+1.5\hspace{.5pt}\tilde{\sigma}\big)} \hspace{1pt}\cdot\hspace{1pt}\frac{1}{\varrho^{2}}
\hspace{1pt}\leq\hspace{1pt} \Gamma_{\text{var}} \hspace{1pt}\leq\hspace{1pt}
\frac{\Phi\big(\beta_{1}+1.5\hspace{.5pt}\tilde{\sigma}\big)}{\Phi\big(\beta_{1}+0.5\hspace{.5pt}\tilde{\sigma}\big)^{2}} \hspace{1pt}\cdot\hspace{1pt}\frac{1}{\varrho^{2}}\hspace{1pt}.
\end{equation}
The inequalities in \eqref{eq4.26.4} are approximate in the sense that the variance reduction factor is bounded from above and below by approximated quantities. Assuming that $\tilde{\sigma}\ll1$, \eqref{eq4.26.4} becomes
\begin{equation}\label{eq4.27}
\frac{\Phi\big(\beta_{1}\big)}{\varrho^{2}} \hspace{1pt}\leq\hspace{1pt} \Gamma_{\text{var}} \hspace{1pt}\leq\hspace{1pt} \frac{1}{\Phi\big(\beta_{1}\big)\varrho^{2}}\hspace{1pt}.
\end{equation}
Moreover, if we also assume that the option is deep in-the-money, then $\beta_{1}$ is relatively large (e.g., $\beta_{1}\geq1$) and so $\sqrt{\Phi(\beta_{1})}\approx1$. Hence, we conclude that
\begin{equation}\label{eq4.28}
\Gamma_{\text{dev}} \approx \big(1-a_{11}^{2}\big)^{-\frac{1}{2}},\hspace{3pt} \forall\hspace{.5pt} a_{11}>0.
\end{equation}
However, we know from Remark \ref{Rem4.2} that the standard deviation ratio is one when $a_{11}=0$. Hence, \eqref{eq4.26.4} -- \eqref{eq4.28} hold for all values of $a_{11}$. Interestingly, the deep in-the-money case is also the one where the payoff can be treated as smooth, and this also explains Figure \ref{fig:4}. Note that Remark \ref{Rem4.6} is consistent with Remarks \ref{Rem4.2} -- \ref{Rem4.4}.
\end{remark}

\section{Numerical results}\label{sec:numerics}

In this section, we test the efficiency of the mixed Monte Carlo/PDE method by comparison with alternative numerical schemes for two derivatives: a European call and an up-and-out put option. For the former, we use an analytical formula \eqref{eq2.11} to compute the conditional option price and benchmark against standard Monte Carlo with a log-Euler discretization and, under a simple correlation structure, against the semi-analytical formula of \citet{Ahlip:2013}. For the latter, we use either finite differences or the perturbative formula of \citet{Fatone:2008} for the conditional option price, and Monte Carlo -- with or without Brownian bridge -- as the reference method.

First, we demonstrate the convergence of the mixed Monte Carlo/PDE method and find empirical convergence rates. Then, we investigate the sensitivity of the variance reduction factor \eqref{eq4.8} to changes in the model parameters and link the numerical results to the analysis in Section \ref{sec:variance}. Our platform for the numerical implementation was MATLAB 2012a. The machine configuration on which all numerical tests were conducted is: Intel(R) Core(TM) i3 CPU, M370, 2.40 GHz, Memory (RAM): 8.00 GB, 64-bit Operating System running Windows 7 Professional.

Throughout this section, we assign the following values to the underlying model parameters as a base case, and vary a selection individually or jointly: $v_{0}=0.0275$, $r_{0}^{d}=0.0524$, $r_{0}^{f}=0.0291$, $k=1.70$, $k_{d}=0.20$, $k_{f}=0.32$, $\theta=0.0232$, $\theta_{d}=0.0475$, $\theta_{f}=0.0248$, $\xi=0.1500$, $\xi_{d}=0.0352$, $\xi_{f}=0.0317$, $\rho_{sv}=-0.10$, $\rho_{sd}=-0.15$, $\rho_{s\hspace{-.7pt}f}=-0.15$, $\rho_{v\hspace{.2pt}d}=0.12$, $\rho_{v\hspace{-.7pt}f}=0.05$ and $\rho_{d\hspace{.01pt}f}=0.25$. These values are consistent with empirical observations in FX markets and are close to the calibrated values in Tables \ref{table:1} and \ref{table:2}. Also, the values of the correlations $\rho_{sd}$, $\rho_{s\hspace{-.7pt}f}$ and $\rho_{d\hspace{.01pt}f}$ are borrowed from \citet{Piterbarg:2006}.

\subsection{European call option}\label{subsec:european}

Let the spot, the strike and the maturity be: $S_{0}=105$, $K=100$ and $T=1.5$. On a side note, FX option quotes are in terms of volatilities for a fixed delta and a fixed time to expiry, and not in terms of strikes. However, the strike price corresponding to a quoted volatility can easily be recovered using a conversion formula \citep{Ahlip:2013}. Recall that $\Theta$ from \eqref{eq4.3} is the true option price, whereas $\Theta_{\text{stdMC}}$ from \eqref{eq4.4} and $\Theta_{\text{mixMC}}$ from \eqref{eq4.5} are the standard and the mixed Monte Carlo estimators, respectively. In Table \ref{table3}, we report the common discretization bias, i.e., the time-discretization error, which is the same for both estimators by Remark \ref{Rem4.1}, and the two standard errors, i.e., the two standard deviations of the sample means, which we denote by StDev and estimate using $10\hspace{.5pt}000$ samples. However, according to \citet{Ahlip:2013}, a closed-form solution for the European option price is not available under a full correlation structure. Hence, we need to find an accurate reference estimate $\Theta^{*}$ in order to evaluate the bias. We employ the mixed algorithm with $M=2\scalebox{0.85}{$\times$}10^{9}$ simulations and $N=200$ time steps to find: $\Theta^{*}=12.11968$. Then, the bias for a specific number of time steps is estimated using the reference estimate $\Theta^{*}$, whose accuracy is discussed below, and a sufficiently large number of simulations, i.e., $M=2\scalebox{0.85}{$\times$}10^{9}$.

The time needed to obtain a call price estimate with $64\hspace{.5pt}000$ simulations and $8$ time steps is $0.25$ seconds for the standard Monte Carlo method and $0.22$ seconds for the mixed Monte Carlo method. Hence, the computational cost is $12\%$ lower with the latter. This, however, is to be expected since we only simulate three of the four underlying processes.
\begin{table}[htb]\renewcommand{\arraystretch}{1.10}\addtolength{\tabcolsep}{0pt}\small
\begin{center}
\caption{Simulation results for a European call option.}\label{table3}
\begin{tabular}{ c c c c c }
  \toprule[.1em]
  Time steps & Discretization bias & Simulations & StDev (stdMC) & StDev (mixMC) \\
  \hline
	\addlinespace[4pt]
  $1$ & $0.37231$ & $\hspace{10.2pt}1\hspace{.3pt}000$ & $0.50348$ & $0.09342$ \\
  $2$ & $0.08039$ & $\hspace{10pt}4\hspace{.5pt}000$ & $0.24126$ & $0.04194$ \\
	$4$ & $0.01775$ & $\hspace{5pt}16\hspace{.5pt}000$ & $0.12042$ & $0.02023$ \\
	$8$ & $0.00444$ & $\hspace{5pt}64\hspace{.5pt}000$ & $0.06071$ & $0.00994$ \\
	$\hspace{-5pt}16$ & $0.00160$ & $\hspace{0pt}256\hspace{.5pt}000$ & $0.02932$ & $0.00487$ \\
	$\hspace{-5pt}32$ & $0.00073$ & $\hspace{-5.2pt}1\hspace{.3pt}024\hspace{.5pt}000$ & $0.01507$ & $0.00262$ \\
	\bottomrule[.1em]
\end{tabular}
\end{center}
\end{table}

The data in Table \ref{table3} indicate that the bias is small even when only a few time steps are used. This phenomenon can be explained by the fact that the Feller condition is satisfied, i.e., $2k\theta>\xi^{2}$, $2k_{d}\theta_{d}>\xi_{d}^{2}$ and $2k_{f}\theta_{f}>\xi_{f}^{2}$, and hence the discretizations of the variance and interest rate processes rarely hit zero. The simulation results in Table \ref{table3} confirm the square-root convergence of the statistical error and the first-order convergence of the discretization error. Then, using extrapolation, we obtain an approximate root mean square error (RMSE) of the reference estimate:
\begin{equation*}
\text{Bias}\left(\Theta^{*}\right) \approx 1.168\scalebox{0.85}{$\times$}10^{-4},\hspace{4pt} \text{StDev}\left(\Theta^{*}\right) \approx 0.593\scalebox{0.85}{$\times$}10^{-4} \hspace{4pt}\Rightarrow\hspace{4pt} \text{RMSE}\left(\Theta^{*}\right) \approx 1.310\scalebox{0.85}{$\times$}10^{-4}.
\end{equation*}
This is equivalent to an RMSE of about $0.001\%$ of the actual option price, suggesting that the reference estimate $\Theta^{*}=12.11968$ is accurate to three decimal places.

Furthermore, the standard deviation is reduced by $83\%$, i.e., is approximately six times lower with the mixed method. However, the variance reduction is not consistent across the range of possible values of the model parameters, and is most sensitive to changes in the correlations between the exchange rate and the squared volatility or the interest rates, i.e., $\rho_{sv}$, $\rho_{sd}$ and $\rho_{s\hspace{-.7pt}f}$, the speed of mean reversion $k$ and the volatility of volatility $\xi$.

The values $\rho_{v\hspace{.2pt}d}=0.12$, $\rho_{v\hspace{-.7pt}f}=0.05$, $\rho_{d\hspace{.01pt}f}=0.25$ and $\rho_{sv}$, $\rho_{sd}$, $\rho_{s\hspace{-.7pt}f} \in [-0.5,0.5]$ lead to a valid (i.e., positive definite) correlation matrix. The data in Figure \ref{fig:1} suggest that the highest variance reduction is achieved when the absolute values of the correlations $\rho_{sv}$, $\rho_{sd}$ and $\rho_{s\hspace{-.7pt}f}$ are small or, more precisely, when $\rho_{sv}\approx-0.05$ and $\rho_{sd}\approx\rho_{s\hspace{-.7pt}f}\approx0$, in which case the standard deviation is reduced by a factor of $20$. Hence, the best performance of the mixed estimator coincides with the independence of $W^{s}$ from the Brownian motions $W^{v}$, $W^{d}$ and $W^{f}$, an observation which is consistent with Remarks \ref{Rem4.3} and \ref{Rem4.4}.
\begin{figure}[htb]
\begin{center}
\includegraphics[width=0.95\columnwidth,keepaspectratio]{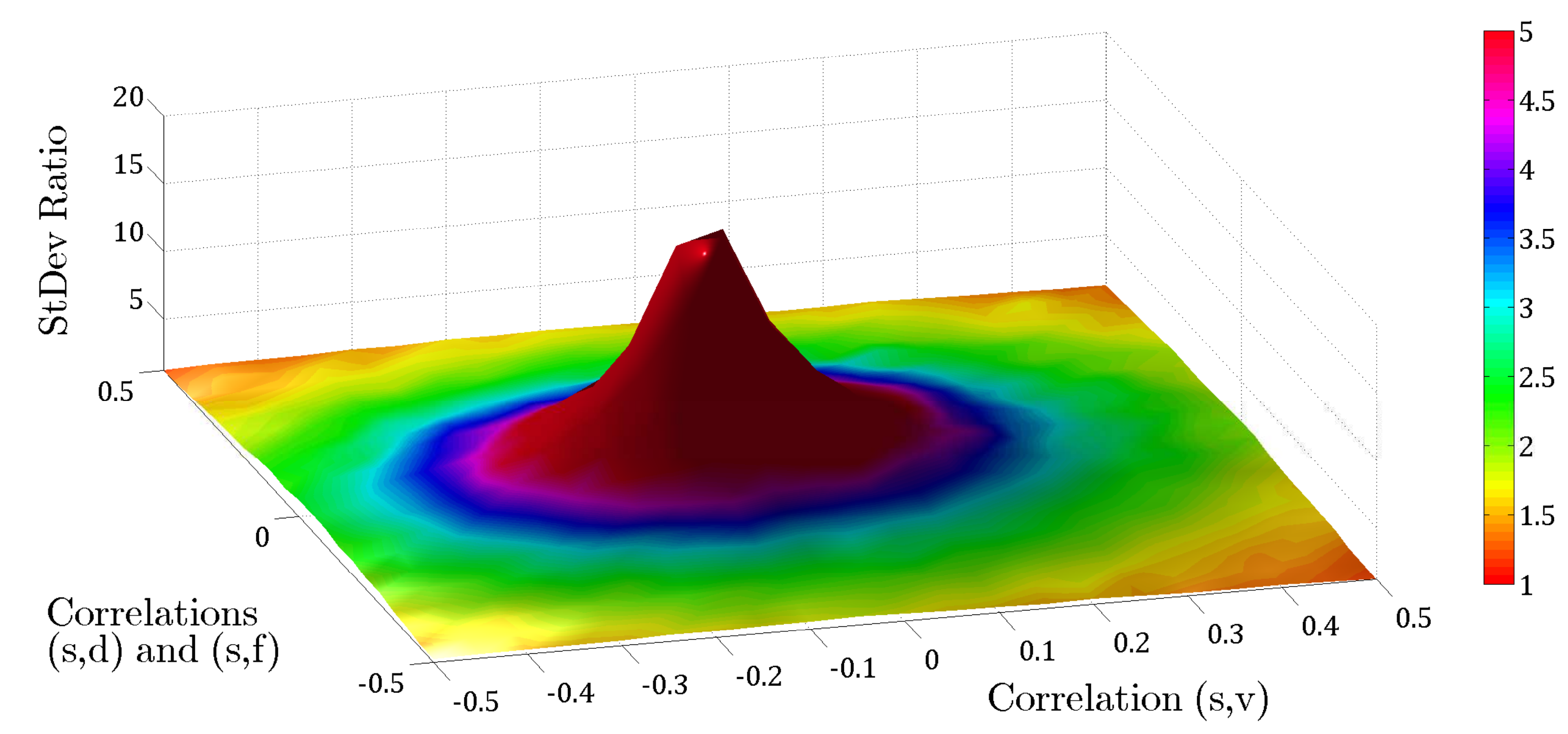}
\caption{The standard deviation ratio $\Gamma_{\text{dev}}$ from \eqref{eq4.8} with $4000$ simulations and $10$ time steps, plotted against the correlation coefficients $\rho_{sv}$, $\rho_{sd}$ and $\rho_{s\hspace{-.7pt}f}$ when the last two are equal.}
\label{fig:1}
\end{center}
\end{figure}

Furthermore, we infer from Figure \ref{fig:1} that, in addition to the lower computational cost, the mixed Monte Carlo/PDE method outperforms standard Monte Carlo in terms of accuracy for all $\rho_{sv}$, $\rho_{sd}$, $\rho_{s\hspace{-.7pt}f} \in [-0.5,0.5]$, and that the variance reduction factor decreases as the absolute values of the correlations between the exchange rate and the squared volatility or the interest rates increase, two facts that were recognized in Remark \ref{Rem4.2}.

Assuming that $\rho_{sd}=\rho_{s\hspace{-.7pt}f}$, since $\rho_{v\hspace{.2pt}d}\approx\rho_{v\hspace{-.7pt}f}\approx0$, we may approximate
\begin{equation}\label{eq5.1}
a_{11} \hspace{1pt}\approx\hspace{1pt} \sqrt{1-\rho_{sv}^{2}-\frac{2}{1+\rho_{d\hspace{.01pt}f}}\hspace{1pt}\rho_{sd}^{2}} \hspace{1pt}=\hspace{1pt} \sqrt{1-\rho_{sv}^{2}-\frac{8}{5}\hspace{1pt}\rho_{sd}^{2}}\hspace{1.5pt}.
\end{equation}
However,
\begin{equation}\label{eq5.2}
\Gamma_{\text{dev}} \hspace{1pt}\approx\hspace{1pt} \big(1-a_{11}^{2}\big)^{-\frac{1}{2}} \hspace{1pt}\approx\hspace{1pt} \big(\rho_{sv}^{2}+1.6\hspace{.5pt}\rho_{sd}^{2}\big)^{-\frac{1}{2}}
\end{equation}
from Remark \ref{Rem4.6}. Combined with the above observation on the location of the maximum variance reduction factor, we infer that the set of points $(\rho_{sv},\rho_{sd})$ corresponding to $\Gamma_{\text{dev}}=\mu$, for some $\mu\geq1$, takes approximately the form of an ellipse with the equation:
\begin{equation}\label{eq5.3}
\mu^{2}\big(\rho_{sv}+0.05\big)^{2}+1.6\hspace{1pt}\mu^{2}\rho_{sd}^{2} = 1.
\end{equation}
This confirms that the isolines in Figure \ref{fig:1} display an elliptic shape, a fact also illustrated in Table \ref{table4}, where the correlation pairs $(\rho_{sv},\rho_{sd})$ are chosen so that $\mu=2.812$. The estimated standard deviation ratio along the contour line is $3.148\pm0.239$, which is close to the theoretical value $\mu$. This attests to the accuracy of our approximations and numerical results.
\begin{table}[htb]\renewcommand{\arraystretch}{1.10}\addtolength{\tabcolsep}{0pt}\small
\begin{center}
\caption{The estimated standard deviation ratio for different correlations.}\label{table4}
\begin{tabular}{ c r r r r r r r r }
  \toprule[.1em]
	\addlinespace[4pt]
  $\rho_{sv}$ & $-0.40$ & $-0.40$ & $-0.30$ & $-0.30$ & $0.20$ & $0.20$ & $0.30$ & $0.30$ \\
	\addlinespace[1pt]
  $\rho_{sd}$ &	$-0.05$ & $0.05$ & $-0.20$ & $0.20$ & $-0.20$ & $0.20$ & $-0.05$ & $0.05$ \\
	\addlinespace[1pt]
  $\Gamma_{\text{dev}}$ &\hspace{4pt} $3.288$ &\hspace{4pt} $3.088$ &\hspace{4pt} $3.387$ &\hspace{4pt} $2.910$ &\hspace{4pt} $2.951$ &\hspace{4pt} $3.127$ &\hspace{4pt} $2.952$ &\hspace{4pt} $3.055$ \\
	\addlinespace[1pt]
	\bottomrule[.1em]
\end{tabular}
\end{center}
\end{table}

Next, we assume a partial correlation structure between the Brownian drivers such that the squared volatility dynamics are independent of the domestic and foreign interest rate dynamics, i.e., $\rho_{v\hspace{.2pt}d}=\rho_{v\hspace{-.7pt}f}=0$. Using Remark \ref{Rem4.6} and computing $a_{11}$ explicitly from \eqref{eq2.2},
\begin{equation}\label{eq5.4}
\Gamma_{\text{dev}} \hspace{1pt}\approx\hspace{1pt} \bigg[\rho_{sv}^{2}+\frac{1}{1-\rho_{d\hspace{.01pt}f}^{2}}\hspace{0pt}\big(\rho_{sd}^{2}+\rho_{s\hspace{-.7pt}f}^{2}-2\rho_{sd}\rho_{s\hspace{-.7pt}f}\rho_{d\hspace{.01pt}f}\big)\bigg]^{-\frac{1}{2}}.
\end{equation}
Before, we fixed $\rho_{d\hspace{.01pt}f}$ and analyzed the variance reduction with respect to $\rho_{sv}$, $\rho_{sd}$ and $\rho_{s\hspace{-.7pt}f}$, when the last two were equal. Now, we fix $\rho_{sv}$ instead and focus on the effect of varying~$\rho_{d\hspace{.01pt}f}$ on $\Gamma_{\text{dev}}$. To this end, one can easily show that
\begin{equation}\label{eq5.5}
\frac{\rho_{sd}^{2}+\rho_{s\hspace{-.7pt}f}^{2}-2\rho_{sd}\rho_{s\hspace{-.7pt}f}\rho_{d\hspace{.01pt}f}}{1-\rho_{d\hspace{.01pt}f}^{2}} \geq
\max\big\{\rho_{sd}^{2},\hspace{1pt} \rho_{s\hspace{-.7pt}f}^{2}\hspace{-.5pt}\big\}.
\end{equation}
Assuming that $\rho_{sd}$ and $\rho_{s\hspace{-.7pt}f}$ are not simultaneously zero, equality in \eqref{eq5.5} holds when
\begin{equation}\label{eq5.6}
\rho_{d\hspace{.01pt}f} = \rho_{d\hspace{.01pt}f}^{*} \equiv \frac{\rho_{s\hspace{-.7pt}f}}{\rho_{sd}}\Ind_{|\rho_{sd}|\geq|\rho_{s\hspace{-.7pt}f}|} \hspace{1pt}+\hspace{3pt} \frac{\rho_{sd}}{\rho_{s\hspace{-.7pt}f}}\Ind_{|\rho_{sd}|<|\rho_{s\hspace{-.7pt}f}|}.
\end{equation}
Upon its substitution into \eqref{eq5.4}, we find a theoretical standard deviation ratio $\mu$. Table \ref{table5} estimates $\Gamma_{\text{dev}}$ and compares it with $\mu$, when $\rho_{sv}=-0.10$, $\rho_{s\hspace{-.7pt}f}=-0.20$, $\rho_{sd}\in[-0.25,0.25]$ and $\rho_{d\hspace{.01pt}f}\in[-0.85,0.85]$. Note that these correlation values guarantee a symmetric positive definite correlation matrix. Hence, when $\rho_{d\hspace{.01pt}f}^{*}$ lies outside the interval of allowed values, we estimate $\Gamma_{\text{dev}}$ using $\rho_{d\hspace{.01pt}f}=\pm0.85$ instead. The data in Table \ref{table5} suggest that the approximation \eqref{eq5.4} to the standard deviation ratio is quite accurate, especially for a strongly negative correlation between the two interest rates.
\begin{table}[htb]\renewcommand{\arraystretch}{1.10}\addtolength{\tabcolsep}{-0.3pt}\small
\begin{center}
\caption{The empirical and theoretical standard deviation ratios for different correlations.}\label{table5}
\begin{tabular}{ c r r r r r r r r r r r }
  \toprule[.1em]
	\addlinespace[4pt]
  $\rho_{sd}$ & $-0.25$ & $-0.20$ & $-0.15$ & $-0.10$ & $-0.05$ & $0$ & $0.05$ & $0.10$ & $0.15$ & $0.20$ & $0.25$ \\
	\addlinespace[1pt]
  $\rho_{d\hspace{.01pt}f}^{*}$ &	$0.80$ & $1.00$ & $0.75$ & $0.50$ & $0.25$ & $0$ & $-0.25$ & $-0.50$ & $-0.75$ & $-1.00$ & $-0.80$ \\
	\addlinespace[1pt]
  $\Gamma_{\text{dev}}$ & $4.501$ & $5.301$ & $5.305$ & $5.224$ & $5.073$ & $5.010$ & $4.850$ & $4.867$ & $4.554$ & $4.467$ & $3.788$ \\
	\addlinespace[1pt]
  $\mu$ & $3.714$ & $4.472$ & $4.472$ & $4.472$ & $4.472$ & $4.472$ & $4.472$ & $4.472$ & $4.472$ & $4.472$ & $3.714$ \\
	\addlinespace[1pt]
	\bottomrule[.1em]
\end{tabular}
\end{center}
\end{table}

By close inspection of the data in Figure \ref{fig:2}, we infer that for each $\rho_{sd}\in[-0.25,0.25]$, the highest variance reduction is achieved for $\rho_{d\hspace{.01pt}f}^{*}$ as defined in \eqref{eq5.6}. Hence, we conclude that $\mu$ exhibits the qualitative behaviour of  $\Gamma_{\text{dev}}$, so that \eqref{eq5.4} provides a good approximation to the standard deviation ratio. In fact, our observations suggest that $\mu$ acts as a lower bound. We can extend these results to a full correlation structure between the Brownian drivers as long as $\rho_{v\hspace{.2pt}d}$ and $\rho_{v\hspace{-.7pt}f}$ are close to zero, as seen before.
\begin{figure}[hbt]
\begin{center}
\includegraphics[width=0.95\columnwidth,keepaspectratio]{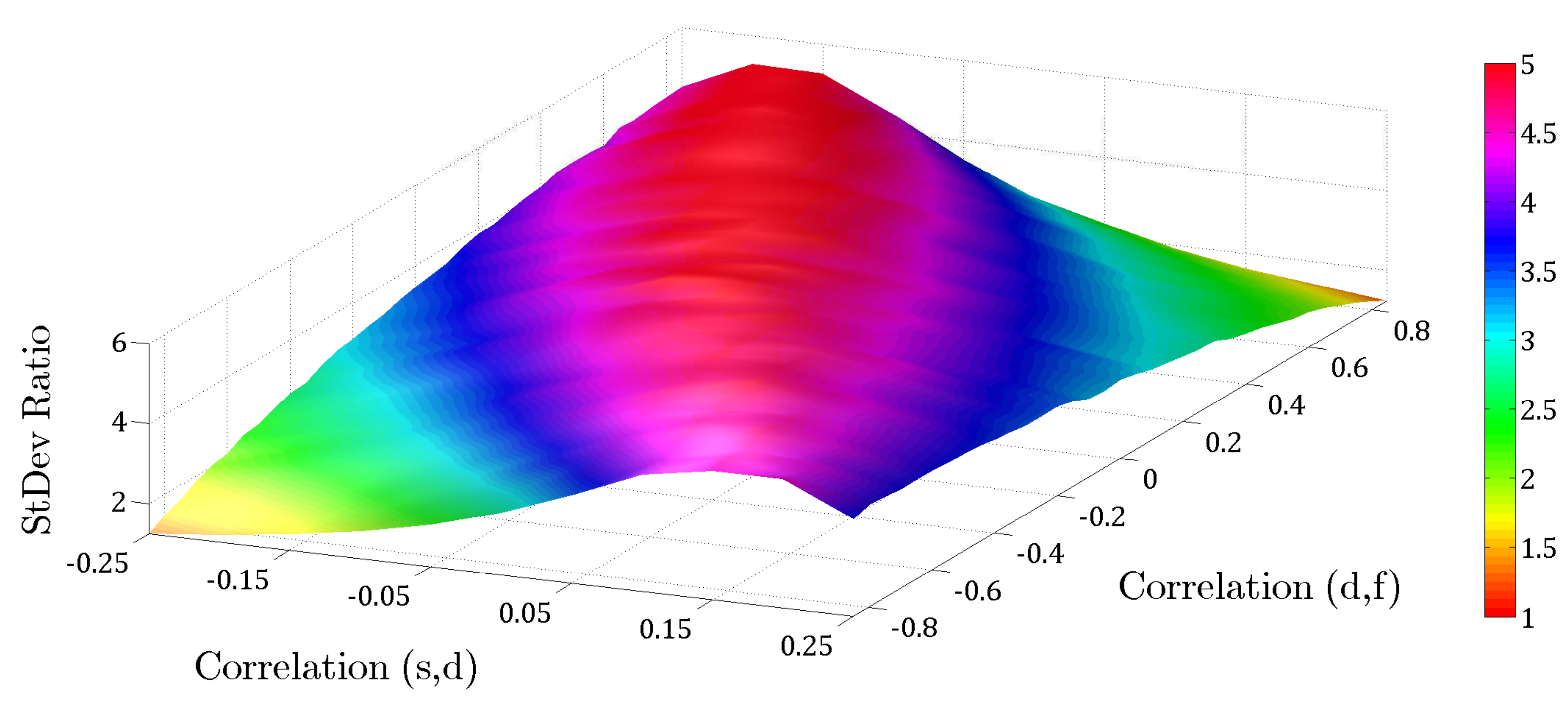}
\caption{The standard deviation ratio with $1000$ simulations and $4$ time steps, plotted against the correlation coefficients $\rho_{sd}$ and $\rho_{d\hspace{.01pt}f}$ when $\rho_{sv}=-0.10$, $\rho_{s\hspace{-.7pt}f}=-0.20$ and $\rho_{v\hspace{.2pt}d}=\rho_{v\hspace{-.7pt}f}=0$.}
\label{fig:2}
\end{center}
\end{figure}

Suppose that the exchange rate dynamics are independent of the interest rate dynamics, i.e., that $\rho_{sd}=\rho_{s\hspace{-.7pt}f}=0$, and define the optimal correlation $\rho_{sv}^{*}$ to be the value corresponding to the maximum standard deviation ratio, for a given volatility of volatility $\xi$. Then the data in Figure \ref{fig:3} suggest that the highest variance reduction is achieved when the absolute value of the correlation is small. In fact, we notice two things. First of all, that the optimal correlation approaches zero as the volatility of volatility decreases, i.e., that $\lim_{\xi\to0}\rho_{sv}^{*}=0$, which is to be expected from Remark \ref{Rem4.3}. And second, that the standard deviation ratio at $\rho_{sv}^{*}$ increases as the volatility of volatility decreases. In practice, $\xi_{d,f}\ll1$ (see Table \ref{table:2}), so the two interest rates have little impact on the variance of the mixed Monte Carlo/PDE estimator. Therefore, since $\rho_{sv}^{*}$ is close to zero, the variance comes mainly from $\sigma$ defined in \eqref{eq2.9}. On the other hand, smaller values of $\xi$ result in a smaller variance of $\sigma$, and hence of the mixed estimator as well, which leads to a higher standard deviation ratio.

We observe a similar behaviour when increasing the speed of mean reversion $k$ or the long-run variance $\theta$. Larger values of $k$ result in a smaller variance of $\sigma$ because of the mean-reverting property of the squared volatility, which ensures that the process returns to the long-run average quickly. On the other hand, larger values of $\theta$ produce higher volatilities, which then leads to an increase in the variance of the standard Monte Carlo estimator due to the larger diffusion term in the SDE driving the exchange rate process. We conclude the analysis by noting that values of $\rho_{sd}$ and $\rho_{s\hspace{-.7pt}f}$ close to zero produce similar results, to some extent. However, when the absolute values of the two correlations are not small, the impact of $\sigma$, and hence of $k$, $\xi$ and $\theta$, on the variance of the mixed estimator is reduced.
\begin{figure}[hbt]
\begin{center}
\includegraphics[width=0.95\columnwidth,keepaspectratio]{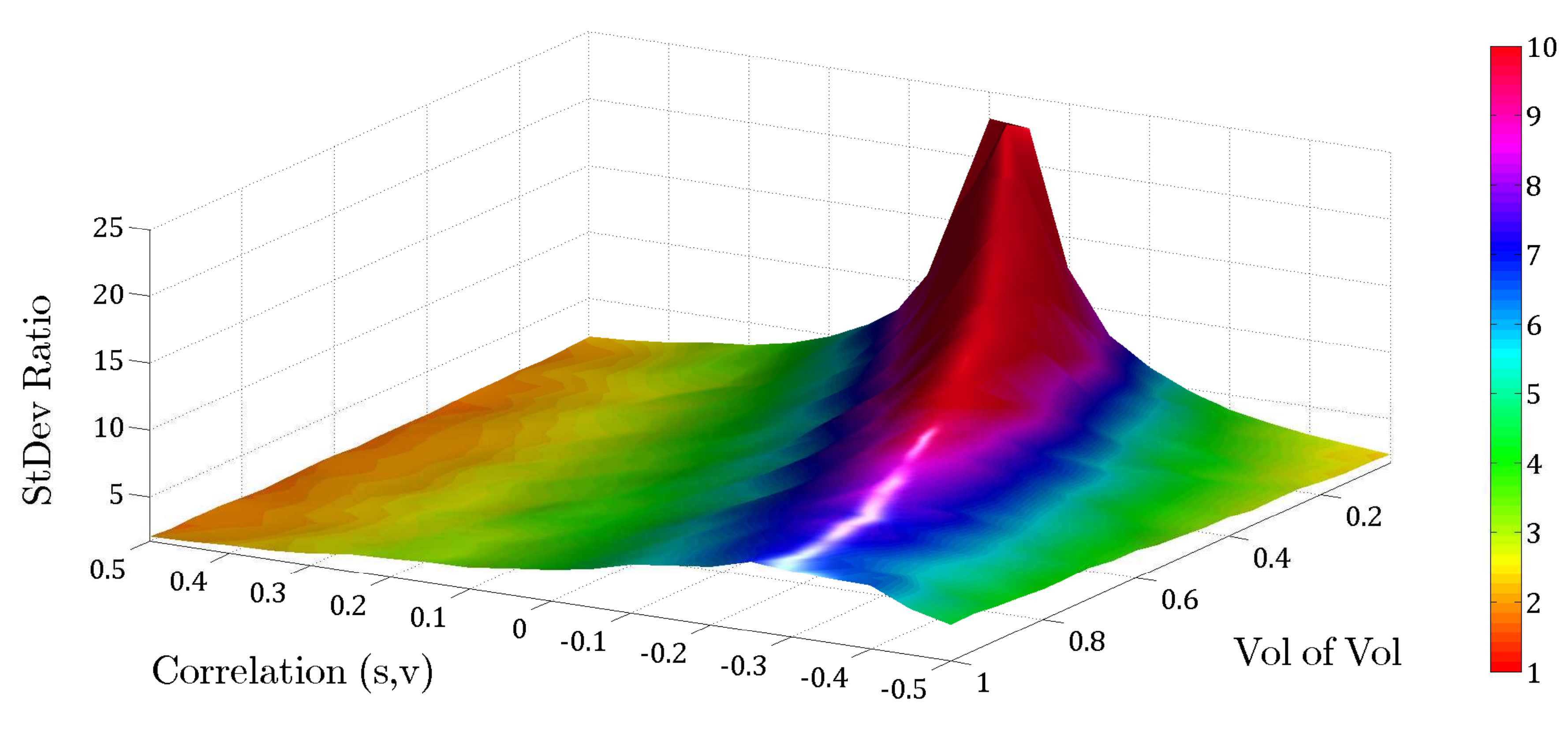}
\caption{The standard deviation ratio with $4000$ simulations and $10$ time steps, plotted against the correlation $\rho_{sv}$ and the volatility of volatility $\xi$ when $\rho_{sd}=\rho_{s\hspace{-.7pt}f}=0$.}
\label{fig:3}
\end{center}
\end{figure}

The maximum in Figure \ref{fig:3} is attained around $\rho_{sv}=0$ and $\xi=0.05$, where the standard deviation ratio is $\Gamma_{\text{dev}}=23$. Therefore, the same level of accuracy with the standard Monte Carlo method requires $529$ times more simulations.

Next, we examine the variance reduction for different spots and maturities. The data in Figure \ref{fig:4} suggest that unless the option is far out-of-the-money and the maturity is small, varying $S_{0}$ and $T$ has little impact on the standard deviation ratio, which is approximately $2$. Computing the option price with standard Monte Carlo means integrating the payoff function, which is not differentiable at the strike, whereas with the mixed Monte Carlo/PDE method we integrate the smooth conditional price. As the probability of a positive payoff decreases with $S_{0}$ and $T$, estimating it accurately with the former requires more simulations. Hence, the relative standard error of the standard Monte Carlo method increases as we go farther out-of-the-money or approach maturity, and the benefit of employing the mixed algorithm becomes clear. For example, for a 3-month call with a spot at 75\% of the strike, we observe a variance reduction factor of 5275.
\begin{figure}[hbt]
\begin{center}
\includegraphics[width=0.95\columnwidth,keepaspectratio]{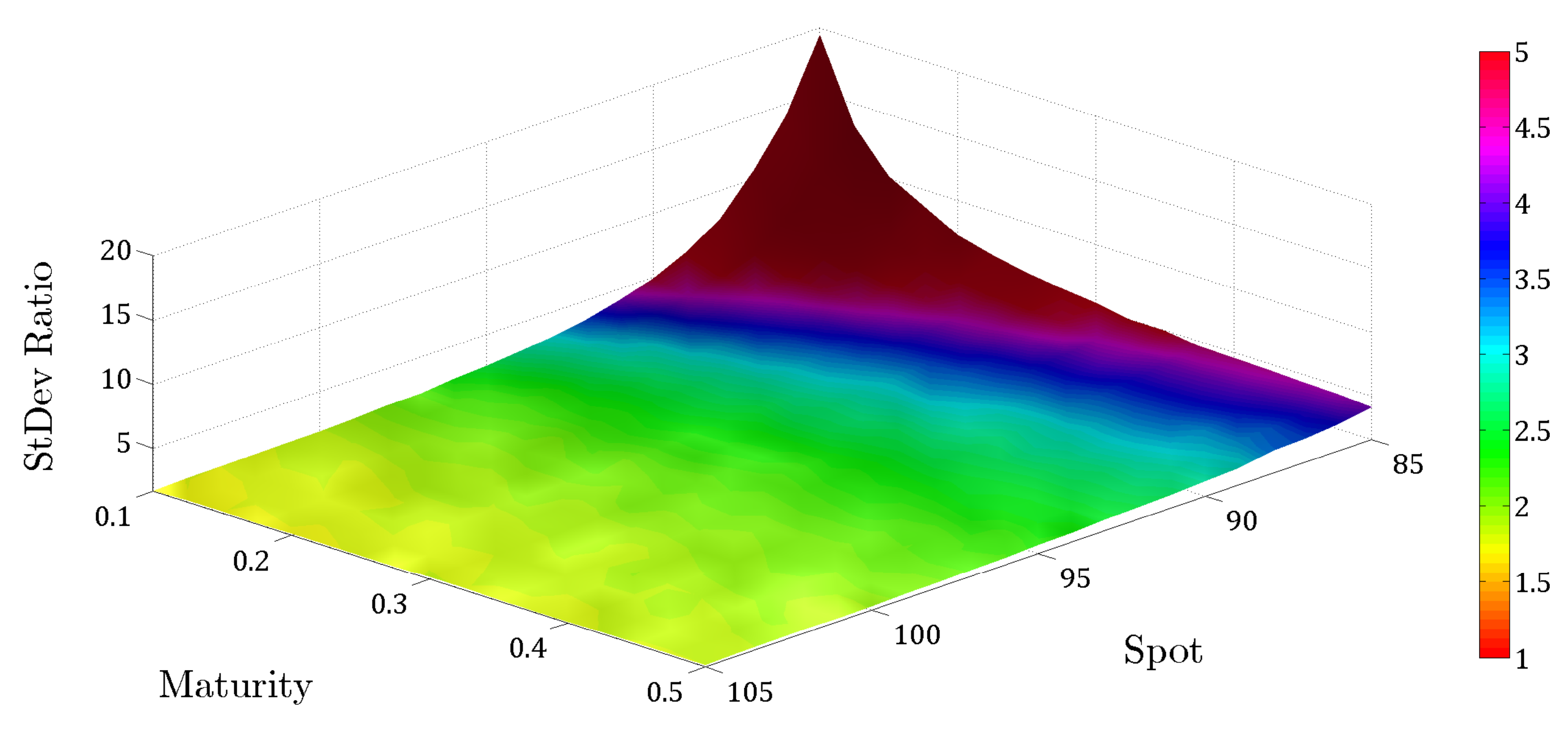}
\caption{The standard deviation ratio with $1000$ simulations and $4$ time steps, plotted against the spot $S_{0}$ and the maturity $T$ when $\rho_{sv}=-0.60$.}
\label{fig:4}
\end{center}
\end{figure}

Furthermore, Figure \ref{fig:4} suggests that the approximation for the standard deviation ratio derived in \eqref{eq4.28} does not hold when the option is far out-of-the-money. For instance, we compute $a_{11}=0.7893$, which gives a theoretical standard deviation ratio $\mu=1.629$. When $S_{0}=105$ and $T=0.5$, this is close to the estimated value $\Gamma_{\text{dev}}=1.832$. However, when $S_{0}=85$ and $T=0.1$, we observe a much higher standard deviation ratio $\Gamma_{\text{dev}}=19.863$.

Suppose that the domestic and the foreign short rate dynamics are independent of each other, and also independent of the dynamics of the exchange rate and its volatility, and let $\rho_{sv}=-0.10$, as before. Moreover, suppose that the other model parameters, as well as the spot, the strike and the maturity, take the values listed at the beginning of this section. We use the mixed Monte Carlo/PDE method with $M=8\scalebox{0.85}{$\times$}10^{7}$ simulations and $N=200$ time steps, for an RMSE of about $0.001\%$, to obtain a call price estimate: $\Theta'=12.13621$, which is about $0.14\%$ higher than the estimate corresponding to a full correlation of the factors, i.e., $\Theta^{*}=12.11968$. On the one hand, the postulated independence of the factors is critical from the point of view of analytical tractability \citep{Ahlip:2013}, but can result in fairly different option prices. On the other hand, a full correlation structure leads to a richer model and a better fit to the observed market data. Finally, we test the accuracy of the mixed method and employ the semi-analytical pricing formula of \citet{Ahlip:2013} to find the true option price, $\Theta=12.13603$, and thus a relative error of $\Theta'$ of about $0.0015\%$, which confirms that mixed Monte Carlo/PDE estimates are correct.

\subsection{Up-and-out put option}\label{subsec:barrier}

Let the spot, the strike, the barrier and the maturity be: $S_{0}=100$, $K=105$, $B=110$ and $T=0.25$, and consider a continuously monitored up-and-out put option. We will first value the contract using the mixed Monte Carlo/PDE method. Hence, for a specific realization of the variance and interest rates paths, we compute the conditional option price, i.e.,
\begin{equation}\label{eq5.7}
u(t,x) = \E\Big[e^{-\int_{t}^{T}{\oversymb{r}^{d}_{u}du}}\hspace{1pt}\big(K-\hspace{1.5pt}\oversymb{\hspace{-1.5pt}S}_{T}\big)^{+}\Ind_{\max_{t\leq u\leq T}\hspace{1.5pt}\oversymb{\hspace{-1.5pt}S}_{u}\hspace{.5pt}<\hspace{1pt}B}\hspace{0pt}\big|\hspace{2pt}\mathcal{G}_{T}^{f,d,v},\hspace{1pt} \hspace{1.5pt}\oversymb{\hspace{-1.5pt}S}_{t}=x\hspace{.5pt}\Big].
\end{equation}
We know that $u$ satisfies the following initial boundary value problem:
\begin{align}\label{eq5.8}
\partial_{t}u + \mu_{t}\hspace{1pt}x\hspace{1pt}\partial_{x}u + \frac{1}{2}\hspace{1pt}a_{11}^{2}\oversymb{V}_{\hspace{-2.5pt}t}\hspace{1pt}x^{2}\partial_{xx}u - \oversymb{r}^{d}_{t}\hspace{1pt}u = 0&,\hspace{3pt} \forall\hspace{1pt} 0<x<B,\hspace{2pt} t<T \\[2pt]
u(t,B)=0&,\hspace{3pt} \forall\hspace{.5pt} t\leq T \nonumber\\[4pt]
u(T,x)=(K-x)^{+}&,\hspace{3pt} \forall\hspace{1pt} 0\leq x<B. \nonumber
\end{align}
We solve the PDE backwards in time from the initial condition, on a rectangular domain with $t\in[0,T]$ and $x\in[0.7S_{0},B]$ that is discretized on a uniform grid with $N+1$ temporal nodes and $L+1$ spatial nodes. We selected this particular lower boundary of the spatial computational domain to reduce the number of spatial nodes, at the same time making sure that the truncation error arising from our choice of the domain is negligible. We employ a central difference scheme to approximate the spatial derivatives and a linearity boundary condition \citep{Tavella:2000} stating that $\partial_{xx}u=0$ at the lower boundary, where the option is deep-in-the-money and the price can be regarded as linear in $x$. For convenience, we have employed the same time grid used for the discretization of the squared volatility and the interest rates. The final estimate of the barrier option price is a Monte Carlo average over a sufficiently large number of discrete trajectories of the Brownian motions $W^{2}$, $W^{3}$ and $W^{4}$.

Alternatively, we can use the perturbative formula of \citet{Fatone:2008} to approximate the conditional option price, and then a simple Monte Carlo average to estimate the outer expectation. Hence, we call this numerical scheme the mixed Monte Carlo/Pert method. \citet{Fatone:2008} approximate the up-and-out put option price in the Black-Scholes model with time-dependent parameters via a series expansion and provide explicit formulae for the first three terms, which involve some elementary and nonelementary transcendental functions. However, we will focus only on the zeroth-order approximation because using a first-order correction term results in a hundredfold increase in computation time, and hence in a poor performance of the scheme as opposed to the mixed Monte Carlo/PDE method. Using $M=4\scalebox{0.85}{$\times$}10^{7}$ simulations and $N=200$ time steps to minimize the sampling and the discretization errors, respectively, we obtain a zeroth-order approximation: $\Theta^{0}=5.7700$.

Since a closed-form solution to the option pricing problem is not available, we need to find an accurate reference estimate $\Theta^{*}$ for the true option price $\Theta$ in order to compute the different errors of the numerical methods. Therefore, we use the mixed Monte Carlo/PDE algorithm with the Crank-Nicolson scheme, with $M=4\scalebox{0.85}{$\times$}10^{7}$ simulations, $N=200$ time steps and $L=20$ spatial steps, to find: $\Theta^{*}=5.7631$. Hence, the approximation error of the mixed Monte Carlo/Pert method is: $\Theta^{0}-\Theta^{*}=0.0069$, i.e., about $0.1\%$.

In Figure \ref{fig:5}, we report the time-discretization errors computed using the reference estimate $\Theta^{*}$ -- whose accuracy is discussed below -- or $\Theta^{0}$, and a sufficiently large number of simulations and spatial steps, i.e., $M=4\scalebox{0.85}{$\times$}10^{7}$ and $L=20$. For the standard Monte Carlo and the mixed Monte Carlo/Pert methods, the time-discretization error is defined as the bias, whereas for the mixed Monte Carlo/PDE algorithm, due to our choice of the finite difference grid, it contains the finite difference (FD) time-discretization error. Henceforth, the term ``discretization error'' stands for the time-discretization error.

On a side note, crossings of the barrier are monitored only at discrete times by standard Monte Carlo. This gives rise to a monitoring error, which is included in the discretization error. Moreover, due to the knock-out feature of the option, the true price is smaller than the Monte Carlo estimate, which explains the strong positive bias displayed in Figure \ref{fig:5}.
\begin{figure}[htb]
\begin{center}
\includegraphics[width=0.95\columnwidth,keepaspectratio]{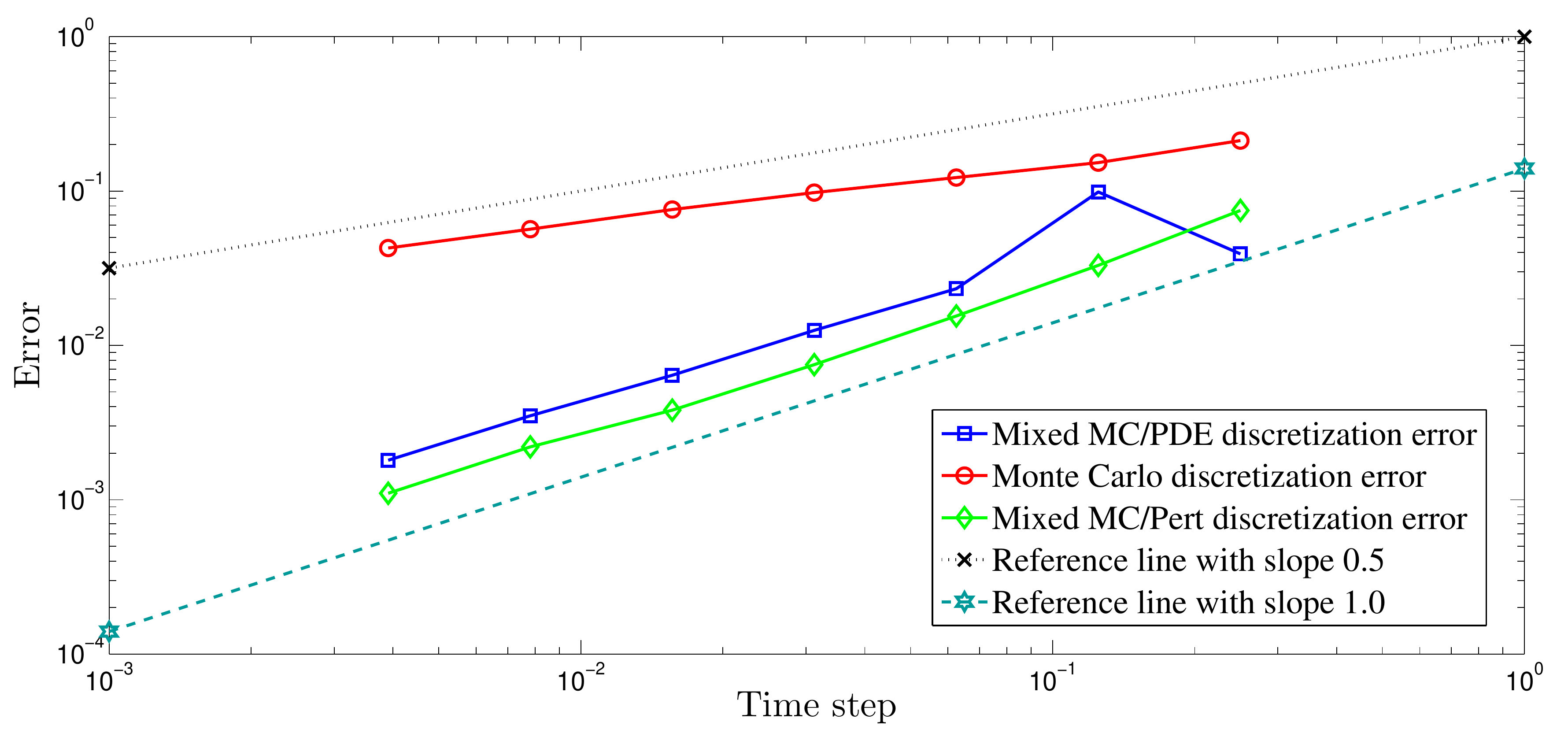}
\caption{The log-plot of the time-discretization errors of the mixed Monte Carlo/PDE, the mixed Monte Carlo/Pert and the standard Monte Carlo methods, against the time step.}
\label{fig:5}
\end{center}
\end{figure}

The data in Figure \ref{fig:5} suggest a square-root convergence of standard Monte Carlo and a first-order convergence of the mixed algorithms. On the other hand, we can use a Brownian bridge technique \citep[see][]{Glasserman:2003} to improve the first method and recover the first-order convergence. Indeed, the (red) Monte Carlo curve in Figure \ref{fig:5} would almost coincide with the (green) mixed Monte Carlo/Pert curve with the Brownian bridge correction. For instance, we calculated the discretization bias with $N=8$ time steps to be $0.0075$ for both Monte Carlo with Brownian bridge and mixed Monte Carlo/Pert.
\begin{figure}[hbt]
\begin{center}
\includegraphics[width=0.95\columnwidth,keepaspectratio]{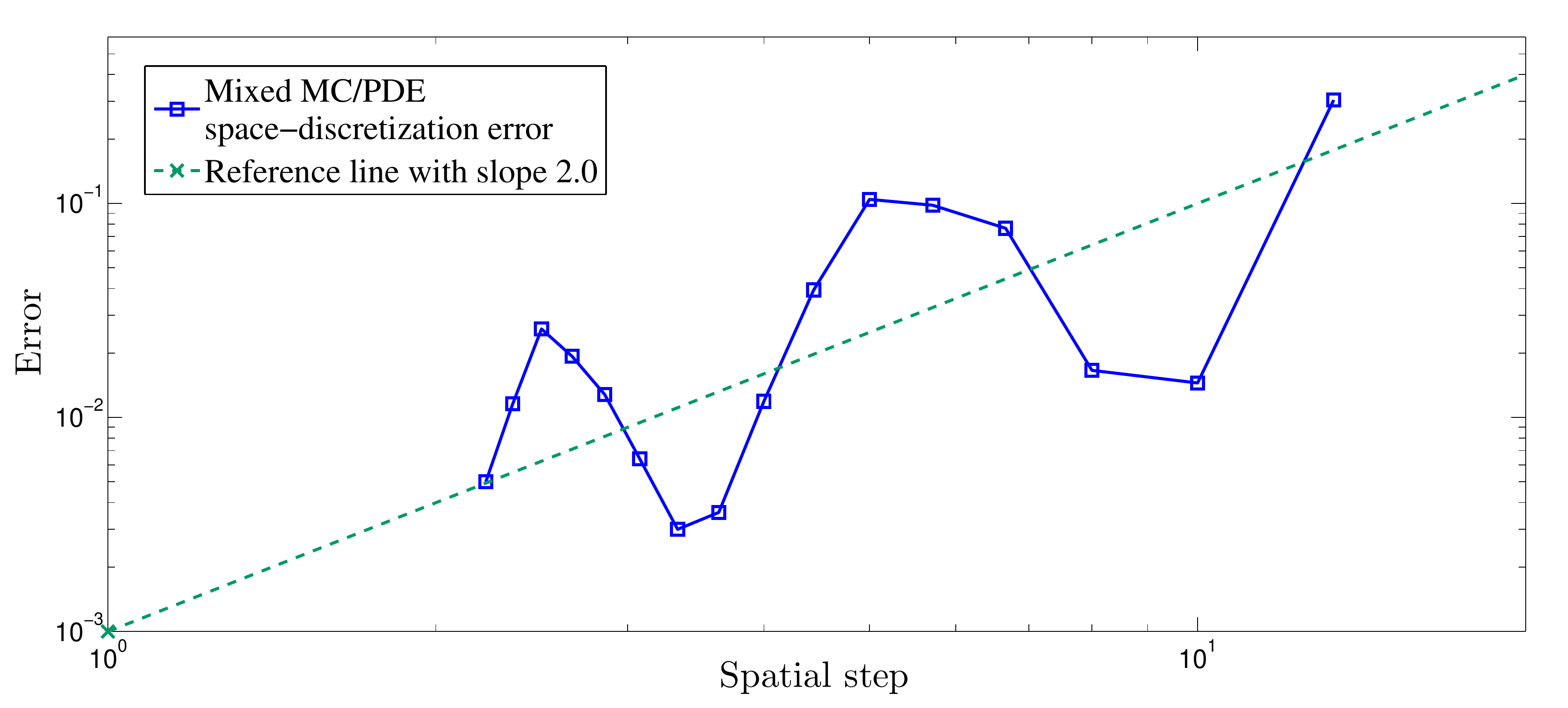}
\caption{The log-plot of the absolute space-discretization error of the mixed Monte Carlo/PDE method, against the spatial step.}
\label{fig:6}
\end{center}
\end{figure}

In Figure \ref{fig:6}, we report the space-discretization error computed using $\Theta^{*}$ and a sufficiently large number of simulations and time steps, i.e., $M=4\scalebox{0.85}{$\times$}10^{7}$ and $N=200$. The data suggest a second-order convergence as well as a local minimum discretization error when the strike price lies halfway between two adjacent nodes, a technique known as grid-shifting \citep{Tavella:2000}. Hence, considering the first-order convergence of the time-discretization error (T-Err) and the second-order convergence of the finite difference space-discretization error (S-Err) with the mixed Monte Carlo/PDE method, and using extrapolation, we obtain an approximate root mean square error (RMSE) of the reference estimate:
\begin{equation*}
\text{T-Err} \approx 5.76\scalebox{0.85}{$\times$}10^{-4},\hspace{3pt} \text{S-Err} \approx 10.80\scalebox{0.85}{$\times$}10^{-4},\hspace{3pt} \text{StDev}\approx2.09\scalebox{0.85}{$\times$}10^{-4} \hspace{3pt}\Rightarrow\hspace{3pt} \text{RMSE} \approx 1.67\scalebox{0.85}{$\times$}10^{-3}.
\end{equation*}
This is equivalent to an RMSE of about $0.03\%$ of the actual option price, suggesting that the reference estimate $\Theta^{*}=5.7631$ is accurate to two decimal places. Next, we compare the three numerical methods in terms of computation time for a given level of accuracy, in particular, when the RMSE is at most $0.30\%$ of the option price. First, using the empirical convergence rates determined above and extrapolation, we need $M=2.5\scalebox{0.85}{$\times$}10^{5}$ simulations and $N=800$ time steps, and hence a CPU time of $61.2$ seconds, with the standard Monte Carlo method. Second, we reach this level of accuracy with the mixed Monte Carlo/PDE method when $M=12\hspace{.5pt}000$, $N=10$ and $L=12$, in $2$ seconds. Third, we need to employ the mixed Monte Carlo/Pert method with a zeroth-order approximation, with $M=12\hspace{.5pt}000$ and $N=10$, which takes $3.1$ seconds.

Therefore, when the up-and-out put option price estimates need not be too accurate, e.g., when one decimal place of accuracy is sufficient, the two mixed algorithms are comparable in terms of CPU time and efficiency, and are considerably faster than standard Monte Carlo. However, a higher accuracy would require at least a first-order correction term in the mixed Monte Carlo/Pert approximation, making it highly time-consuming. We thus conclude that the mixed Monte Carlo/PDE method is the better of the three schemes.

We mentioned above that Monte Carlo with Brownian bridge recovers the observed first-order convergence of the discretization error and the level of the bias from the mixed Monte Carlo/Pert method. The time-discretization error with the mixed Monte Carlo/PDE method is approximately $1.6$ times larger, and includes the FD time-discretization error.

For a two-decimal-place accuracy, we fix $100$ time steps and $20$ spatial steps, such that the space and time-discretization errors are about $0.02\%$. The time required to obtain a barrier option price estimate with $40\hspace{.5pt}000$ simulations is then $26$ seconds for the mixed Monte Carlo/PDE method and $2.1$ seconds for Monte Carlo with Brownian bridge ($1.4$ seconds for standard Monte Carlo). Therefore, the computational cost is $92\%$ lower with the latter. In conclusion, due to the square-root convergence of the standard deviation, $\Gamma_{\text{dev}}$ needs to be above $4.5$ in order for the mixed Monte Carlo/PDE method to outperform Monte Carlo with Brownian bridge. Just as in the European call option case, the variance reduction is most sensitive to changes in the correlations between the exchange rate and the squared volatility or the interest rates.
\begin{figure}[htb]
\begin{center}
\includegraphics[width=0.95\columnwidth,keepaspectratio]{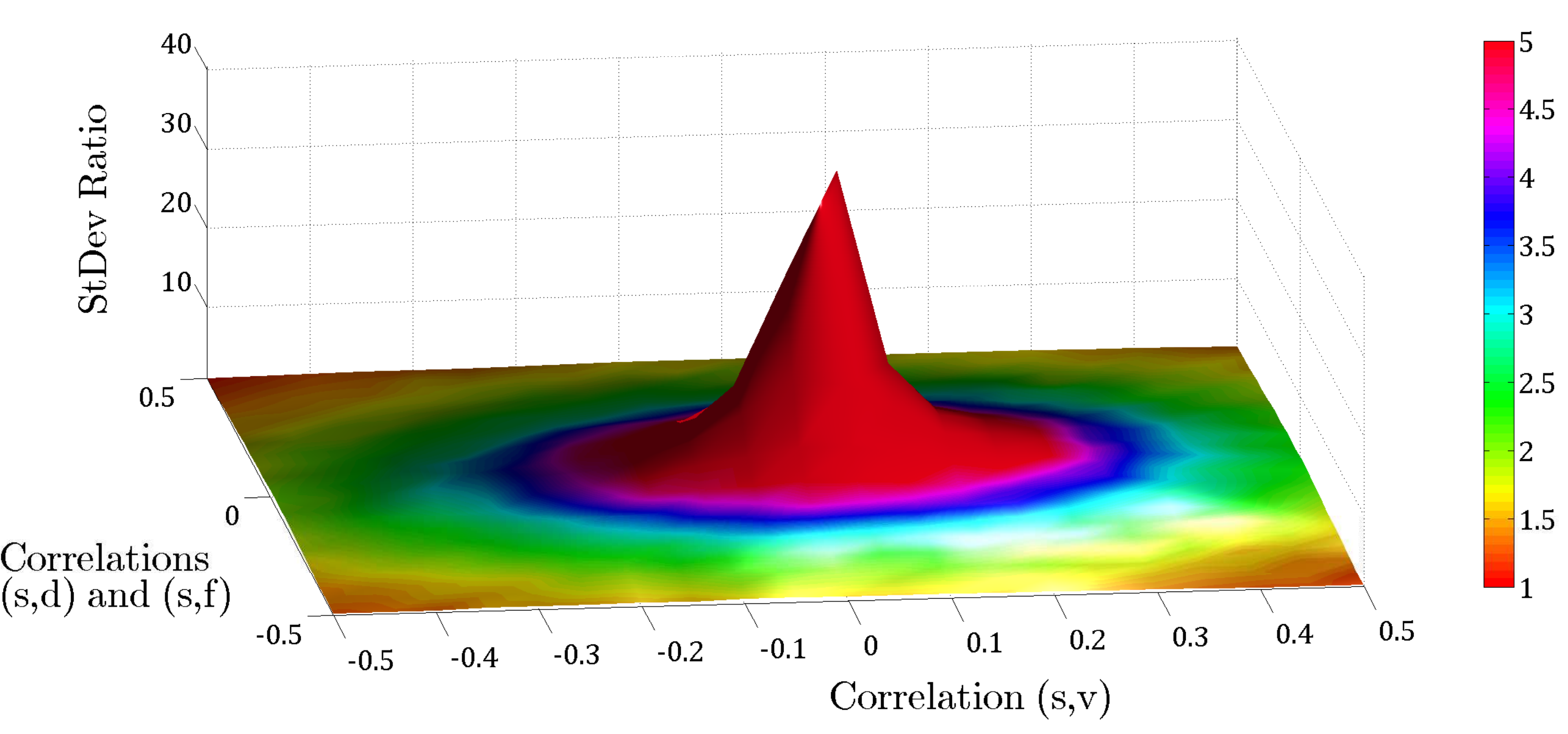}
\caption{The standard deviation ratio with $400$ simulations, $10$ time steps and $4$ spatial steps, plotted against the correlation coefficients $\rho_{sv}$, $\rho_{sd}$ and $\rho_{s\hspace{-.7pt}f}$ when the last two are equal.}
\label{fig:7}
\end{center}
\end{figure}

Figure \ref{fig:7} exhibits similar characteristics to those of Figure \ref{fig:1}. In particular, the highest variance reduction is achieved when $\rho_{sv}\approx0.05$ and $\rho_{sd}\approx\rho_{s\hspace{-.7pt}f}\approx0$, in which case $\Gamma_{\text{dev}}=40$. Based on a previous observation, this is equivalent to a $80$ times lower computational effort with the mixed algorithm. A careful inspection of the data in Figure \ref{fig:7} suggests that the set of points $(\rho_{sv},\rho_{sd})$ corresponding to $\Gamma_{\text{dev}}>4.5$ can approximately be described by the following inequality:
\begin{equation}\label{eq5.9}
\big(\rho_{sv}-0.05\big)^{2}+1.6\hspace{.5pt}\rho_{sd}^{2} < 4.2^{-2},
\end{equation}
i.e., the inside of an ellipse, this being the set of parameters where the benefit of variance reduction outweighs the additional complexity of solving the conditional PDE numerically in this instance.

The variance reduction achieved by the mixed method results in computational savings (in the number of samples) by a factor roughly independent of the desired accuracy. Conversely, higher accuracy of the finite difference method can only be achieved by a larger number of mesh points. Hence, it would appear that for high enough accuracy the mixed method can never win over the standard Monte Carlo method. An asymptotic complexity gain of the mixed method for small errors would require that the PDE can be solved with constant effort independent of the desired accuracy (and for this effort to be outweighed by the reduced number of samples required for a given statistical error). Multilevel Monte Carlo methods \citep{Giles:2012} recently developed for stochastic PDEs are designed precisely for this goal, and achieve it by concentrating the dominant number of samples on the coarsest meshes, while computing corrections on finer meshes with a vanishing number of paths. The application and numerical analysis in the present context is the subject of further research.

\section{Conclusions}\label{sec:conclusion}

The numerical experiments carried out in Section \ref{sec:numerics} suggest that the mixed method outperforms both standard Monte Carlo and finite difference methods under certain circumstances depending on the contract and the model parameters. When a closed-form solution for the conditional option price is available, we usually see a considerable improvement in terms of both accuracy and computation time. If not, the mixed algorithm provides better accuracy at the expense of added computation time and the set of parameter values where it outperforms the classical schemes is limited.

The analysis carried out in this paper is not restricted to the four-dimensional Heston-CIR model, but can be extended to higher-dimensional problems. For instance, one may consider multi-factor short rates as in \citet{Dang:2015}, with CIR dynamics and a term structure, in which case the convergence and variance reduction analysis applies with some slight modifications of the proofs. Stochastic volatility accounts for volatility clustering, dependence in the increments and long-term smiles and skews, but gives rise to unrealistic short-term patterns in the implied volatility. Hence, in order to improve the behaviour of the implied volatility for short maturities, one could extend the original model to a stochastic-local volatility model as in \citet{Cozma:2015}, which can easily be implemented for barrier option pricing with no extra computational effort, or add an independent jump component to the spot FX rate, in which case analytical formulae may be available for the conditional prices of European options. For instance, that is the case when the distribution of the jump size is normal \citep{Merton:1976} or double-exponential \citep{Kou:2002}.

However, several unsettled questions remain, like the strong convergence rate of the discretization scheme, or a finite difference scheme with an observed second-order convergence in time for pricing the barrier option. In addition, examining the hedging parameters is also relevant, and we intend to pursue all these topics in our future research.

\section*{Declaration of interest}\label{sec:declaration}

The authors report no conflicts of interest. The authors alone are responsible for the content and writing of the paper. The research of Andrei Cozma is funded by the EPSRC and the support is gratefully acknowledged.

\bibliographystyle{apa}
\bibliography{references}

\titleformat{\section}{\large\bfseries}{\appendixname~\thesection .}{0.5em}{}
\begin{appendices}

\section{Proof of Lemma \ref{Lem3.1}}\label{sec:aux3.1}

Let $\left\{\mathcal{G}_{t}^{y},\hspace{1.5pt} 0\hspace{-.5pt}\leq\hspace{-.5pt}t\hspace{-.5pt}\leq\hspace{-.5pt}T\right\}$ be the natural filtration generated by $W^{y}$ and employ the shorthand notation $\E_{t}^{y}\big[\hspace{.5pt}\cdot\hspace{.5pt}\big]=\E\big[\hspace{.5pt}\cdot\hspace{.5pt}|\mathcal{G}_{t}^{y}\big]$. If we assume that $t \in [t_{n}, t_{n+1}]$ and condition on $\mathcal{G}_{t_{n}}^{y}$, we get
\begin{align*}
\E_{t_{n}}^{y}\!\big[\hspace{1.5pt}\oversymb{\hspace{-1pt}\Theta\hspace{-1pt}}\hspace{1pt}_{t}\big] = \exp\bigg\{\lambda\int_{0}^{t_{n}}{\hspace{-.5pt}\oversymb{\hspace{.5pt}Y}_{\hspace{-2.5pt}u}\hspace{1pt}du} + \mu\int_{0}^{t_{n}}{\hspace{-2pt}\sqrt{\hspace{-.5pt}\oversymb{\hspace{.5pt}Y}_{\hspace{-2.5pt}u}}\,\frac{\delta W^{y}_{u}}{\delta t}\hspace{1.5pt}du}\bigg\}\exp\left\{\bigg[\lambda+\frac{t-t_{n}}{2\hspace{.5pt}\delta t}\hspace{1pt}\mu^{2}\bigg]\left(t-t_{n}\right)\hspace{-.5pt}\oversymb{\hspace{.5pt}Y}_{\hspace{-2.5pt}t_{n}}\right\}.
\end{align*}
Upon noticing the identity below,
\begin{equation*}
\sup_{x\in[0,1]}\hspace{1pt} \lambda\hspace{.5pt}x + \frac{1}{2}\hspace{1pt}\mu^{2}x^{2} \hspace{1.5pt}=\hspace{1.5pt} \Delta\Ind_{\Delta\hspace{.5pt}>\hspace{1pt}0}\hspace{1.5pt},
\end{equation*}
we deduce that $\Delta\hspace{-1pt}\leq\hspace{-1pt}0$ implies $\E_{t_{n}}^{y}\!\big[\hspace{1.5pt}\oversymb{\hspace{-1pt}\Theta\hspace{-1pt}}\hspace{1pt}_{t}\big] \leq \E_{t_{n}}^{y}\!\big[\hspace{1.5pt}\oversymb{\hspace{-1pt}\Theta\hspace{-1pt}}\hspace{1pt}_{t_{n}}\big]$ and $\Delta\hspace{-1pt}>\hspace{-1pt}0$ implies $\E_{t_{n}}^{y}\!\big[\hspace{1.5pt}\oversymb{\hspace{-1pt}\Theta\hspace{-1pt}}\hspace{1pt}_{t}\big] \leq \E_{t_{n}}^{y}\!\big[\hspace{1.5pt}\oversymb{\hspace{-1pt}\Theta\hspace{-1pt}}\hspace{1pt}_{t_{n+1}}\big]$. Moreover, since $\hspace{-.5pt}\oversymb{\hspace{.5pt}Y}$ is piecewise constant,
\begin{equation*}
\int_{0}^{t_{n}}{\hspace{-2pt}\sqrt{\hspace{-.5pt}\oversymb{\hspace{.5pt}Y}_{\hspace{-2.5pt}u}}\,\frac{\delta W^{y}_{u}}{\delta t}\hspace{1.5pt}du} \hspace{2pt}=\hspace{1pt} \int_{0}^{t_{n}}{\hspace{-2pt}\sqrt{\hspace{-.5pt}\oversymb{\hspace{.5pt}Y}_{\hspace{-2.5pt}u}}\,dW^{y}_{u}}\hspace{.5pt},\hspace{5pt} \forall\hspace{1pt}0\leq n\leq N.
\end{equation*}
Henceforth, we follow the argument of Proposition 3.6 in \citet{Cozma:2015.2}. \qed

\section{Proof of Proposition \ref{Prop3.2}}\label{sec:aux3.2}

We find it convenient to define a new stochastic process $L$ by
\begin{equation}\label{eq3.7}
L_{t} \equiv S_{t}\exp\left\{\int_{0}^{t}{r^{f}_{u}\hspace{1pt}du}\right\} = S_{0}\exp\left\{\int_{0}^{t}{\Big(r^{d}_{u}-\frac{1}{2}\hspace{1pt}v_{u}\Big)du} + \int_{0}^{t}{\sqrt{v_{u}}\hspace{1.5pt}dW_{u}^{s}}\right\}.
\end{equation}
Since $S_{t}\leq L_{t}$, $\forall\hspace{1pt}t \in [0,T]$, it suffices to prove the finiteness of the supremum over $t$ of
\begin{equation}\label{eq3.8}
\E\big[L_{t}^{\omega}\big] = S_{0}^{\omega}\E\bigg[\exp\bigg\{\omega\int_{0}^{t}{r^{d}_{u}\hspace{1pt}du} - \frac{\omega}{2}\int_{0}^{t}{v_{u}\hspace{1pt}du} + \omega\sum_{j=1}^{4}a_{1\hspace{-0.5pt}j}\int_{0}^{t}{\sqrt{v_{u}}\hspace{1.5pt}dW_{u}^{j}}\bigg\}\bigg].
\end{equation}
Let $\big\{\mathcal{G}_{t}^{d,v},\hspace{1.5pt} 0\hspace{-.5pt}\leq\hspace{-.5pt}t\hspace{-.5pt}\leq\hspace{-.5pt}T\big\}$ be the natural filtration generated by the Brownian drivers $W^{3}$ and $W^{4}$, i.e., generated by the processes $r^{d}$ and $v$ as observed until time $T$, and $\big\{\mathcal{G}_{t}^{v},\hspace{1.5pt} 0\hspace{-.5pt}\leq\hspace{-.5pt}t\hspace{-.5pt}\leq\hspace{-.5pt}T\big\}$ be the filtration generated by $W^{4}$. Conditioning the expectation on the right-hand side of \eqref{eq3.8} on the $\sigma$-algebra $\mathcal{G}^{d, v}_{t}$ and taking into account that $W^{1}$ and $W^{2}$ are independent, we can compute the inner expectation using moment generating functions (MGFs):
\begin{align}\label{eq3.11}
&\E\left[\exp\bigg\{\omega a_{11}\int_{0}^{t}{\sqrt{v_{u}}\hspace{1.5pt}dW_{u}^{1}} + \omega a_{12}\int_{0}^{t}{\sqrt{v_{u}}\hspace{1.5pt}dW_{u}^{2}}\bigg\}\Big|\,\mathcal{G}^{d, v}_{t}\right] \nonumber\\[3pt]
&\qquad = \E\left[\exp\bigg\{\omega a_{11}\int_{0}^{t}{\sqrt{v_{u}}\hspace{1.5pt}dW_{u}^{1}}\bigg\}\Big|\,\mathcal{G}^{d, v}_{t}\right]\E\left[\exp\bigg\{\omega a_{12}\int_{0}^{t}{\sqrt{v_{u}}\hspace{1.5pt}dW_{u}^{2}}\bigg\}\Big|\,\mathcal{G}^{d, v}_{t}\right] \nonumber\\[3pt]
&\qquad = \exp\left\{\frac{\omega^{2}}{2}\big(a_{11}^{2}+a_{12}^{2}\big)\int_{0}^{t}{v_{u}\hspace{1pt}du}\right\}.
\end{align}
Substituting back into \eqref{eq3.8} with \eqref{eq3.11} leads to
\begin{equation*}
\E\big[L_{t}^{\omega}\big] = S_{0}^{\omega}\E\bigg[\exp\bigg\{\omega\hspace{-1pt}\int_{0}^{t}{r^{d}_{u}\hspace{1pt}du} + \left[\frac{\omega^{2}}{2}\big(a_{11}^{2}\hspace{-1pt}+a_{12}^{2}\big) - \frac{\omega}{2}\right]\int_{0}^{t}{v_{u}\hspace{1pt}du} + \omega\sum_{j=3}^{4}a_{1\hspace{-0.5pt}j}\hspace{-1pt}\int_{0}^{t}{\hspace{-1pt}\sqrt{v_{u}}\hspace{1.5pt}dW_{u}^{j}}\bigg\}\bigg].
\end{equation*}
Next, we employ H\"older's inequality with the pair $(p,q)$, where $p,q>1$ and $q=p/(p-1)$, in order to force the term $\int_{0}^{t}{r_{u}^{d}\hspace{1pt}du}$ outside the expectation, and then condition the second expectation on the $\sigma$-algebra $\mathcal{G}_{t}^{v}$ to arrive at
\begin{align}\label{eq3.13}
\E\big[L_{t}^{\omega}\big] &\leq S_{0}^{\omega}\E\bigg[\exp\bigg\{p\omega\int_{0}^{t}{r^{d}_{u}\hspace{1pt}du}\bigg\}\bigg]^{\frac{1}{p}}\E\bigg[\exp\bigg\{q\left[\frac{\omega^{2}}{2}\big(a_{11}^{2}+a_{12}^{2}+qa_{13}^{2}\big) - \frac{\omega}{2}\right]\int_{0}^{t}{v_{u}\hspace{1pt}du} \nonumber\\[2pt]
&+ q\omega a_{14}\int_{0}^{t}{\sqrt{v_{u}}\hspace{1.5pt}dW_{u}^{4}}\bigg\}\bigg]^{\frac{1}{q}}.
\end{align}
All that is left to do is to show that the supremum over $t$ of each of the two expectations on the right-hand side of \eqref{eq3.13} is finite. However, Proposition 3.2 in \citet{Cozma:2015.2} provides the following sufficient conditions:
\begin{equation}\label{eq3.18}
k_{d} \geq \sqrt{2p\omega}\hspace{1pt}\xi_{d}\hspace{1pt},
\end{equation}
as well as
\begin{equation}\label{eq3.17}
k\geq q\omega\rho_{sv}\xi
\end{equation}
and
\begin{equation}\label{eq3.16}
\omega^{2}\xi^{2}a_{13}^{2}q^{2} + \left[2\omega\rho_{sv}\xi k + \omega^{2}\xi^{2}\left(a_{11}^{2}+a_{12}^{2}\right) - \omega\xi^{2}\right]q - k^{2} \leq 0,
\end{equation}
for all $T>0$. The first assumption in \eqref{eq3.5} ensures that $q_{0}(\alpha)\hspace{-1pt}>\hspace{-1pt}1$, and hence that $q_{1}(\alpha)\hspace{-1pt}>\hspace{-1pt}1$. This implies that $q_{1}(\alpha)/(q_{1}(\alpha)-1)\hspace{-1pt}>\hspace{-1pt}1$. Due to the second assumption  in \eqref{eq3.5}, we can find $p\hspace{-1pt}>\hspace{-1pt}1$ so that
\begin{equation}\label{eq3.19}
\frac{k_{d}^{2}}{2\xi_{d}^{2}} > \alpha p > \frac{\alpha\hspace{1.5pt}q_{1}(\alpha)}{q_{1}(\alpha)-1} \hspace{3pt}\Rightarrow\hspace{3pt} k_{d} > \sqrt{2p\alpha}\hspace{1pt}\xi_{d}\hspace{1pt}.
\end{equation}
The quadratic equation in $x$ below has roots of different signs, with positive root $q_{0}(\alpha)$:
\begin{align*}
\alpha^{2}\xi^{2}a_{13}^{2}x^{2} + \big[2\alpha\rho_{sv}\xi k + \alpha^{2}\xi^{2}\left(a_{11}^{2}+a_{12}^{2}\right) - \alpha\xi^{2}\big]x - k^{2} = 0.
\end{align*}
However, the H\"older pair satisfies $p=q/(q-1)$, so $q<q_{1}(\alpha)\leq q_{0}(\alpha)$. Therefore, $q$ lies in between the two roots of the quadratic, which implies that
\begin{align}\label{eq3.20}
\alpha^{2}\xi^{2}a_{13}^{2}q^{2} + \big[2\alpha\rho_{sv}\xi k + \alpha^{2}\xi^{2}\left(a_{11}^{2}+a_{12}^{2}\right) - \alpha\xi^{2}\big]q - k^{2} < 0.
\end{align}
From \eqref{eq3.4}, if $\rho_{sv}>0$,
\begin{equation}\label{eq3.21}
q<q_{1}(\alpha)\leq \frac{k}{\alpha\rho_{sv}\xi} \hspace{3pt}\Rightarrow\hspace{3pt} k>q\alpha\rho_{sv}\xi,
\end{equation}
and this clearly holds when the correlation coefficient is non-positive. Consider the three continuous maps below, which are strictly positive when $\omega = \alpha$,
\begin{align*}
  \begin{dcases}
	\omega \mapsto k - q\omega\rho_{sv}\xi\hspace{.5pt};\hspace{5pt} \omega \mapsto k_{d} - \sqrt{2p\omega}\hspace{1pt}\xi_{d}\hspace{1pt}; \\[4pt]
	\omega \mapsto k^{2} - \omega^{2}\xi^{2}a_{13}^{2}q^{2} - \big[2\omega\rho_{sv}\xi k + \omega^{2}\xi^{2}\left(a_{11}^{2}+a_{12}^{2}\right) - \omega\xi^{2}\big]q.
	\end{dcases}
\end{align*}
Then we can find $\alpha_{1}>\alpha$ such that all three functions are positive on $[\alpha,\alpha_{1})$. We have thus proved that conditions \eqref{eq3.18} -- \eqref{eq3.16} are satisfied and the conclusion follows. The extension to the interval $[1,\alpha_{1})$ follows immediately from Jensen's inequality. \qed

\vspace{1em}
In the special case that $a_{13}=0$, i.e., $\rho_{sd}=\rho_{sv}\rho_{v\hspace{.2pt}d}$, the argument is the same and the only difference appears in condition \eqref{eq3.16}, which becomes
\begin{align*}
\big[2\omega\rho_{sv}\xi k + \omega^{2}\xi^{2}\left(1-\rho_{sv}^{2}\right) - \omega\xi^{2}\big]q - k^{2} \leq 0.
\end{align*}
Henceforth, one can easily show that Proposition \ref{Prop3.2} still holds in this case as long as
\begin{equation*}
k > \alpha\rho_{sv}\xi + \sqrt{\alpha(\alpha-1)}\hspace{1.5pt}\xi\hspace{.5pt},\hspace{.6em} \frac{k_{d}^{2}}{2\xi_{d}^{2}} > \alpha\hspace{1pt}\max\left\{1\hspace{.5pt},\hspace{2.5pt} \frac{k}{k-\alpha\rho_{sv}\xi}\hspace{1pt},\hspace{3pt} \frac{k^{2}}{(k-\alpha\rho_{sv}\xi)^{2}-\alpha(\alpha-1)\xi^{2}}\hspace{.5pt}\right\}\hspace{-.5pt}.
\end{equation*}

\section{Proof of Proposition \ref{Prop3.4}}\label{sec:aux3.4}

For convenience, define a new stochastic process $\hspace{1.5pt}\oversymb{\hspace{-1.5pt}L}$ by
\begin{align}\label{eq3.25}
\hspace{1.5pt}\oversymb{\hspace{-1.5pt}L}_{t} \equiv \hspace{1.5pt}\oversymb{\hspace{-1.5pt}S}_{t}\exp\left\{\int_{0}^{t}{\oversymb{r}^{f}_{u}\hspace{1pt}du}\right\}
&= S_{0}\exp\bigg\{\int_{0}^{t}{\Big(\oversymb{r}^{d}_{u}-\frac{1}{2}\hspace{1pt}\oversymb{V}_{\hspace{-2.5pt}u}\Big)du} + a_{11}\int_{0}^{t}{\sqrt{\oversymb{V}_{\hspace{-2.5pt}u}}\,dW^{1}_{u}} \nonumber\\[0pt]
&+ \sum_{j=2}^{4}{a_{1\hspace{-0.5pt}j}\int_{0}^{t}{\sqrt{\oversymb{V}_{\hspace{-2.5pt}u}}\,\frac{\delta W^{j}_{u}}{\delta t}\,du}}\bigg\}.
\end{align}
As $\hspace{1.5pt}\oversymb{\hspace{-1.5pt}S}_{t}\leq\hspace{1.5pt}\oversymb{\hspace{-1.5pt}L}_{t}$, $\forall\hspace{1pt}t \in [0,T]$, it suffices to prove the finiteness of the supremum over $t$ and $\delta t$ of
\begin{align}\label{eq3.26}
\E\big[\hspace{1.5pt}\oversymb{\hspace{-1.5pt}L}_{t}^{\omega}\big]
&= S_{0}^{\omega}\E\bigg[\exp\bigg\{\omega\int_{0}^{t}{\oversymb{r}^{d}_{u}\hspace{1pt}du} - \frac{\omega}{2}\int_{0}^{t}{\oversymb{V}_{\hspace{-2.5pt}u}\hspace{1pt}du} + \omega a_{11}\int_{0}^{t}{\sqrt{\oversymb{V}_{\hspace{-2.5pt}u}}\,dW^{1}_{u}} \nonumber\\[0pt]
&+ \omega\sum_{j=2}^{4}{a_{1\hspace{-0.5pt}j}\int_{0}^{t}{\sqrt{\oversymb{V}_{\hspace{-2.5pt}u}}\,\frac{\delta W^{j}_{u}}{\delta t}\,du}}\bigg\}\bigg].
\end{align}
Conditioning the expectation on the right-hand side on $\mathcal{G}^{d,v}_{T}$ and bearing in mind that~$W^{1}\hspace{-3pt}\independent\hspace{-3pt}W^{2}$, we can split the inner expectation into two parts, which we compute using MGFs. First,
\begin{equation}\label{eq3.27}
\E\bigg[\exp\bigg\{\omega a_{11}\int_{0}^{t}{\sqrt{\oversymb{V}_{\hspace{-2.5pt}u}}\,dW_{u}^{1}}\bigg\}\Big|\,\mathcal{G}^{d,v}_{T}\bigg] = \exp\left\{\frac{\omega^{2}}{2}\hspace{1pt}a_{11}^{2}\int_{0}^{t}{\oversymb{V}_{\hspace{-2.5pt}u}\hspace{1pt}du}\right\}.
\end{equation}
Second, let $t\in[t_{n},t_{n+1})$. As $\oversymb{V}$ is piecewise constant and $W^{2}$ has independent increments,
\begin{align}\label{eq3.28}
&\E\bigg[\exp\bigg\{\omega a_{12}\int_{0}^{t}{\sqrt{\oversymb{V}_{\hspace{-2.5pt}u}}\,\frac{\delta W^{2}_{u}}{\delta t}\,du}\bigg\}\Big|\,\mathcal{G}^{d,v}_{T}\bigg] \nonumber\\[2pt]
&\qquad = \E\bigg[\exp\bigg\{\omega a_{12}\int_{0}^{t_{n}}{\hspace{-2pt}\sqrt{\oversymb{V}_{\hspace{-2.5pt}u}}\,dW_{u}^{2}} + \omega a_{12}\hspace{1pt}\frac{t-t_{n}}{\delta t}\int_{t_{n}}^{t_{n+1}}{\hspace{-2pt}\sqrt{\oversymb{V}_{\hspace{-2.5pt}u}}\,dW_{u}^{2}}\bigg\}\Big|\,\mathcal{G}^{d,v}_{T}\bigg] \nonumber\\[2pt]
&\qquad = \exp\left\{\frac{\omega^{2}}{2}\hspace{1pt}a_{12}^{2}\int_{0}^{t_{n}}{\oversymb{V}_{\hspace{-2.5pt}u}\hspace{1pt}du}\right\}\exp\left\{\frac{\omega^{2}}{2}\hspace{1pt}a_{12}^{2}\hspace{1pt}\frac{(t-t_{n})^{2}}{(\delta t)^{2}}\int_{t_{n}}^{t_{n+1}}{\oversymb{V}_{\hspace{-2.5pt}u}\hspace{1pt}du}\right\} \nonumber\\[2pt]
&\qquad \leq \exp\left\{\frac{\omega^{2}}{2}\hspace{1pt}a_{12}^{2}\int_{0}^{t}{\oversymb{V}_{\hspace{-2.5pt}u}\hspace{1pt}du}\right\}.
\end{align}
Substituting back into \eqref{eq3.26} with \eqref{eq3.27} and \eqref{eq3.28} leads to an upper bound,
\begin{align}\label{eq3.29}
\E\big[\hspace{1.5pt}\oversymb{\hspace{-1.5pt}L}_{t}^{\omega}\big] &\leq S_{0}^{\omega}\E\bigg[\exp\bigg\{\omega\int_{0}^{t}{\oversymb{r}^{d}_{u}\hspace{1pt}du} + \omega\sum_{j=3}^{4}{a_{1\hspace{-0.5pt}j}\int_{0}^{t}{\sqrt{\oversymb{V}_{\hspace{-2.5pt}u}}\,\frac{\delta W^{j}_{u}}{\delta t}\,du}} \nonumber\\[2pt]
&+ \left[\frac{\omega^{2}}{2}\big(a_{11}^{2}+a_{12}^{2}\big) - \frac{\omega}{2}\right]\int_{0}^{t}{\oversymb{V}_{\hspace{-2.5pt}u}\hspace{1pt}du}\bigg\}\bigg].
\end{align}
Next, we employ H\"older's inequality with the pair $(p,q)$, where $p,q>1$ and $q=p/(p-1)$, to force the term $\int_{0}^{t}{\oversymb{r}^{d}_{u}\hspace{1pt}du}$ outside the expectation. Then, we condition the second expectation on the $\sigma$-algebra $\mathcal{G}_{T}^{v}$ and proceed as in \eqref{eq3.28} to arrive at
\begin{align}\label{eq3.30}
\E\big[\hspace{1.5pt}\oversymb{\hspace{-1.5pt}L}_{t}^{\omega}\big] &\leq S_{0}^{\omega}\E\bigg[\exp\bigg\{p\omega\int_{0}^{t}{\oversymb{r}^{d}_{u}\hspace{1pt}du}\bigg\}\bigg]^{\frac{1}{p}}\E\bigg[\exp\bigg\{q\omega a_{14}\int_{0}^{t}{\sqrt{\oversymb{V}_{\hspace{-2.5pt}u}}\,\frac{\delta W^{4}_{u}}{\delta t}\,du} \nonumber\\[1pt]
&+ q\left[\frac{\omega^{2}}{2}\big(a_{11}^{2}+a_{12}^{2}+q a_{13}^{2}\big) - \frac{\omega}{2}\right]\int_{0}^{t}{\oversymb{V}_{\hspace{-2.5pt}u}\hspace{1pt}du}\bigg\}\bigg]^{\frac{1}{q}}.
\end{align}
All that we have left to do is to show that the supremum over $t$ and $\delta t$ of each of the two expectations on the right-hand side of \eqref{eq3.30} is finite. However, one can easily deduce from Lemma \ref{Lem3.1} the following sufficient conditions:
\begin{equation}\label{eq3.31}
k_{d}\geq\frac{1}{2}\hspace{1pt}p\omega\hspace{1pt}T\xi_{d}^{2}\hspace{1pt},\hspace{5pt} k\geq q\omega\rho_{sv}\xi+\frac{1}{2}\hspace{1pt}\Delta T\xi^{2},
\end{equation}
where
\begin{equation}\label{eq3.32}
\Delta = q\left[\frac{1}{2}\hspace{1pt}\omega(\omega - 1) + \frac{1}{2}\hspace{1pt}\omega^{2}(q-1)\big(a_{13}^{2}+a_{14}^{2}\big)\right].
\end{equation}
Note that we used $\sum_{j=1}^{4}{a_{1\hspace{-0.5pt}j}^{2}}=1$ in \eqref{eq3.32}. On the other hand, the first assumption in \eqref{eq3.23} ensures that $q_{2}(\alpha)>1$. This, in turn, implies that $q_{2}(\alpha)/(q_{2}(\alpha)-1)>1$. Due to the second assumption in \eqref{eq3.23}, we can find $p>1$ so that
\begin{equation}\label{eq3.33}
\frac{2k_{d}}{T\xi_{d}^{2}} > \alpha p > \frac{\alpha\hspace{1.5pt}q_{2}(\alpha)}{q_{2}(\alpha)-1} \hspace{3pt}\Rightarrow\hspace{3pt} k_{d}>\frac{1}{2}\hspace{1pt}p\alpha\hspace{1pt}T\xi_{d}^{2}\hspace{1pt}.
\end{equation}
The quadratic equation in $x$ below has roots of different signs, with positive root $q_{2}(\alpha)$:
\begin{equation*}
\frac{1}{4}\hspace{1pt}T\alpha^{2}\xi^{2}\big(a_{13}^{2}+a_{14}^{2}\big)x^{2} + \left[\alpha\rho_{sv}\xi + \frac{1}{4}\hspace{1pt}T\alpha^{2}\xi^{2}\big(a_{11}^{2}+a_{12}^{2}\big)-\frac{1}{4}\hspace{1pt}T\alpha\xi^{2}\right]x - k = 0.
\end{equation*}
However, the H\"older pair satisfies $p=q/(q-1)$, so $q<q_{2}(\alpha)$. Hence, $q$ lies in between the two roots of the quadratic, which implies that
\begin{align}\label{eq3.34}
\frac{1}{4}\hspace{1pt}T\alpha^{2}\xi^{2}\big(a_{13}^{2}+a_{14}^{2}\big)q^{2} + \left[\alpha\rho_{sv}\xi + \frac{1}{4}\hspace{1pt}T\alpha^{2}\xi^{2}\big(a_{11}^{2}+a_{12}^{2}\big)-\frac{1}{4}\hspace{1pt}T\alpha\xi^{2}\right]q - k < 0.
\end{align}
Rearranging terms in the above inequality, we obtain
\begin{align}\label{eq3.35}
k > q\alpha\rho_{sv}\xi + \frac{1}{2}\hspace{1pt}T\xi^{2}q\left[\frac{1}{2}\hspace{1pt}\alpha(\alpha - 1) + \frac{1}{2}\hspace{1pt}\alpha^{2}(q-1)\big(a_{13}^{2}+a_{14}^{2}\big)\right].
\end{align}
From \eqref{eq3.33} and \eqref{eq3.35}, employing a continuity argument similar to that used in the proof of Proposition \ref{Prop3.2}, we deduce that the two conditions in \eqref{eq3.31} hold on an interval $[\alpha,\alpha_{2})$, for some $\alpha_{2}>\alpha$, which concludes the proof. The extension to the interval $[1,\alpha_{2})$ follows from Jensen's inequality, with $\eta_{\omega}=\eta_{\alpha},\hspace{1pt}\forall\hspace{.5pt}\omega\in[1,\alpha]$. \qed

\vspace{1em}
In the event that $a_{13}$ and $a_{14}$ are simultaneously zero, i.e., $\rho_{sv}=\rho_{sd}=0$, one can easily show that Proposition \ref{Prop3.4} still holds as long as
\begin{equation*}
k > \frac{1}{4}\hspace{1pt}\alpha(\alpha-1)T\xi^{2},\hspace{.6em} \frac{k_{d}}{T\xi_{d}^{2}} > \frac{2\alpha k}{4k-\alpha(\alpha-1)T\xi^{2}}\hspace{1.5pt}.
\end{equation*}

\section{Proof of Proposition \ref{Prop3.6}}\label{sec:aux3.6}

We follow the argument of Proposition \ref{Prop3.2} closely and condition on the $\sigma$-algebra $\mathcal{G}^{v}_{t}$ instead to deduce that
\begin{align}\label{eqD.1}
\E\big[R_{t}^{\omega}\big] &\leq S_{0}^{\omega}\E\bigg[\exp\bigg\{\left[\frac{1}{2}\hspace{1pt}\omega^{2}\big(1-\rho_{sv}^{2}\big) - \frac{1}{2}\hspace{1pt}\omega\right]\int_{0}^{t}{v_{u}\hspace{1pt}du} + \omega \rho_{sv}\int_{0}^{t}{\sqrt{v_{u}}\hspace{1.5pt}dW_{u}^{4}}\bigg\}\bigg].
\end{align}
First of all, suppose that $\alpha=1$ and $T\geq0$. If $k<\rho_{sv}\xi$, then
\begin{equation*}
\lim_{\omega\hspace{1pt}\downarrow\hspace{1pt}1^{+}}\hspace{1pt}\frac{1}{\nu(\omega)}\log\left(\frac{\omega\rho_{sv}\xi-k+\nu(\omega)}{\omega\rho_{sv}\xi-k-\nu(\omega)}\right) = \infty.
\end{equation*}
Hence, by a continuity argument, we can find $\alpha_{1}>1$ such that for all $\omega\in(1,\alpha_{1})$,
\begin{equation}\label{eqD.2}
k<\omega\rho_{sv}\xi-\sqrt{\omega(\omega-1)}\hspace{1.5pt}\xi \hspace{5pt}\text{ and }\hspace{5pt}
T<\frac{1}{\nu(\omega)}\log\left(\frac{\omega\rho_{sv}\xi-k+\nu(\omega)}{\omega\rho_{sv}\xi-k-\nu(\omega)}\right).
\end{equation}
If $k=\rho_{sv}\xi$, then $\rho_{sv}\in(0,1]$ and
\begin{equation*}
\lim_{\omega\hspace{1pt}\downarrow\hspace{1pt}1^{+}}\hspace{1pt}\frac{2}{\hat{\nu}(\omega)}\left[\frac{\pi}{2}-\arctan\left(\frac{\omega\rho_{sv}\xi-k}{\hat{\nu}(\omega)}\right)\right] =
\lim_{\omega\hspace{1pt}\downarrow\hspace{1pt}1^{+}}\hspace{1pt}\frac{2}{\hat{\nu}(\omega)}\hspace{1pt}\arctan\left(\frac{\sqrt{\omega-(\omega-1)\rho_{sv}^{2}}}{\sqrt{\omega-1}\hspace{1pt}\rho_{sv}}\right) = \infty.
\end{equation*}
Furthermore, note that for all $\omega>1$,
\begin{equation*}
\omega\rho_{sv}\xi-\sqrt{\omega(\omega-1)}\hspace{1.5pt}\xi \leq \omega k-\sqrt{\omega(\omega-1)}\hspace{1.5pt}k < k.
\end{equation*}
Hence, we can find $\alpha_{1}>1$ such that for all $\omega\in(1,\alpha_{1})$,
\begin{equation}\label{eqD.3}
\omega\rho_{sv}\xi-\sqrt{\omega(\omega-1)}\hspace{1.5pt}\xi<k<\omega\rho_{sv}\xi+\sqrt{\omega(\omega-1)}\hspace{1.5pt}\xi
\end{equation}
and
\begin{equation}\label{eqD.4}
T<\frac{2}{\hat{\nu}(\omega)}\left[\frac{\pi}{2}-\arctan\left(\frac{\omega\rho_{sv}\xi-k}{\hat{\nu}(\omega)}\right)\right].
\end{equation}
If $k>\rho_{sv}\xi$, then we can find $\alpha_{1}>1$ such that for all $\omega\in(1,\alpha_{1})$,
\begin{equation}\label{eqD.5}
k>\omega\rho_{sv}\xi+\sqrt{\omega(\omega-1)}\hspace{1.5pt}\xi.
\end{equation}
The conclusion follows from Proposition 3.2 in \citet{Cozma:2015.2} and \eqref{eqD.2} -- \eqref{eqD.5}. Next, suppose that $\alpha>1$ and $T<T^{*}$, with $T^{*}$ defined in \eqref{eq3.38.1} -- \eqref{eq3.38.4}.

If $k<\alpha\rho_{sv}\xi-\sqrt{\alpha(\alpha-1)}\hspace{1.5pt}\xi$, by a continuity argument, we can find $\alpha_{1}>\alpha$ such that for all $\omega\in(\alpha,\alpha_{1})$,
\begin{equation}\label{eqD.6}
k<\omega\rho_{sv}\xi-\sqrt{\omega(\omega-1)}\hspace{1.5pt}\xi \hspace{5pt}\text{ and }\hspace{5pt}
T<\frac{1}{\nu(\omega)}\log\left(\frac{\omega\rho_{sv}\xi-k+\nu(\omega)}{\omega\rho_{sv}\xi-k-\nu(\omega)}\right).
\end{equation}
If $k=\alpha\rho_{sv}\xi-\sqrt{\alpha(\alpha-1)}\hspace{1.5pt}\xi$, then $\rho_{sv}\in(0,1]$ and for all $\omega>\alpha$,
\begin{equation*}
\omega-\alpha < \sqrt{\omega(\omega-1)} - \sqrt{\alpha(\alpha-1)} \hspace{3pt}\Rightarrow \hspace{3pt} \omega\rho_{sv}\xi-\sqrt{\omega(\omega-1)}\hspace{1.5pt}\xi < \alpha\rho_{sv}\xi-\sqrt{\alpha(\alpha-1)}\hspace{1.5pt}\xi.
\end{equation*}
Furthermore, note that
\begin{equation*}
\lim_{\omega\hspace{1pt}\downarrow\hspace{1pt}\alpha^{+}}\hspace{1pt}\frac{2}{\hat{\nu}(\omega)}\left[\frac{\pi}{2}-\arctan\left(\frac{\omega\rho_{sv}\xi-k}{\hat{\nu}(\omega)}\right)\right] =
\lim_{\omega\hspace{1pt}\downarrow\hspace{1pt}\alpha^{+}}\hspace{1pt}\frac{2}{\hat{\nu}(\omega)}\hspace{1pt}\arctan\left(\frac{\hat{\nu}(\omega)}{\alpha\rho_{sv}\xi-k}\right) = \frac{2}{\alpha\rho_{sv}\xi-k}\hspace{1pt}.
\end{equation*}
Hence, we can find $\alpha_{1}>\alpha$ such that for all $\omega\in(\alpha,\alpha_{1})$,
\begin{equation}\label{eqD.7}
\omega\rho_{sv}\xi-\sqrt{\omega(\omega-1)}\hspace{1.5pt}\xi<k<\omega\rho_{sv}\xi+\sqrt{\omega(\omega-1)}\hspace{1.5pt}\xi
\end{equation}
and
\begin{equation}\label{eqD.8}
T<\frac{2}{\hat{\nu}(\omega)}\left[\frac{\pi}{2}-\arctan\left(\frac{\omega\rho_{sv}\xi-k}{\hat{\nu}(\omega)}\right)\right].
\end{equation}
If $\alpha\rho_{sv}\xi-\sqrt{\alpha(\alpha-1)}\hspace{1.5pt}\xi<k<\alpha\rho_{sv}\xi+\sqrt{\alpha(\alpha-1)}\hspace{1.5pt}\xi$, we can clearly find $\alpha_{1}>\alpha$ so that both \eqref{eqD.7} and \eqref{eqD.8} hold for all $\omega\in(\alpha,\alpha_{1})$. If $k=\alpha\rho_{sv}\xi+\sqrt{\alpha(\alpha-1)}\hspace{1.5pt}\xi$, then for all $\omega>\alpha$,
\begin{equation*}
(\alpha-\omega)\rho_{sv} < \sqrt{\omega(\omega-1)} - \sqrt{\alpha(\alpha-1)} \hspace{3pt}\Rightarrow \hspace{3pt} \alpha\rho_{sv}\xi+\sqrt{\alpha(\alpha-1)}\hspace{1.5pt}\xi < \omega\rho_{sv}\xi+\sqrt{\omega(\omega-1)}\hspace{1.5pt}\xi.
\end{equation*}
Furthermore, note that
\begin{equation*}
\lim_{\omega\hspace{1pt}\downarrow\hspace{1pt}\alpha^{+}}\hspace{1pt}\frac{2}{\hat{\nu}(\omega)}\left[\frac{\pi}{2}-\arctan\left(\frac{\omega\rho_{sv}\xi-k}{\hat{\nu}(\omega)}\right)\right] = \lim_{\omega\hspace{1pt}\downarrow\hspace{1pt}\alpha^{+}}\hspace{1pt}\frac{2\pi}{\hat{\nu}(\omega)} = \infty.
\end{equation*}
Hence, we can find $\alpha_{1}>\alpha$ such that both \eqref{eqD.7} and \eqref{eqD.8} hold for all $\omega\in(\alpha,\alpha_{1})$. Finally, if $k>\alpha\rho_{sv}\xi+\sqrt{\alpha(\alpha-1)}\hspace{1.5pt}\xi$, then we can find $\alpha_{1}>\alpha$ such that for all $\omega\in(\alpha,\alpha_{1})$,
\begin{equation}\label{eqD.9}
k>\omega\rho_{sv}\xi+\sqrt{\omega(\omega-1)}\hspace{1.5pt}\xi.
\end{equation}
The conclusion follows from Proposition 3.2 in \citet{Cozma:2015.2} and \eqref{eqD.6} -- \eqref{eqD.9}. The extension to the interval $[1,\alpha_{1})$ follows from Jensen's inequality. \qed

\section{Proof of Proposition \ref{Prop3.7}}\label{sec:aux3.7}

We follow the argument of Proposition \ref{Prop3.4} closely and condition on the $\sigma$-algebra $\mathcal{G}^{v}_{T}$ instead to deduce that
\begin{align}\label{eqE.1}
\E\big[\hspace{1.5pt}\oversymb{\hspace{-1.5pt}R}_{\hspace{-.5pt}t}^{\omega}\big] \leq S_{0}^{\omega}\E\bigg[\exp\bigg\{\left[\frac{1}{2}\hspace{1pt}\omega^{2}\big(1-\rho_{sv}^{2}\big) - \frac{1}{2}\hspace{1pt}\omega\right]\int_{0}^{t}{\oversymb{V}_{\hspace{-2.5pt}u}\hspace{1pt}du} + \omega\rho_{sv}\int_{0}^{t}{\sqrt{\oversymb{V}_{\hspace{-2.5pt}u}}\,\frac{\delta W^{4}_{u}}{\delta t}\,du}\bigg\}\bigg].
\end{align}
Suppose that $T<T^{*}$, with $T^{*}$ from \eqref{eq3.40.1} -- \eqref{eq3.40.2}. If $k<\alpha\rho_{sv}\xi+\frac{1}{2}\hspace{1pt}\sqrt{\alpha(\alpha-1)}\hspace{1.5pt}\xi$, then by a continuity argument, we can find $\alpha_{2}>\alpha$ such that for all $\omega\in(\alpha,\alpha_{2})$,
\begin{equation}\label{eqE.2}
k<\omega\rho_{sv}\xi+\frac{1}{2}\hspace{1pt}\sqrt{\omega(\omega-1)}\hspace{1.5pt}\xi \hspace{5pt}\text{ and }\hspace{5pt}
T<\frac{1}{\omega\rho_{sv}\xi+\sqrt{\omega(\omega-1)}\hspace{1.5pt}\xi-k}\hspace{1pt}.
\end{equation}
If $k=\alpha\rho_{sv}\xi+\frac{1}{2}\hspace{1pt}\sqrt{\alpha(\alpha-1)}\hspace{1.5pt}\xi$, since $k$, $\xi>0$ and for all $\omega>\alpha$, we have
\begin{equation*}
\rho_{sv}>-\hspace{1pt}\frac{1}{2}\hspace{1pt}\sqrt{1-\frac{1}{\alpha}}>-\hspace{1pt}\frac{1}{2}\hspace{1pt} \hspace{3pt}\Rightarrow \hspace{3pt} \alpha\rho_{sv}\xi+\frac{1}{2}\hspace{1pt}\sqrt{\alpha(\alpha-1)}\hspace{1.5pt}\xi < \omega\rho_{sv}\xi+\frac{1}{2}\hspace{1pt}\sqrt{\omega(\omega-1)}\hspace{1.5pt}\xi.
\end{equation*}
Furthermore, note that
\begin{equation*}
\frac{4(k-\alpha\rho_{sv}\xi)}{\alpha(\alpha-1)\xi^{2}} = \frac{1}{\alpha\rho_{sv}\xi+\sqrt{\alpha(\alpha-1)}\hspace{1.5pt}\xi-k} \hspace{5pt}\text{ and }\hspace{5pt}
\lim_{\omega\hspace{1pt}\downarrow\hspace{1pt}\alpha^{+}}\hspace{1pt}\frac{1}{\omega\rho_{sv}\xi+\sqrt{\omega(\omega-1)}\hspace{1.5pt}\xi-k} = T^{*},
\end{equation*}
with $T^{*}$ from \eqref{eq3.40.2}. Hence, we can find $\alpha_{2}>\alpha$ such that \eqref{eqE.2} holds for all $\omega\in(\alpha,\alpha_{2})$. Finally, if $k>\alpha\rho_{sv}\xi+\frac{1}{2}\hspace{1pt}\sqrt{\alpha(\alpha-1)}\hspace{1.5pt}\xi$, since
\begin{equation*}
\lim_{\omega\hspace{1pt}\downarrow\hspace{1pt}\alpha^{+}}\hspace{1pt}\frac{4(k-\omega\rho_{sv}\xi)}{\omega(\omega-1)\xi^{2}} = T^{*},
\end{equation*}
with $T^{*}$ from \eqref{eq3.40.2}, we can find $\alpha_{2}>\alpha$ such that for all $\omega\in(\alpha,\alpha_{2})$,
\begin{equation}\label{eqE.3}
k>\omega\rho_{sv}\xi+\frac{1}{2}\hspace{1pt}\sqrt{\omega(\omega-1)}\hspace{1.5pt}\xi \hspace{5pt}\text{ and }\hspace{5pt}
T<\frac{4(k-\omega\rho_{sv}\xi)}{\omega(\omega-1)\xi^{2}}\hspace{1pt}.
\end{equation}
The conclusion follows from Lemma \ref{Lem3.1} and \eqref{eqE.2} -- \eqref{eqE.3}. The extension to the interval $[1,\alpha_{2})$ follows from Jensen's inequality, with $\eta_{\omega}=\eta_{\alpha},\hspace{1pt}\forall\hspace{.5pt}\omega\in[1,\alpha]$. \qed

\section{Proof of Proposition \ref{Prop3.8.0}}\label{sec:aux3.8.0}

The following auxiliary result proves the almost sure positivity of the foreign interest rate.

\begin{lemma}\label{LemF.1}
Let $\kappa>(r_{0}^{f})^{-1}$ and define the stopping time
\begin{equation}\label{eqF.1}
\tau_{\kappa} = \inf\big\{t\geq 0 :\hspace{2pt} r^{f}_{t}\leq\kappa^{-1} \big\}.
\end{equation}
If $2k_{f}\theta_{f}>\xi_{f}^{2}\hspace{.5pt}$, then
\begin{equation}\label{eqF.2}
\plim_{\kappa\to\infty}\Prob\big(\tau_{\kappa}\leq T\big) = 0.
\end{equation}
\end{lemma}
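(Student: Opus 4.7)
The plan is to apply a Lyapunov-type argument to a suitable function of $r^{f}$ that blows up as its argument approaches zero, together with an appropriate Markov inequality. I will work with $g(x)=-\log x$ and the stopped process $r^{f}_{t\wedge\tau_{\kappa}}$, which is bounded below by $\kappa^{-1}$ up to $\tau_\kappa$, so that all terms below are well-defined. By It\^o's formula,
\begin{equation*}
-\log r^{f}_{t\wedge\tau_{\kappa}} = -\log r^{f}_{0} + \int_{0}^{t\wedge\tau_{\kappa}}\!\!\bigg[k_{f}-\frac{2k_{f}\theta_{f}-\xi_{f}^{2}}{2r^{f}_{s}}+\frac{\rho_{s\hspace{-.7pt}f}\xi_{f}\sqrt{v_{s}}}{\sqrt{r^{f}_{s}}}\bigg]ds -\int_{0}^{t\wedge\tau_{\kappa}}\!\!\frac{\xi_{f}}{\sqrt{r^{f}_{s}}}\hspace{1pt}dW^{f}_{s}.
\end{equation*}

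The key step is to neutralise the quanto cross term. Using Young's inequality $\sqrt{v_{s}/r^{f}_{s}}\leq\frac{1}{2}(v_{s}/\delta+\delta/r^{f}_{s})$ for any $\delta>0$, I will choose $\delta$ small enough that $|\rho_{s\hspace{-.7pt}f}|\xi_{f}\delta<2k_{f}\theta_{f}-\xi_{f}^{2}$, which is possible precisely because of the Feller condition. The coefficient of $1/r^{f}_{s}$ then becomes strictly negative, so taking expectations and dropping the resulting nonpositive integrand gives
\begin{equation*}
\E\big[-\log r^{f}_{T\wedge\tau_{\kappa}}\big] \leq -\log r^{f}_{0} + k_{f}T + \frac{|\rho_{s\hspace{-.7pt}f}|\xi_{f}}{2\delta}\,\E\!\int_{0}^{T}\!v_{s}\,ds,
\end{equation*}
where the stochastic integral vanishes under expectation thanks to the $\kappa^{-1}$ lower bound on the integrand, and the last expectation is finite and independent of $\kappa$ by the classical mean of the CIR process. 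Call this bound $C_{T}$.

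To convert the moment bound into the probability bound, I will lower-bound the left-hand side by splitting on $\{\tau_{\kappa}\leq T\}$ and using continuity, which gives $r^{f}_{T\wedge\tau_{\kappa}}=\kappa^{-1}$ on that event. On its complement, I will use the elementary inequality $-\log x\geq-x$ to obtain
\begin{equation*}
\E\big[-\log r^{f}_{T\wedge\tau_{\kappa}}\big] \geq \log\kappa\cdot\Prob(\tau_{\kappa}\leq T) - \E\big[r^{f}_{T}\big].
\end{equation*}
A standard moment estimate for $r^{f}$ (e.g., take expectation in the SDE and bound the cross term via Cauchy--Schwarz together with the known bound on $\E[v_{t}]$) yields $\sup_{t\in[0,T]}\E[r^{f}_{t}]<\infty$, so combining with the upper bound gives $\Prob(\tau_{\kappa}\leq T)\leq C'_{T}/\log\kappa$, which tends to zero as $\kappa\to\infty$.

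The main obstacle is the quanto correction term $-\rho_{s\hspace{-.7pt}f}\xi_{f}\sqrt{v_{t}r^{f}_{t}}$ in the drift, absent in the pure CIR dynamics: it produces a $1/\sqrt{r^{f}}$ singularity after applying $-\log$, which threatens to overwhelm the $-1/r^{f}$ repulsive term coming from the Feller condition. The Young-inequality split above is what makes the bookkeeping work, and it is precisely the strictness $2k_{f}\theta_{f}>\xi_{f}^{2}$ (not merely $\geq$) that leaves the margin needed to absorb this singularity for an admissible choice of $\delta$.
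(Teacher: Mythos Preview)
Your proof is correct and follows a genuinely different route from the paper's. The paper applies It\^o's formula to the power function $U(x)=x^{-\alpha}$ with the specific exponent $\alpha=(2k_{f}\theta_{f}-\xi_{f}^{2})/(2\xi_{f}^{2})$, chosen so that the leading $x^{-(1+\alpha)}$ term in the generator receives a favourable sign; it then controls the quanto cross term \emph{after} taking expectations via H\"older's inequality and a boundedness-from-below argument for the function $x\mapsto ax-bx^{\alpha/(1+\alpha)}-c'x^{(1+2\alpha)/(2(1+\alpha))}$. You instead use the Lyapunov function $-\log x$, tame the quanto singularity \emph{pathwise} with Young's inequality, and extract the probability bound via $-\log x\geq -x$ together with a first-moment bound on $r^{f}_{T}$.

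The trade-off is that the paper obtains the sharper polynomial estimate $\Prob(\tau_{\kappa}\leq T)\leq C\kappa^{-\alpha}$, whereas your argument yields only the logarithmic rate $\Prob(\tau_{\kappa}\leq T)\leq C'/\log\kappa$. For the purposes of this lemma---and for its downstream use in Lemma~\ref{LemF.6}, where one merely lets $\kappa\to\infty$---either rate suffices. Your approach is more elementary and self-contained (it only needs $\E\!\int_{0}^{T}v_{s}\,ds<\infty$ and a first moment of $r^{f}$), while the paper's choice of $U$ is tailored to exploit the full margin in the Feller condition and would be the natural starting point if a quantitative rate in $\kappa$ were required later.
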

\begin{proof}
Define the function $U:(0,\infty)\mapsto\RR$ by
\begin{equation}\label{eqF.3}
U(x) = x^{-\alpha},\hspace{1em} \alpha = \frac{1}{2\xi_{f}^{2}}\big(2k_{f}\theta_{f}-\xi_{f}^{2}\big).
\end{equation}
By It\^o's formula, we have
\begin{align}\label{eqF.4}
\E\Big[U\big(r_{T\wedge\hspace{1pt}\tau_{\kappa}}^{f}\big)\Big] &= U\big(r_{0}^{f}\big) -
\E\int_{0}^{T\wedge\hspace{1pt}\tau_{\kappa}}{\hspace{-1pt}\alpha\big(r_{s}^{f}\big)^{-(1+\alpha)}\big(k_{f}\theta_{f}-k_{f}r^{f}_{s}-\rho_{s\hspace{-.7pt}f}\xi_{f}\sqrt{v_{s}r^{f}_{s}}\hspace{1pt}\big)ds} \nonumber\\[2pt]
&\hspace{-1em}+ \frac{1}{2}\hspace{.5pt}\E\int_{0}^{T\wedge\hspace{1pt}\tau_{\kappa}}{\hspace{-1pt}\alpha(1+\alpha)\xi_{f}^{2}\big(r_{s}^{f}\big)^{-(1+\alpha)}ds}
- \E\int_{0}^{T\wedge\hspace{1pt}\tau_{\kappa}}{\hspace{-1pt}\alpha\xi_{f}\big(r_{s}^{f}\big)^{-(0.5+\alpha)}dW^{f}_{s}}.
\end{align}
However,
\begin{equation*}
\E\int_{0}^{T}{\alpha^{2}\xi_{f}^{2}\big(r_{s}^{f}\big)^{-(1+2\alpha)}\Ind_{s<\hspace{.5pt}\tau_{\kappa}}ds} \leq
\alpha^{2}\xi_{f}^{2}\kappa^{1+2\alpha}\hspace{.5pt}T < \infty,
\end{equation*}
so the stochastic integral on the right-hand side of \eqref{eqF.4} is a true martingale. Hence,
\begin{equation}\label{eqF.5}
\E\Big[U\big(r_{T\wedge\hspace{1pt}\tau_{\kappa}}^{f}\big)\Big] \leq U\big(r_{0}^{f}\big) -
\E\int_{0}^{T\wedge\hspace{1pt}\tau_{\kappa}}{\hspace{-2pt}\Big(a\big(r_{s}^{f}\big)^{-(1+\alpha)}-b\big(r_{s}^{f}\big)^{-\alpha}-c\hspace{.5pt}v_{s}^{0.5}\big(r_{s}^{f}\big)^{-(0.5+\alpha)}\Big)ds}\hspace{.5pt},
\end{equation}
where
\begin{equation}\label{eqF.6}
a = \frac{1}{8\xi_{f}^{2}}\big(2k_{f}\theta_{f}-\xi_{f}^{2}\big)^{2},\hspace{3pt}
b = \frac{k_{f}}{2\xi_{f}^{2}}\big(2k_{f}\theta_{f}-\xi_{f}^{2}\big),\hspace{3pt}
c = \frac{|\rho_{s\hspace{-.7pt}f}|}{2\xi_{f}}\big(2k_{f}\theta_{f}-\xi_{f}^{2}\big).
\end{equation}
Employing Fubini's theorem and H\"older's inequality in \eqref{eqF.5}, we get
\begin{align}\label{eqF.7}
\E\Big[U\big(r_{T\wedge\hspace{1pt}\tau_{\kappa}}^{f}\big)\Big] &\leq U\big(r_{0}^{f}\big) -
\int_{0}^{T}\bigg(a\E\Big[\big(r_{s}^{f}\big)^{-(1+\alpha)}\Ind_{s<\hspace{.5pt}\tau_{\kappa}}\Big]-
b\E\Big[\big(r_{s}^{f}\big)^{-(1+\alpha)}\Ind_{s<\hspace{.5pt}\tau_{\kappa}}\Big]^{\frac{\alpha}{1+\alpha}} \nonumber\\[3pt]
&- c\hspace{.5pt}\sup_{u\in[0,T]}\E\Big[v_{u}^{1+\alpha}\Big]^{\frac{1}{2(1+\alpha)}}\E\Big[\big(r_{s}^{f}\big)^{-(1+\alpha)}\Ind_{s<\hspace{.5pt}\tau_{\kappa}}\Big]^{\frac{1+2\alpha}{2(1+\alpha)}}\bigg)ds\hspace{.5pt}.
\end{align}
The moments of the square root process are uniformly bounded \citep{Dereich:2012} and the function $f:[0,\infty)\mapsto\RR$ defined by
\begin{equation}\label{eqF.8}
f(x) = ax - bx^{\frac{\alpha}{1+\alpha}} - c\hspace{.5pt}\sup_{u\in[0,T]}\E\big[v_{u}^{1+\alpha}\big]^{\frac{1}{2(1+\alpha)}}x^{\frac{1+2\alpha}{2(1+\alpha)}}
\end{equation}
is clearly bounded from below. Hence, we can find a constant $C$ independent of $\kappa$ such that
\begin{equation}\label{eqF.9}
\E\Big[U\big(r_{T\wedge\hspace{1pt}\tau_{\kappa}}^{f}\big)\Big] \leq C.
\end{equation}
Since $r^{f}$ has continuous paths, we have $r_{\tau_{\kappa}}^{f}=\kappa^{-1}$ and $U(r_{\tau_{\kappa}}^{f})=\kappa^{\alpha}$. Therefore, using \eqref{eqF.9} and the fact that $U$ is positive, we deduce that
\begin{equation}\label{eqF.10}
\kappa^{\alpha}\Prob\big(\tau_{\kappa}\leq T\big) = \E\Big[U\big(r_{\tau_{\kappa}}^{f}\big)\Ind_{\tau_{\kappa}\leq\hspace{1pt}T}\Big] = \E\Big[U\big(r_{T\wedge\hspace{1pt}\tau_{\kappa}}^{f}\big)\Ind_{\tau_{\kappa}\leq\hspace{1pt}T}\Big] \leq
\E\Big[U\big(r_{T\wedge\hspace{1pt}\tau_{\kappa}}^{f}\big)\Big] \leq C.
\end{equation}
Letting $\kappa\to\infty$ in \eqref{eqF.10} yields the conclusion.
\end{proof}

The next two lemmas give moment bounds for the original and the discretized foreign interest rate processes.

\begin{lemma}\label{LemF.2}
The process $r^{f}$ has uniformly bounded moments, i.e.,
\begin{equation}\label{eqF.11}
\E\bigg[\sup_{t\in[0,T]}\big(r^{f}_{t}\big)^{p}\bigg] < \infty,\hspace{5pt} \forall\hspace{.5pt}p\geq1.
\end{equation}
\end{lemma}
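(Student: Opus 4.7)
The strategy is to first obtain the pointwise moment bound $\sup_{t \in [0,T]}\E[(r^{f}_{t})^{p}] < \infty$ for every $p \geq 1$ and then bootstrap to the sup-in-time bound via the Burkholder-Davis-Gundy (BDG) inequality. It suffices to prove the claim for $p \geq 2$, since $x^{p} \leq 1 + x^{2}$ for $x \geq 0$ and $p \in [1,2]$ reduces the smaller exponents to the case $p = 2$.

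Fix $p \geq 2$ and apply It\^o's formula to $(r^{f}_{t})^{p}$; the function $x \mapsto x^{p}$ is $C^{2}$ on $[0,\infty)$, so no regularisation near zero is needed. Introducing the localising stopping times $\tau_{n} = \inf\{t \geq 0 : r^{f}_{t} \geq n\}$ to make the stochastic integral a true martingale and taking expectation yields
\begin{align*}
\E\big[(r^{f}_{t \wedge \tau_{n}})^{p}\big]
&= (r^{f}_{0})^{p} + \E\!\int_{0}^{t \wedge \tau_{n}}\!\Big(p k_{f}\theta_{f}\,(r^{f}_{s})^{p-1} - p k_{f}\,(r^{f}_{s})^{p} \\
&\quad - p\rho_{s\hspace{-.7pt}f}\xi_{f}\,(r^{f}_{s})^{p-1/2}\sqrt{v_{s}} + \tfrac{1}{2}p(p-1)\xi_{f}^{2}(r^{f}_{s})^{p-1}\Big)ds.
\end{align*}
Young's inequality with conjugate exponents $\tfrac{2p}{2p-1}$ and $2p$ gives $(r^{f}_{s})^{p-1/2}\sqrt{v_{s}} \leq \tfrac{2p-1}{2p}(r^{f}_{s})^{p} + \tfrac{1}{2p}v_{s}^{p}$, and $(r^{f}_{s})^{p-1} \leq 1 + (r^{f}_{s})^{p}$. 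Combined with the standard uniform moment bounds of arbitrary order for the CIR variance process $v$, this produces a Gronwall inequality
$$\E\big[(r^{f}_{t \wedge \tau_{n}})^{p}\big] \leq C_{1} + C_{2}\int_{0}^{t}\E\big[(r^{f}_{s \wedge \tau_{n}})^{p}\big]\,ds,$$
with constants independent of $n$. Gronwall's lemma and Fatou (sending $n \to \infty$) deliver the pointwise bound for every $p \geq 2$, hence for every $p \geq 1$.

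With this in hand, take absolute values and the sup over $s \in [0,t]$ in the It\^o expansion before taking expectation. The drift contribution is controlled directly by the pointwise bound and the same Young estimates. For the martingale term, BDG followed by Jensen gives
$$\E\bigg[\sup_{s \in [0,T]}\bigg|\int_{0}^{s}p\xi_{f}(r^{f}_{u})^{p-1/2}dW^{f}_{u}\bigg|\bigg] \leq C\bigg(\int_{0}^{T}\E\big[(r^{f}_{u})^{2p-1}\big]du\bigg)^{\!1/2},$$
which is finite because the pointwise bound holds for \emph{all} orders. Summing these estimates yields \eqref{eqF.11}.

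The only non-routine point is the quanto drift $-\rho_{s\hspace{-.7pt}f}\xi_{f}\sqrt{v_{t}r^{f}_{t}}$, which couples $r^{f}$ to the stochastic variance $v$ and precludes a direct reduction to a standard CIR moment estimate. Young's inequality bypasses this coupling by splitting the cross term into an $(r^{f})^{p}$ contribution (absorbed into Gronwall) and a $v^{p}$ contribution (handled by the known CIR bounds). The Feller condition $2k_{f}\theta_{f} > \xi_{f}^{2}$ plays no role in the moment bound itself; it was needed only in Lemma \ref{LemF.1} to guarantee strict positivity.
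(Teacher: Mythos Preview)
Your proof is correct but takes a different route from the paper's. The paper works directly from the \emph{integral form} of the SDE \eqref{eq2.1}: it writes $r^{f}_{t}$ as in \eqref{eqF.12}, bounds $(r^{f}_{t})^{p}$ using $(\sum |a_i|)^p \le n^{p-1}\sum|a_i|^p$ together with the elementary estimate $2\sqrt{ab}\le a+b$ on the quanto term, takes the pathwise supremum \emph{immediately}, applies BDG to the stochastic integral, and closes with a single Gronwall inequality on $t\mapsto\E[\sup_{s\le t}(r^{f}_{s})^{p}]$. Your approach instead applies It\^o's formula to $x\mapsto x^{p}$, first establishes the pointwise bound $\sup_{t}\E[(r^{f}_{t})^{p}]$ via localisation and Gronwall, and then bootstraps to the supremum-in-time bound by a second application of BDG, which requires the pointwise bound at the higher order $2p-1$. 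Both handle the quanto coupling the same way in spirit (splitting the cross term into an $(r^{f})^{p}$ part for Gronwall and a $v^{p}$ part handled by known CIR moment bounds). The paper's one-pass argument is slightly more economical---no localisation, no bootstrap, and no need to invoke moments of order $2p-1$---while your It\^o-formula route is the textbook approach and makes the structure of the drift contributions more transparent. Your closing remark that the Feller condition plays no role here is also consistent with the paper's proof.
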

\begin{proof}
Fix any $p\geq1$. From \eqref{eq2.1},
\begin{equation}\label{eqF.12}
r^{f}_{t} = r^{f}_{0} + k_{f}\theta_{f}t - k_{f}\int_{0}^{t}{r^{f}_{u}du} - \rho_{s\hspace{-.7pt}f}\xi_{f}\int_{0}^{t}{\sqrt{v_{u}r^{f}_{u}}\hspace{1pt}du} + \xi_{f}\int_{0}^{t}{\sqrt{r^{f}_{u}}\hspace{1pt}dW^{f}_{u}}.
\end{equation}
Using the fact that $2\sqrt{|ab|}\leq |a|+|b|$ and H\"older's inequality, we deduce that
\begin{align}\label{eqF.13}
\big(r^{f}_{t}\big)^{p} &\leq 2^{2(p-1)}\big(r^{f}_{0} + k_{f}\theta_{f}t\big)^{p} + 2^{p-2}|\rho_{s\hspace{-.7pt}f}|^{p}\xi_{f}^{p}\bigg(\int_{0}^{t}{v_{u}\hspace{.5pt}du}\bigg)^{p} \nonumber\\[2pt]
&+ 2^{p-2}\big(2k_{f}+|\rho_{s\hspace{-.7pt}f}|\xi_{f}\big)^{p}\bigg(\int_{0}^{t}{r^{f}_{u}du}\bigg)^{p} + 2^{2(p-1)}\xi_{f}^{p}\bigg|\int_{0}^{t}{\sqrt{r^{f}_{u}}\hspace{1pt}dW^{f}_{u}}\bigg|^{p}.
\end{align}
Fix $t\in[0,T]$. Using H\"older's inequality, we get
\begin{align}\label{eqF.14}
\sup_{s\in[0,t]}\big(r^{f}_{s}\big)^{p} &\leq 2^{2(p-1)}\big(r^{f}_{0} + k_{f}\theta_{f}T\big)^{p} + 2^{p-2}|\rho_{s\hspace{-.7pt}f}|^{p}\xi_{f}^{p}T^{p-1}\int_{0}^{T}{v_{u}^{p}\hspace{.5pt}du} \nonumber\\[1pt]
&\hspace{-1.5em}+ 2^{p-2}\big(2k_{f}+|\rho_{s\hspace{-.7pt}f}|\xi_{f}\big)^{p}T^{p-1}\int_{0}^{t}{\big(r^{f}_{u}\big)^{p}du} + 2^{2(p-1)}\xi_{f}^{p}\sup_{s\in[0,t]}\bigg|\int_{0}^{s}{\sqrt{r^{f}_{u}}\hspace{1pt}dW^{f}_{u}}\bigg|^{p}.
\end{align}
From the Burkholder-Davis-Gundy inequality, we know that there exists a constant $C_{p}>0$ such that
\begin{equation*}
\E\bigg[\sup_{s\in[0,t]}\bigg|\int_{0}^{s}{\sqrt{r^{f}_{u}}\hspace{1pt}dW^{f}_{u}}\bigg|^{p}\hspace{1pt}\bigg] \leq C_{p}\E\bigg[\bigg(\int_{0}^{t}{r^{f}_{u}du}\bigg)^{p/2}\hspace{1pt}\bigg] \leq \frac{1}{2}\hspace{1pt}C_{p} + \frac{1}{2}\hspace{1pt}C_{p}T^{p-1}\E\bigg[\int_{0}^{t}{\big(r^{f}_{u}\big)^{p}du}\bigg].
\end{equation*}
Taking expectations and employing Fubini's theorem in \eqref{eqF.14},
\begin{align*}
\E\bigg[\sup_{s\in[0,t]}\big(r^{f}_{s}\big)^{p}\bigg] &\leq 2^{2(p-1)}\big(r^{f}_{0} + k_{f}\theta_{f}T\big)^{p} + 2^{2p-3}\xi_{f}^{p}C_{p} + 2^{p-2}|\rho_{s\hspace{-.7pt}f}|^{p}\xi_{f}^{p}T^{p}\sup_{u\in[0,T]}\E\big[v_{u}^{p}\big] \\[0pt]
&\hspace{-1.5em}+ \Big(2^{p-2}\big(2k_{f}+|\rho_{s\hspace{-.7pt}f}|\xi_{f}\big)^{p}T^{p-1} + 2^{2p-3}\xi_{f}^{p}C_{p}T^{p-1}\Big)\int_{0}^{t}{\E\bigg[\sup_{s\in[0,u]}\big(r^{f}_{s}\big)^{p}\bigg]du}.
\end{align*}
Applying Gronwall's inequality, we get
\begin{align}\label{eqF.15}
\E\bigg[\sup_{t\in[0,T]}\big(r^{f}_{t}\big)^{p}\bigg] &\leq \Big(2^{2(p-1)}\big(r^{f}_{0} + k_{f}\theta_{f}T\big)^{p} + 2^{2p-3}\xi_{f}^{p}C_{p} + 2^{p-2}|\rho_{s\hspace{-.7pt}f}|^{p}\xi_{f}^{p}T^{p}\sup_{u\in[0,T]}\E\big[v_{u}^{p}\big]\Big) \nonumber\\[0pt]
&\times \exp\left\{2^{p-2}\big(2k_{f}+|\rho_{s\hspace{-.7pt}f}|\xi_{f}\big)^{p}T^{p} + 2^{2p-3}\xi_{f}^{p}C_{p}T^{p}\right\}.
\end{align}
The conclusion follows from the boundedness of moments of $v$.
\end{proof}

\begin{lemma}\label{LemF.3}
The process $\hat{r}^{f}$ from \eqref{eq2.7.2} has uniformly bounded moments, i.e.,
\begin{equation}\label{eqF.16}
\sup_{\delta t\in(0,\eta)}\E\bigg[\sup_{t\in[0,T]}\big(\hat{r}^{f}_{t}\big)^p\bigg] < \infty,\hspace{5pt} \forall\hspace{.5pt}p\geq1,\hspace{5pt} \forall\hspace{.5pt}\eta>0.
\end{equation}
\end{lemma}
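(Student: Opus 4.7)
The plan is to adapt the argument of Lemma~\ref{LemF.2} to the time-continuous FTE interpolant, exploiting the uniform-in-$\delta t$ moment bounds for the discretized variance process $\oversymb{V}$ that are available from \citet{Cozma:2015}. First, I would observe that $\hat{r}^{f}_{t}=(\tilde{r}^{f}_{t})^{+}\leq|\tilde{r}^{f}_{t}|$, so it suffices to control the moments of $\tilde{r}^{f}$. Since the drift and diffusion coefficients of the interpolant in \eqref{eq2.7.1} are piecewise constant in $t$, a telescoping argument across the grid yields the integral representation
\begin{equation*}
\tilde{r}^{f}_{t} = r^{f}_{0} + k_{f}\theta_{f}\hspace{1pt}t - k_{f}\int_{0}^{t}{\oversymb{r}^{f}_{u}\,du} - \rho_{s\hspace{-.7pt}f}\xi_{f}\int_{0}^{t}{\sqrt{\oversymb{V}_{\hspace{-2.5pt}u}\,\oversymb{r}^{f}_{u}}\,du} + \xi_{f}\int_{0}^{t}{\sqrt{\oversymb{r}^{f}_{u}}\,dW^{f}_{u}}\hspace{.5pt},
\end{equation*}
which is the structural analogue of \eqref{eqF.12} with the piecewise-constant quantities $\oversymb{V}_{u}$ and $\oversymb{r}^{f}_{u}$ in place of $v_{u}$ and $r^{f}_{u}$.

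From here the strategy of Lemma~\ref{LemF.2} should apply almost verbatim. Fixing $p\geq2$ (with the case $p\in[1,2)$ following by Jensen's inequality), I would take absolute values, raise to the power $p$, and apply the elementary bound $(a_{1}+\cdots+a_{k})^{p}\leq k^{p-1}(a_{1}^{p}+\cdots+a_{k}^{p})$ together with the AM-GM inequality $\sqrt{\oversymb{V}_{u}\,\oversymb{r}^{f}_{u}}\leq\tfrac{1}{2}\bigl(\oversymb{V}_{u}+\oversymb{r}^{f}_{u}\bigr)$ on the quanto correction. After taking the supremum over $s\in[0,t]$, using H\"older's inequality on the time integrals, the Burkholder-Davis-Gundy inequality on the stochastic integral, and Young's inequality $x^{p/2}\leq 1+x^{p}$ to match powers, one arrives at an estimate of the form
\begin{equation*}
\E\bigg[\sup_{s\in[0,t]}\big(\hat{r}^{f}_{s}\big)^{p}\bigg] \leq C_{1} + C_{2}\int_{0}^{t}{\E\big[\oversymb{V}_{u}^{p}\big]\hspace{1pt}du} + C_{3}\int_{0}^{t}{\E\bigg[\sup_{s\in[0,u]}\big(\hat{r}^{f}_{s}\big)^{p}\bigg]du}\hspace{.5pt},
\end{equation*}
for constants $C_{1},C_{2},C_{3}$ depending on $T$, $p$, $k_{f}$, $\theta_{f}$, $\rho_{s\hspace{-.7pt}f}$, $\xi_{f}$ but not on $\delta t$. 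Here I would use the pointwise inequality $\oversymb{r}^{f}_{u}\leq\sup_{s\leq u}\hat{r}^{f}_{s}$, since $\oversymb{r}^{f}_{u}$ is simply the value of $\hat{r}^{f}$ at the most recent grid point.

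The uniform moment bound $\sup_{\delta t\in(0,\eta)}\sup_{u\in[0,T]}\E\bigl[\oversymb{V}_{u}^{p}\bigr]<\infty$ is a standard property of the FTE discretization of the CIR variance process (cf.\ Proposition~3.4 in \citet{Cozma:2015}), so the second term on the right-hand side above is uniformly bounded in $\delta t$ by a constant times $T$. A direct application of Gronwall's inequality then delivers \eqref{eqF.16}. The main obstacle in this argument is the quanto correction term, since it couples the variance and foreign-rate discretizations, but the AM-GM split cleanly decomposes it into a piece absorbed by the known moments of $\oversymb{V}$ and a piece handled by Gronwall; in particular, note that the Feller condition $2k_{f}\theta_{f}>\xi_{f}^{2}$ used in Proposition~\ref{Prop3.8.0} and Lemma~\ref{LemF.1} plays no role here, which is natural because the discretized process cannot become negative but can, in principle, vanish.
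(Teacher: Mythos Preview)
Your proposal is correct and follows essentially the same approach as the paper: starting from the integral representation of $\tilde{r}^{f}$, bounding $\hat{r}^{f}\leq|\tilde{r}^{f}|$, applying the AM--GM split $2\sqrt{\oversymb{V}_{u}\oversymb{r}^{f}_{u}}\leq\oversymb{V}_{u}+\oversymb{r}^{f}_{u}$ to the quanto term, then H\"older, Burkholder--Davis--Gundy and Gronwall, with the key pointwise inequality $\oversymb{r}^{f}_{u}\leq\sup_{s\leq u}\hat{r}^{f}_{s}$ and the uniform-in-$\delta t$ moment bounds for $\oversymb{V}$ from \citet{Cozma:2015}. Your remark that the Feller condition plays no role here is also consistent with the paper.
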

\begin{proof}
Fix any $p\geq1$ and $\eta>0$. From \eqref{eq2.7.1},
\begin{equation}\label{eqF.17}
\tilde{r}^{f}_{t} = r^{f}_{0} + k_{f}\theta_{f}t - k_{f}\int_{0}^{t}{\oversymb{r}^{f}_{u}du} - \rho_{s\hspace{-.7pt}f}\xi_{f}\int_{0}^{t}{\sqrt{\oversymb{V}_{\hspace{-2.5pt}u}\oversymb{r}^{f}_{u}}\hspace{1pt}du} + \xi_{f}\int_{0}^{t}{\sqrt{\oversymb{r}^{f}_{u}}\hspace{1pt}dW^{f}_{u}}.
\end{equation}
Since $\hat{r}^{f}_{t}\leq|\tilde{r}^{f}_{t}|$, following the argument of Lemma \ref{LemF.2}, we deduce that
\begin{align}\label{eqF.18}
\E\bigg[\sup_{s\in[0,t]}\big(\hat{r}^{f}_{s}\big)^{p}\bigg] &\leq 2^{2(p-1)}\big(r^{f}_{0} + k_{f}\theta_{f}T\big)^{p} + 2^{2p-3}\xi_{f}^{p}C_{p} + 2^{p-2}|\rho_{s\hspace{-.7pt}f}|^{p}\xi_{f}^{p}T^{p}\sup_{u\in[0,T]}\E\big[\oversymb{V}_{\hspace{-2.5pt}u}^{p}\big] \nonumber\\[0pt]
&\hspace{-1em}+ \Big(2^{p-2}\big(2k_{f}+|\rho_{s\hspace{-.7pt}f}|\xi_{f}\big)^{p}T^{p-1} + 2^{2p-3}\xi_{f}^{p}C_{p}T^{p-1}\Big)\int_{0}^{t}{\E\Big[\big(\oversymb{r}^{f}_{u}\big)^{p}\Big]du}.
\end{align}
Since $\sup_{u\in[0,T]}\E\big[\oversymb{V}_{\hspace{-2.5pt}u}^{p}\big]\leq\sup_{u\in[0,T]}\E\big[V_{u}^{p}\big]$, with $V$ defined as in \eqref{eq2.6}, and $\oversymb{r}^{f}_{u}\leq\sup_{s\in[0,u]}\hat{r}^{f}_{s}$, applying Gronwall's inequality, we get
\begin{align}\label{eqF.19}
\E\bigg[\sup_{t\in[0,T]}\big(\hat{r}^{f}_{t}\big)^{p}\bigg] &\leq \Big(2^{2(p-1)}\big(r^{f}_{0} + k_{f}\theta_{f}T\big)^{p} + 2^{2p-3}\xi_{f}^{p}C_{p} + 2^{p-2}|\rho_{s\hspace{-.7pt}f}|^{p}\xi_{f}^{p}T^{p}\sup_{u\in[0,T]}\E\big[V_{u}^{p}\big]\Big) \nonumber\\[0pt]
&\times \exp\left\{2^{p-2}\big(2k_{f}+|\rho_{s\hspace{-.7pt}f}|\xi_{f}\big)^{p}T^{p} + 2^{2p-3}\xi_{f}^{p}C_{p}T^{p}\right\}.
\end{align}
The conclusion follows from Proposition 3.4 in \citet{Cozma:2015}.
\end{proof}

Next, we use Lemma \ref{LemF.3} to prove the convergence of the $L^{2}$ difference between the two time continuous discretizations.

\begin{lemma}\label{LemF.4}
The $L^{2}$ difference between $\hat{r}^{f}$ and $\oversymb{r}^{f}$ converges to zero with $\delta t$, i.e.,
\begin{equation}\label{eqF.20}
\plim_{\delta t\to0}\hspace{1pt}\sup_{t\in[0,T]}\E\Big[\big(\hat{r}^{f}_{t}-\oversymb{r}^{f}_{t}\big)^{2}\Big] = 0.
\end{equation}
\end{lemma}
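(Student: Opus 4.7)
The plan is to exploit the simple pointwise inequality $|x^+ - y^+| \leq |x-y|$ to reduce the question to bounding the $L^2$ increment of the pre-truncation process $\tilde r^f$ over a single time step of length at most $\delta t$, and then use the uniform moment bounds already available for $\bar V$ (Proposition 3.4 in \citet{Cozma:2015}) and for $\hat r^f$ (Lemma \ref{LemF.3}).

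More precisely, for $t\in[t_n,t_{n+1})$ the definition \eqref{eq2.7.1} gives
\begin{equation*}
\tilde r^f_t - \tilde r^f_{t_n} \;=\; A_n(t-t_n) + \xi_f\sqrt{\bar r^f_{t_n}}\,(W^f_t-W^f_{t_n}),
\qquad A_n \;\equiv\; k_f\theta_f - k_f\bar r^f_{t_n} - \rho_{s\hspace{-.7pt}f}\xi_f\sqrt{\bar V_{t_n}\bar r^f_{t_n}}\,,
\end{equation*}
so that, since $\hat r^f_t = (\tilde r^f_t)^+$ and $\bar r^f_t = (\tilde r^f_{t_n})^+$, the Lipschitz property of $x\mapsto x^+$ yields
\begin{equation*}
\big|\hat r^f_t - \bar r^f_t\big|^2 \;\leq\; \big|\tilde r^f_t - \tilde r^f_{t_n}\big|^2 \;\leq\; 2A_n^2(t-t_n)^2 + 2\xi_f^2\,\bar r^f_{t_n}\,(W^f_t-W^f_{t_n})^2.
\end{equation*}
Taking expectations and using that $\bar r^f_{t_n}$ and $A_n$ are $\mathcal F_{t_n}$-measurable while $W^f_t-W^f_{t_n}$ is independent of $\mathcal F_{t_n}$ and has variance $t-t_n\leq\delta t$, I obtain
\begin{equation*}
\sup_{t\in[0,T]}\E\!\left[\big|\hat r^f_t - \bar r^f_t\big|^2\right]
\;\leq\; 2(\delta t)^2\sup_{0\leq n\leq N-1}\E[A_n^2] \;+\; 2\xi_f^2\delta t\,\sup_{0\leq n\leq N-1}\E[\bar r^f_{t_n}].
\end{equation*}

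The remaining task is to show that both suprema on the right are uniformly bounded for $\delta t\in(0,\eta)$. Expanding $A_n^2$ and applying Cauchy--Schwarz to the cross term $\bar V_{t_n}\bar r^f_{t_n}$, it suffices to bound $\sup_{\delta t}\sup_t \E[(\bar r^f_t)^2]$ and $\sup_{\delta t}\sup_t \E[\bar V_t^2]$; both follow from Lemma \ref{LemF.3} (since $\bar r^f_t\leq\hat r^f_t$ pointwise) and Proposition 3.4 in \citet{Cozma:2015} respectively. Letting $\delta t\to0$ in the display then yields \eqref{eqF.20}.

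I do not expect any serious obstacle here: the only subtlety is matching the sampling times correctly and observing that the truncation $(\cdot)^+$ is a contraction, which is exactly what allows the argument to bypass any pathwise analysis of the sign of $\tilde r^f$. The uniform-in-$\delta t$ moment bounds do all the real work and have already been established.
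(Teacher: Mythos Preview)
Your proof is correct and essentially identical to the paper's: both use the contraction $|x^+ - y^+|\leq|x-y|$ to reduce to the one-step increment of $\tilde r^f$, split into drift and diffusion parts, and invoke Lemma~\ref{LemF.3} and Proposition~3.4 of \citet{Cozma:2015} for the required moment bounds. One minor quibble: the assertion ``$\bar r^f_t\leq\hat r^f_t$ pointwise'' is false in general (since $\bar r^f_t=\hat r^f_{t_n}$ while $\hat r^f_t$ can dip below $\hat r^f_{t_n}$), but what you actually need---a uniform bound on $\sup_t\E[(\bar r^f_t)^2]$---follows immediately from $\bar r^f_t=\hat r^f_{t_n}$ and Lemma~\ref{LemF.3}.
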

\begin{proof}
Suppose that $t\in[t_{n},t_{n+1})$. Since $|\hat{r}^{f}_{t}-\oversymb{r}^{f}_{t}|\leq|\tilde{r}^{f}_{t}-\tilde{r}^{f}_{t_{n}}|$ and from \eqref{eq2.7.1}, we can bound the squared absolute difference from above as follows:
\begin{align*}
\big(\hat{r}^{f}_{t}-\oversymb{r}^{f}_{t}\big)^{2} &\leq
\Big(k_{f}\theta_{f}\delta t + 0.5|\rho_{s\hspace{-.7pt}f}|\xi_{f}\delta t\hspace{1pt}V_{t_{n}} + \big(k_{f}+ 0.5|\rho_{s\hspace{-.7pt}f}|\xi_{f}\big)\delta t\hspace{1pt}\hat{r}_{t_{n}}^{f} +\xi_{f}\sqrt{\hat{r}_{t_{n}}^{f}}\hspace{1pt}\big|W^{f}_{t}-W^{f}_{t_{n}}\big|\Big)^{\hspace{-1pt}2} \\[3pt]
&\hspace{-1.55em}\leq 4k_{f}^{2}\theta_{f}^{2}(\delta t)^{2} + |\rho_{s\hspace{-.7pt}f}|^{2}\xi_{f}^{2}(\delta t)^{2}V_{t_{n}}^{2} + \big(2k_{f}+ |\rho_{s\hspace{-.7pt}f}|\xi_{f}\big)^{2}(\delta t)^{2}\big(\hat{r}_{t_{n}}^{f}\big)^{2} + 4\xi_{f}^{2}\hat{r}_{t_{n}}^{f}\big(W^{f}_{t}-W^{f}_{t_{n}}\big)^{2}.
\end{align*}
Therefore,
\begin{align}\label{eqF.21}
\sup_{t\in[0,T]}\E\Big[\big(\hat{r}^{f}_{t}-\oversymb{r}^{f}_{t}\big)^{2}\Big] &\leq
4k_{f}^{2}\theta_{f}^{2}(\delta t)^{2} + |\rho_{s\hspace{-.7pt}f}|^{2}\xi_{f}^{2}(\delta t)^{2}\sup_{t\in[0,T]}\E\big[V_{t}^{2}\big] \nonumber\\[1pt]
&\hspace{-1.5em}+ \big(2k_{f}+ |\rho_{s\hspace{-.7pt}f}|\xi_{f}\big)^{2}(\delta t)^{2}\sup_{0\leq n\leq N}\E\Big[\big(\hat{r}_{t_{n}}^{f}\big)^{2}\Big] + 4\xi_{f}^{2}\delta t\sup_{0\leq n\leq N}\E\big[\hat{r}_{t_{n}}^{f}\big].
\end{align}
Using Lemma \ref{LemF.3} as well as Proposition 3.4 in \citet{Cozma:2015} concludes the proof.
\end{proof}

The following lemma derives the strong mean square convergence of the stopped process.

\begin{lemma}\label{LemF.5}
Let $l>v_{0}$ and define the stopping times
\begin{equation}\label{eqF.22}
\tau_{l} = \inf\big\{t\geq 0 :\hspace{2pt} v_{t}\geq l \big\} \hspace{5pt}\text{ and }\hspace{5pt} \tau = \tau_{\kappa}\hspace{-1pt}\wedge\tau_{l}\hspace{.5pt},
\end{equation}
with $\tau_{\kappa}$ defined in \eqref{eqF.1}. Then the stopped process converges uniformly in $L^{2}$, i.e.,
\begin{equation}\label{eqF.23}
\plim_{\delta t\to0}\E\bigg[\sup_{t \in [0,T]}\big(r^{f}_{t\wedge\tau}-\hat{r}^{f}_{t\wedge\tau}\big)^{2}\bigg] = 0.
\end{equation}
\end{lemma}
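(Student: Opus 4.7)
Since $r^{f}\geq 0$ and $\hat{r}^{f}=(\tilde{r}^{f})^{+}$, a direct case analysis gives $|r^{f}_{t}-\hat{r}^{f}_{t}|\leq|r^{f}_{t}-\tilde{r}^{f}_{t}|$ pointwise. Hence it suffices to establish
\begin{equation*}
\plim_{\delta t\to0}\E\bigg[\sup_{t\in[0,T]}\big(r^{f}_{t\wedge\tau}-\tilde{r}^{f}_{t\wedge\tau}\big)^{2}\bigg]=0.
\end{equation*}
Subtracting the integral representations of $r^{f}$ and $\tilde{r}^{f}$, taking the supremum on $[0,t]$, squaring, and applying Cauchy--Schwarz to the drift integrals and the Burkholder--Davis--Gundy inequality to the martingale part yields
\begin{align*}
Z_{t}&\equiv\E\bigg[\sup_{s\in[0,t]}\big(r^{f}_{s\wedge\tau}-\tilde{r}^{f}_{s\wedge\tau}\big)^{2}\bigg]\\
&\leq C_{1}\!\int_{0}^{t}\!\!\E\!\Big[\big(r^{f}_{u\wedge\tau}-\oversymb{r}^{f}_{u\wedge\tau}\big)^{2}\Big]du+C_{2}\!\int_{0}^{t}\!\!\E\!\Big[\big(\sqrt{v_{u}r^{f}_{u}}-\sqrt{\oversymb{V}_{u}\oversymb{r}^{f}_{u}}\big)^{2}\Ind_{u<\tau}\Big]du+C_{3}\!\int_{0}^{t}\!\!\E\!\Big[\big(\sqrt{r^{f}_{u}}-\sqrt{\oversymb{r}^{f}_{u}}\big)^{2}\Ind_{u<\tau}\Big]du,
\end{align*}
for constants $C_{i}=C_{i}(k_{f},\xi_{f},\rho_{s\hspace{-.7pt}f},T)$.

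\textbf{Controlling the square-root differences.} On the event $\{u<\tau\}$ we have $r^{f}_{u}\geq\kappa^{-1}$ and $v_{u}\leq l$. Using the identity $|\sqrt{a}-\sqrt{b}|=|a-b|/(\sqrt{a}+\sqrt{b})$, the stopping yields the local Lipschitz estimate
\begin{equation*}
\big|\sqrt{r^{f}_{u}}-\sqrt{\oversymb{r}^{f}_{u}}\big|\leq\sqrt{\kappa}\hspace{1pt}\big|r^{f}_{u}-\oversymb{r}^{f}_{u}\big|\hspace{3pt}\text{on }\{u<\tau\}.
\end{equation*}
For the bivariate term, I would split
\begin{equation*}
\big|\sqrt{v_{u}r^{f}_{u}}-\sqrt{\oversymb{V}_{u}\oversymb{r}^{f}_{u}}\big|\leq\sqrt{v_{u}}\hspace{1pt}\big|\sqrt{r^{f}_{u}}-\sqrt{\oversymb{r}^{f}_{u}}\big|+\sqrt{\oversymb{r}^{f}_{u}}\hspace{1pt}\big|\sqrt{v_{u}}-\sqrt{\oversymb{V}_{u}}\big|,
\end{equation*}
bound the first piece by $\sqrt{l\kappa}\hspace{1pt}|r^{f}_{u}-\oversymb{r}^{f}_{u}|$ on $\{u<\tau\}$, and use the elementary inequality $(\sqrt{a}-\sqrt{b})^{2}\leq|a-b|$ together with Cauchy--Schwarz and Lemma~\ref{LemF.3} to estimate the second piece by $\E[(\oversymb{r}^{f}_{u})^{2}]^{1/2}\E[(v_{u}-\oversymb{V}_{u})^{2}]^{1/2}$, which tends to zero uniformly in $u$ by Proposition~3.5 in \citet{Cozma:2015}.

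\textbf{Triangle decomposition and closing Gronwall.} Writing $r^{f}-\oversymb{r}^{f}=(r^{f}-\tilde{r}^{f})+(\tilde{r}^{f}-\hat{r}^{f})+(\hat{r}^{f}-\oversymb{r}^{f})$, I would observe that on $\{u<\tau\}$ the middle term satisfies $(\tilde{r}^{f}_{u}-\hat{r}^{f}_{u})^{2}=(\tilde{r}^{f}_{u})^{2}\Ind_{\tilde{r}^{f}_{u}<0}\leq(r^{f}_{u}-\tilde{r}^{f}_{u})^{2}$, since $r^{f}_{u}\geq\kappa^{-1}>0$ forces $r^{f}_{u}-\tilde{r}^{f}_{u}\geq|\tilde{r}^{f}_{u}|$ whenever $\tilde{r}^{f}_{u}<0$. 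Combined with Lemma~\ref{LemF.4}, which handles the last term, this yields $\E[(r^{f}_{u\wedge\tau}-\oversymb{r}^{f}_{u\wedge\tau})^{2}]\leq 6\hspace{1pt}\E[(r^{f}_{u\wedge\tau}-\tilde{r}^{f}_{u\wedge\tau})^{2}]+3\hspace{1pt}\varepsilon^{(1)}_{\delta t}$, with $\varepsilon^{(1)}_{\delta t}\to0$. Substituting all these bounds back gives an integral inequality of the form
\begin{equation*}
Z_{t}\leq C\!\int_{0}^{t}\!Z_{u}\hspace{1pt}du+\varepsilon_{\delta t},\hspace{6pt}\text{with}\hspace{4pt}\varepsilon_{\delta t}\to0\hspace{3pt}\text{as }\delta t\to0,
\end{equation*}
and Gronwall's lemma closes the argument.

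\textbf{Main obstacle.} The delicate step is the bivariate square-root term $\sqrt{v_{u}r^{f}_{u}}-\sqrt{\oversymb{V}_{u}\oversymb{r}^{f}_{u}}$: neither $v$ nor $\oversymb{V}$ is bounded away from zero, so one cannot directly make $(\sqrt{v}-\sqrt{\oversymb{V}})$ Lipschitz. The proposed workaround is the split above, which isolates the ``rough'' factor into a product with the moment-bounded process $\sqrt{\oversymb{r}^{f}}$ and exploits the uniform bound $(\sqrt{v}-\sqrt{\oversymb{V}})^{2}\leq|v-\oversymb{V}|$ together with the already established strong $L^{2}$ convergence of the variance discretization.
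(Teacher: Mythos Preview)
Your proposal is correct and follows essentially the same strategy as the paper: reduce to $|r^{f}-\tilde{r}^{f}|$, exploit the stopping to make $\sqrt{\cdot}$ locally Lipschitz on $r^{f}$ (via $r^{f}\geq\kappa^{-1}$) and to bound $v\leq l$, split the quanto term, invoke Lemmas~\ref{LemF.3}--\ref{LemF.4} and the $L^{2}$ convergence of the variance discretization, and close with Gronwall. The only organizational difference is that the paper decomposes directly into seven pieces by inserting $\hat{r}^{f}$ as an intermediate point in each integrand (see \eqref{eqF.24}--\eqref{eqF.25}), whereas you group into three pieces against $\oversymb{r}^{f}$ and then perform the triangle split $r^{f}-\oversymb{r}^{f}=(r^{f}-\tilde{r}^{f})+(\tilde{r}^{f}-\hat{r}^{f})+(\hat{r}^{f}-\oversymb{r}^{f})$ afterwards; your observation that $(\tilde{r}^{f}-\hat{r}^{f})^{2}\leq(r^{f}-\tilde{r}^{f})^{2}$ (which in fact holds everywhere since $r^{f}\geq 0$, not only on $\{u<\tau\}$) is a neat shortcut that the paper's ordering does not need.
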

\begin{proof}
From \eqref{eqF.12} and \eqref{eqF.17}, since $|r^{f}_{t}-\hat{r}^{f}_{t}|\leq|r^{f}_{t}-\tilde{r}^{f}_{t}|$, we have
\begin{align}\label{eqF.24}
\big|r^{f}_{t\wedge\tau}-\hat{r}^{f}_{t\wedge\tau}\big| &\leq
\bigg|- k_{f}\int_{0}^{t\wedge\tau}{\hspace{-3.25pt}\big(r^{f}_{u}-\hat{r}^{f}_{u}\big)du}
- k_{f}\int_{0}^{t\wedge\tau}{\hspace{-3.25pt}\big(\hat{r}^{f}_{u}-\oversymb{r}^{f}_{u}\big)du}
+ \xi_{f}\int_{0}^{t\wedge\tau}{\hspace{-3pt}\Big(\sqrt{r^{f}_{u}}-\sqrt{\hat{r}^{f}_{u}}\hspace{1.5pt}\Big)dW^{f}_{u}} \nonumber\\[2pt]
&\hspace{-1em}+ \xi_{f}\int_{0}^{t\wedge\tau}{\hspace{-3pt}\Big(\sqrt{\hat{r}^{f}_{u}}-\sqrt{\oversymb{r}^{f}_{u}}\hspace{1.5pt}\Big)dW^{f}_{u}}
- \rho_{s\hspace{-.7pt}f}\xi_{f}\int_{0}^{t\wedge\tau}{\hspace{-3pt}\sqrt{v_{u}}\hspace{1pt}\Big(\sqrt{r^{f}_{u}}-\sqrt{\hat{r}^{f}_{u}}\hspace{1.5pt}\Big)du} \nonumber\\[2pt]
&\hspace{-1em}- \rho_{s\hspace{-.7pt}f}\xi_{f}\int_{0}^{t\wedge\tau}{\hspace{-3pt}\sqrt{v_{u}}\hspace{1pt}\Big(\sqrt{\hat{r}^{f}_{u}}-\sqrt{\oversymb{r}^{f}_{u}}\hspace{1.5pt}\Big)du}
- \rho_{s\hspace{-.7pt}f}\xi_{f}\int_{0}^{t\wedge\tau}{\hspace{-3pt}\sqrt{\oversymb{r}^{f}_{u}}\hspace{1pt}\Big(\sqrt{v_{u}}-\sqrt{\oversymb{V}_{\hspace{-2.5pt}u}}\hspace{1.5pt}\Big)du}\hspace{1pt}\bigg|.
\end{align}
Fix $t\in[0,T]$. Squaring both sides and using Cauchy's inequality, then taking expectations and using Doob's martingale inequality and Fubini's theorem, we get
\begin{align}\label{eqF.25}
\E\bigg[\sup_{s\in[0,t]}\big(r^{f}_{s\wedge\tau}-\hat{r}^{f}_{s\wedge\tau}\big)^{2}\bigg] &\leq
7k_{f}^{2}T\int_{0}^{t}{\E\Big[\big(r^{f}_{u}-\hat{r}^{f}_{u}\big)^{2}\Ind_{u<\hspace{1pt}\tau}\hspace{-1pt}\Big]du}
+ 7k_{f}^{2}T\int_{0}^{T}{\E\Big[\big(\hat{r}^{f}_{u}-\oversymb{r}^{f}_{u}\big)^{2}\Big]du} \nonumber\\[1pt]
&\hspace{-5.2em}+ 28\xi_{f}^{2}\int_{0}^{t}{\E\bigg[\Big(\sqrt{r^{f}_{u}}-\sqrt{\hat{r}^{f}_{u}}\hspace{1.5pt}\Big)^{\hspace{-1pt}2}\Ind_{u<\hspace{1pt}\tau}\hspace{-1pt}\bigg]du}
+ 28\xi_{f}^{2}\int_{0}^{T}{\E\Big[\big|\hat{r}^{f}_{u}-\oversymb{r}^{f}_{u}\big|\Big]du} \nonumber\\[2pt]
&\hspace{-5.2em}+ 7\rho_{s\hspace{-.7pt}f}^{2}\xi_{f}^{2}T\int_{0}^{t}{\E\bigg[v_{u}\Big(\sqrt{r^{f}_{u}}-\sqrt{\hat{r}^{f}_{u}}\hspace{1.5pt}\Big)^{\hspace{-1pt}2}\Ind_{u<\hspace{1pt}\tau}\hspace{-1pt}\bigg]du}
+ 7\rho_{s\hspace{-.7pt}f}^{2}\xi_{f}^{2}T\int_{0}^{T}{\E\Big[\oversymb{r}^{f}_{u}\big|v_{u}-\oversymb{V}_{\hspace{-2.5pt}u}\big|\Big]du} \nonumber\\[3pt]
&\hspace{-5.2em}+ 7\rho_{s\hspace{-.7pt}f}^{2}\xi_{f}^{2}T\int_{0}^{T}{\E\Big[v_{u}\big|\hat{r}^{f}_{u}-\oversymb{r}^{f}_{u}\big|\Ind_{u<\hspace{1pt}\tau}\hspace{-1pt}\Big]du}.
\end{align}
On the other hand, we know that $|r^{f}_{u}-\hat{r}^{f}_{u}|\Ind_{u<\hspace{.5pt}\tau} \leq |r^{f}_{u\wedge\tau}-\hat{r}^{f}_{u\wedge\tau}|$ and
\begin{equation}\label{eqF.26}
\Big(\sqrt{r^{f}_{u}}-\sqrt{\hat{r}^{f}_{u}}\hspace{1.5pt}\Big)^{\hspace{-1pt}2}\Ind_{u<\hspace{1pt}\tau} \leq
\Big(\sqrt{r^{f}_{u\wedge\tau}}-\sqrt{\hat{r}^{f}_{u\wedge\tau}}\hspace{1.5pt}\Big)^{\hspace{-1pt}2} \leq
\kappa\big(r^{f}_{u\wedge\tau}-\hat{r}^{f}_{u\wedge\tau}\big)^{2}.
\end{equation}
Substituting back into \eqref{eqF.25} with \eqref{eqF.26}, we arrive at the following inequality:
\begin{align}\label{eqF.27}
\E\bigg[\sup_{s\in[0,t]}\big(r^{f}_{s\wedge\tau}-\hat{r}^{f}_{s\wedge\tau}\big)^{2}\bigg] &\leq
\big(7k_{f}^{2}T+28\xi_{f}^{2}\kappa+7\rho_{s\hspace{-.7pt}f}^{2}\xi_{f}^{2}Tl\kappa\big)\int_{0}^{t}{\E\bigg[\sup_{s\in[0,u]}\big(r^{f}_{s\wedge\tau}-\hat{r}^{f}_{s\wedge\tau}\big)^{2}\bigg]du} \nonumber\\[3pt]
&\hspace{-3.5em}+ 7k_{f}^{2}T^{2}\sup_{u\in[0,T]}\E\Big[\big(\hat{r}^{f}_{u}-\oversymb{r}^{f}_{u}\big)^{2}\Big]
+ \big(28\xi_{f}^{2}T+7\rho_{s\hspace{-.7pt}f}^{2}\xi_{f}^{2}T^{2}l\big)\sup_{u\in[0,T]}\E\Big[\big|\hat{r}^{f}_{u}-\oversymb{r}^{f}_{u}\big|\Big] \nonumber\\[0pt]
&\hspace{-3.5em}+ 7\rho_{s\hspace{-.7pt}f}^{2}\xi_{f}^{2}T^{2}\sup_{u\in[0,T]}\E\Big[\big(\oversymb{r}^{f}_{u}\big)^{2}\Big]^{\frac{1}{2}}\sup_{u\in[0,T]}\E\Big[\big|v_{u}-\oversymb{V}_{\hspace{-2.5pt}u}\big|^{2}\Big]^{\frac{1}{2}}.
\end{align}
The convergence to zero of the last three terms on the right-hand side of \eqref{eqF.27} follows from Lemmas \ref{LemF.3} and \ref{LemF.4}, and Proposition 3.5 in \citet{Cozma:2015}. The conclusion follows from a simple application of Gronwall's inequality.
\end{proof}

From Lemma \ref{LemF.4}, we know that in order to establish the strong mean square convergence of $\oversymb{r}^{f}$, it suffices to prove this for $\hat{r}^{f}$, since
\begin{equation}\label{eqF.28}
\big|r^{f}_{t}-\oversymb{r}^{f}_{t}\big|^{2} \leq 2\big|r^{f}_{t}-\hat{r}^{f}_{t}\big|^{2} + 2\big|\hat{r}^{f}_{t}-\oversymb{r}^{f}_{t}\big|^{2},\hspace{5pt} \forall\hspace{.5pt}t\in[0,T].
\end{equation}

\begin{lemma}\label{LemF.6} If $2k_{f}\theta_{f}>\xi_{f}^{2}$, then the process $\hat{r}^{f}$ converges strongly in $L^{2}$, i.e.,
\begin{equation}\label{eqF.29}
\plim_{\delta t\to0}\hspace{1pt}\sup_{t\in[0,T]}\E\Big[\big|r^{f}_{t}-\hat{r}^{f}_{t}\big|^{2}\Big] = 0.
\end{equation}
\end{lemma}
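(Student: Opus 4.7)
The strategy is the classical localization argument: combine the strong $L^{2}$ convergence of the stopped processes from Lemma~\ref{LemF.5} with tail probability bounds for the stopping times $\tau_{\kappa}$ and $\tau_{l}$, and complete the argument with uniform higher-moment bounds from Lemmas~\ref{LemF.2} and~\ref{LemF.3}.

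First, I would fix the two thresholds $\kappa > (r^{f}_{0})^{-1}$ and $l > v_{0}$, let $\tau=\tau_{\kappa}\wedge\tau_{l}$, and split the expectation via
\begin{equation*}
\E\big[|r^{f}_{t}-\hat{r}^{f}_{t}|^{2}\big] = \E\big[|r^{f}_{t}-\hat{r}^{f}_{t}|^{2}\Ind_{\tau>T}\big] + \E\big[|r^{f}_{t}-\hat{r}^{f}_{t}|^{2}\Ind_{\tau\leq T}\big].
\end{equation*}
On $\{\tau>T\}$ we have $r^{f}_{s}=r^{f}_{s\wedge\tau}$ and $\hat{r}^{f}_{s}=\hat{r}^{f}_{s\wedge\tau}$ for all $s\in[0,T]$, so the first term is dominated by $\E\big[\sup_{s\in[0,T]}|r^{f}_{s\wedge\tau}-\hat{r}^{f}_{s\wedge\tau}|^{2}\big]$, which by Lemma~\ref{LemF.5} converges to zero as $\delta t\to 0$ for \emph{fixed} $\kappa$ and $l$.

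Second, for the tail term, I would apply H\"older's inequality with exponents $p>1$ and $q=p/(p-1)$ to obtain
\begin{equation*}
\E\big[|r^{f}_{t}-\hat{r}^{f}_{t}|^{2}\Ind_{\tau\leq T}\big] \leq \E\big[|r^{f}_{t}-\hat{r}^{f}_{t}|^{2p}\big]^{1/p}\Prob(\tau\leq T)^{1/q}.
\end{equation*}
The first factor is bounded uniformly in $t\in[0,T]$ and $\delta t\in(0,\eta)$ by $2^{2p-1}(\E[(r^{f}_{t})^{2p}]+\E[(\hat{r}^{f}_{t})^{2p}])$, which is finite by Lemmas~\ref{LemF.2} and~\ref{LemF.3}. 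For the tail, a union bound gives $\Prob(\tau\leq T)\leq \Prob(\tau_{\kappa}\leq T)+\Prob(\tau_{l}\leq T)$; the first piece vanishes as $\kappa\to\infty$ by Lemma~\ref{LemF.1} (here the Feller condition $2k_{f}\theta_{f}>\xi_{f}^{2}$ enters critically), and the second piece vanishes as $l\to\infty$ by Markov's inequality combined with the uniform boundedness of moments of $v$ from the CIR literature.

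Finally, I would assemble the bounds in the standard order: given any $\varepsilon>0$, first choose $\kappa$ and $l$ large enough to force $\Prob(\tau\leq T)^{1/q}$ so small that the tail contribution is below $\varepsilon/2$ uniformly in $\delta t$; with $\kappa,l$ now fixed, invoke Lemma~\ref{LemF.5} to choose $\delta t$ small enough that the localized term falls below $\varepsilon/2$. Since $t\in[0,T]$ was arbitrary and both bounds are uniform in $t$, this yields \eqref{eqF.29}. The main obstacle is precisely the interlocking of the three limits: the constants implicit in Lemma~\ref{LemF.5} blow up as $\kappa,l\to\infty$ (because the local Lipschitz constant of $x\mapsto\sqrt{x}$ on $[\kappa^{-1},\infty)$ is $\sqrt{\kappa}/2$ and the truncation of $v$ at level $l$ enters multiplicatively), so the order $\kappa,l\to\infty$ \emph{before} $\delta t\to 0$ is essential, and the Feller condition is what makes this order of limits give a vanishing estimate.
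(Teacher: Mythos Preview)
Your proposal is correct and follows essentially the same localization argument as the paper's proof: split on the stopping event, control the stopped part via Lemma~\ref{LemF.5}, control the tail via higher moments (Lemmas~\ref{LemF.2}--\ref{LemF.3}) combined with $\Prob(\tau_{\kappa}\leq T)\to0$ (Lemma~\ref{LemF.1}) and $\Prob(\tau_{l}\leq T)\to0$ (Markov), then take the limits in the right order. The only cosmetic differences are that the paper splits on $\{t<\tau\}$ versus $\{\tau\leq t\}$ rather than on $\{\tau>T\}$ versus $\{\tau\leq T\}$, and uses the non-negativity of $r^{f}$ and $\hat r^{f}$ together with Cauchy--Schwarz (your H\"older with $p=2$) rather than a general H\"older exponent; neither difference is material.
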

\begin{proof}
Fix $\kappa>(r^{f}_{0})^{-1}$, $l>v_{0}$, and recall the definition of the stopping time $\tau$ from \eqref{eqF.22}. Since $r^{f}$ and $\hat{r}^{f}$ are non-negative,
\begin{align*}
\sup_{t\in[0,T]}\E\Big[\big|r^{f}_{t}-\hat{r}^{f}_{t}\big|^{2}\Big] &\leq \sup_{t\in[0,T]}\E\Big[\big|r^{f}_{t}-\hat{r}^{f}_{t}\big|^{2}\Ind_{\tau\leq\hspace{1pt}t}\hspace{-1pt}\Big] + \sup_{t\in[0,T]}\E\Big[\big|r^{f}_{t}-\hat{r}^{f}_{t}\big|^{2}\Ind_{t<\hspace{1pt}\tau}\hspace{-1pt}\Big] \\[1pt]
&\hspace{-1.75em}\leq \sup_{t\in[0,T]}\E\Big[\big(r^{f}_{t}\big)^{2}\Ind_{\tau\leq\hspace{1pt}T}\hspace{-1pt}\Big] + \sup_{t\in[0,T]}\E\Big[\big(\hat{r}^{f}_{t}\big)^{2}\Ind_{\tau\leq\hspace{1pt}T}\hspace{-1pt}\Big] + \sup_{t\in[0,T]}\E\Big[\big|r^{f}_{t\wedge\tau}-\hat{r}^{f}_{t\wedge\tau}\big|^{2}\Big].
\end{align*}
Since $\Ind_{\tau\leq\hspace{1pt}T}\leq\Ind_{\tau_{\kappa}\leq\hspace{1pt}T}+\Ind_{\tau_{l}\leq\hspace{1pt}T}$ and applying Cauchy's inequality, we get
\begin{align}\label{eqF.30}
\sup_{t\in[0,T]}\E\Big[\big|r^{f}_{t}-\hat{r}^{f}_{t}\big|^{2}\Big] &\leq
\bigg\{\sup_{t\in[0,T]}\E\Big[\big(r^{f}_{t}\big)^{4}\Big]^{\frac{1}{2}} + \sup_{t\in[0,T]}\E\Big[\big(\hat{r}^{f}_{t}\big)^{4}\Big]^{\frac{1}{2}}\bigg\}\Prob\big(\tau_{\kappa}\leq T\big)^{\frac{1}{2}} \nonumber\\[0pt]
&+ \bigg\{\sup_{t\in[0,T]}\E\Big[\big(r^{f}_{t}\big)^{4}\Big]^{\frac{1}{2}} + \sup_{t\in[0,T]}\E\Big[\big(\hat{r}^{f}_{t}\big)^{4}\Big]^{\frac{1}{2}}\bigg\}\Prob\big(\tau_{l}\leq T\big)^{\frac{1}{2}} \nonumber\\[3pt]
&+ \sup_{t\in[0,T]}\E\Big[\big|r^{f}_{t\wedge\tau}-\hat{r}^{f}_{t\wedge\tau}\big|^{2}\Big].
\end{align}
On the other hand, using Markov's inequality, we obtain an upper bound
\begin{equation}\label{eqF.31}
\Prob\big(\tau_{l}\leq T\big) \leq \Prob\bigg(\sup_{t\in[0,T]}v_{t}\geq l\bigg) \leq \frac{1}{l}\E\bigg[\sup_{t\in[0,T]}v_{t}\bigg].
\end{equation}
However, the expectation on the right-hand side is clearly finite by the Burkholder-Davis-Gundy inequality. Taking the limit as $\delta t \to 0$ in \eqref{eqF.30} and employing Lemmas \ref{LemF.1} to \ref{LemF.3} and \ref{LemF.5}, since $\kappa$ and $l$ can be made arbitrarily large, leads to the conclusion.
\end{proof}

\section{Proof of Proposition \ref{Prop3.8}}\label{sec:aux3.8}

Note that $\forall\hspace{.5pt} t\in[t_{n},t_{n+1})$ and $\forall\hspace{0pt} j\in\{2,3,4\}$, since $\oversymb{V}$ is piecewise constant,
\begin{align}\label{eq3.45}
\int_{0}^{t}{\sqrt{\oversymb{V}_{\hspace{-2.5pt}u}}\,\frac{\delta W^{j}_{u}}{\delta t}\,du}
&= \sum_{i=0}^{n-1}{\sqrt{\oversymb{V}_{\hspace{-2.5pt}t_{i}}}\,\frac{W^{j}_{t_{i+1}}-W^{j}_{t_{i}}}{\delta t}\hspace{1pt}\delta t + \sqrt{\oversymb{V}_{\hspace{-2.5pt}t_{n}}}\,\frac{W^{j}_{t_{n+1}}-W^{j}_{t_{n}}}{\delta t}\hspace{.5pt}(t-t_{n})} \nonumber\\[2pt]
&\hspace{-2.5em}= \int_{0}^{t}{\sqrt{\oversymb{V}_{\hspace{-2.5pt}u}}\,dW_{u}^{j}} + \sqrt{\oversymb{V}_{\hspace{-2.5pt}t}}\left[\frac{t-t_{n}}{\delta t}\big(W_{t_{n+1}}^{j}-W_{t}^{j}\big) - \frac{t_{n+1}-t}{\delta t}\big(W_{t}^{j}-W_{t_{n}}^{j}\big)\right]\hspace{-.5pt}.
\end{align}
For convenience, $\forall\hspace{.5pt} t\in[t_{n},t_{n+1})$ and $\forall\hspace{0pt} j\in\{2,3,4\}$, we define
\begin{equation}\label{eq3.46}
Z^{j}_{t} = \frac{t-t_{n}}{\delta t}\big(W_{t_{n+1}}^{j}-W_{t}^{j}\big) - \frac{t_{n+1}-t}{\delta t}\big(W_{t}^{j}-W_{t_{n}}^{j}\big).
\end{equation}
Substituting back into \eqref{eq3.43} with \eqref{eq3.45} and \eqref{eq3.46}, we obtain
\begin{equation}\label{eq3.47}
X_{t} = x_{0} + \int_{0}^{t}{\Big(\oversymb{r}^{d}_{u}-\oversymb{r}^{f}_{u}-\frac{1}{2}\hspace{1pt}\oversymb{V}_{\hspace{-2.5pt}u}\Big)du} + \int_{0}^{t}{\sqrt{\oversymb{V}_{\hspace{-2.5pt}u}}\,dW_{u}^{s}} + \sum_{j=2}^{4}{a_{1\hspace{-0.5pt}j}\hspace{.5pt}\sqrt{\oversymb{V}_{\hspace{-2.5pt}t}}\,Z^{j}_{t}}.
\end{equation}
The absolute difference between the original and the discretized log-processes is thus
\begin{align}\label{eq3.48}
\big|x_{t}-X_{t}\big| &= \bigg|\int_{0}^{t}{\big(r^{d}_{u}-\oversymb{r}^{d}_{u}\big)du}-\int_{0}^{t}{\big(r^{f}_{u}-\oversymb{r}^{f}_{u}\big)du}-\frac{1}{2}\int_{0}^{t}{\big(v_{u}-\oversymb{V}_{\hspace{-2.5pt}u}\big)du} \nonumber\\[2pt]
&+ \int_{0}^{t}{\Big(\sqrt{v_{u}}-\sqrt{\oversymb{V}_{\hspace{-2.5pt}u}}\Big)\hspace{.5pt}dW_{u}^{s}} - \sum_{j=2}^{4}{a_{1\hspace{-0.5pt}j}\hspace{.5pt}\sqrt{\oversymb{V}_{\hspace{-2.5pt}t}}\,Z^{j}_{t}}\,\bigg|\,.
\end{align}
Squaring both sides of \eqref{eq3.48}, applying the Cauchy-Schwarz inequality, taking the supremum over all $t\in[0,T]$, and then using Cauchy's inequality for all Riemann integrals leads to
\begin{align}\label{eq3.49}
\sup_{t \in [0,T]}\big|x_{t}-X_{t}\big|^{2} &\leq 7T\int_{0}^{T}{\big(r^{d}_{u}-\oversymb{r}^{d}_{u}\big)^{2}du} + 7T\int_{0}^{T}{\big(r^{f}_{u}-\oversymb{r}^{f}_{u}\big)^{2}du} + \frac{7}{4}\hspace{1pt}T\int_{0}^{T}{\big(v_{u}-\oversymb{V}_{\hspace{-2.5pt}u}\big)^{2}du} \nonumber\\[0pt]
&+ 7\sup_{t \in [0,T]}\left|\int_{0}^{t}{\Big(\sqrt{v_{u}}-\sqrt{\oversymb{V}_{\hspace{-2.5pt}u}}\Big)\hspace{.5pt}dW_{u}^{s}}\right|^{2} \hspace{-1pt}+ 7\sum_{j=2}^{4}{a_{1\hspace{-0.5pt}j}^{2}\sup_{t \in [0,T]}\hspace{-1pt}V_{t}\hspace{1pt}\sup_{t \in [0,T]}\hspace{-1pt}\big|Z_{t}^{j}\big|^{2}}.
\end{align}
We used the fact that $\sup_{t\in[0,T]}\oversymb{V}_{\hspace{-2.5pt}t}\leq\sup_{t\in[0,T]}V_{t}$, with $V$ defined as in \eqref{eq2.6}. Taking expectations and employing Fubini's theorem, H\"older's inequality, Doob's martingale inequality and the It\^o isometry, we derive
\begin{align}\label{eq3.50}
\E\bigg[\sup_{t \in [0,T]}\big|x_{t}-X_{t}\big|^{2}\bigg] &\leq 7T^{2}\sup_{t \in \left[0,T\right]}\E\left[\big|r^{d}_{t}-\oversymb{r}^{d}_{t}\big|^{2}\right] + 7T^{2}\sup_{t \in \left[0,T\right]}\E\left[\big|r^{f}_{t}-\oversymb{r}^{f}_{t}\big|^{2}\right] \nonumber\\[2pt]
&+ \frac{7}{4}\hspace{1pt}T^{2}\sup_{t \in [0,T]}\E\left[\big|v_{t}-\oversymb{V}_{\hspace{-2.5pt}t}\big|^{2}\right] + 28\hspace{.5pt}T\sup_{t \in [0,T]}\E\Big[\big|v_{t}-\oversymb{V}_{\hspace{-2.5pt}t}\big|\Big] \nonumber\\
&+ 7\sum_{j=2}^{4}{a_{1\hspace{-0.5pt}j}^{2}\hspace{1pt}\E\bigg[\sup_{t \in [0,T]}V_{t}^{2}\bigg]^{1/2}\E\bigg[\sup_{t \in [0,T]}\big|Z_{t}^{j}\big|^{4}\bigg]^{1/2}}.
\end{align}
The convergence as $\delta t\to0$ of the first four terms on the right-hand side of \eqref{eq3.50} follows from Proposition \ref{Prop3.8.0}, and Proposition 3.5 in \citet{Cozma:2015}. Next, integrating the time continuous auxiliary variance process defined in \eqref{eq2.5} leads to
\begin{equation}\label{eq3.51}
\tilde{v}_{t} = v_{0} + k\int_{0}^{t}{\big(\theta-\oversymb{V}_{\hspace{-2.5pt}u}\big)du} + \xi\int_{0}^{t}{\sqrt{\oversymb{V}_{\hspace{-2.5pt}u}}\hspace{1.5pt}dW_{u}^{4}}.
\end{equation}
However, $V=\max\{0,\tilde{v}\}\leq\left|\tilde{v}\right|$ and, using Cauchy's inequality, Fubini's theorem and Doob's inequality, we find an upper bound
\begin{equation}\label{eq3.52}
\E\bigg[\sup_{t \in [0,T]}V_{t}^{2}\bigg] \leq 3\big(v_{0}+k\theta T\big)^{2} + 3k^{2}T^{2}\sup_{t \in [0,T]}\E\big[V_{t}^{2}\big] + 12\xi^{2}T\sup_{t \in [0,T]}\E\big[V_{t}\big].
\end{equation}
The uniform boundedness of the second moment of the FTE discretization for the variance as $\delta t\to0$ follows from Proposition 3.4 in \citet{Cozma:2015}. Finally, employing the definition in \eqref{eq3.46}, we bound the term inside the last expectation in \eqref{eq3.50} from above.
\begin{align}\label{eq3.53}
\sup_{t\in[0,T]}\big|Z_{t}^{j}\big|^{4}
&= \sup_{0\leq n<N}\hspace{1pt}\sup_{t_{n}\leq t<t_{n+1}}\bigg|\frac{t-t_{n}}{\delta t}\big(W_{t_{n+1}}^{j}-W_{t_{n}}^{j}\big) - \big(W_{t}^{j}-W_{t_{n}}^{j}\big)\bigg|^{4} \nonumber\\[1pt]
&\leq 8\sup_{0\leq n<N}\hspace{1pt}\sup_{t_{n}\leq t<t_{n+1}}\bigg[\left(\frac{t-t_{n}}{\delta t}\right)^{\hspace{-2pt}4}\big|W_{t_{n+1}}^{j}-W_{t_{n}}^{j}\big|^{4} + \big|W_{t}^{j}-W_{t_{n}}^{j}\big|^{4}\bigg] \nonumber\\[6pt]
&\leq 16\sup_{0\leq n<N}\hspace{1pt}\sup_{t_{n}\leq t<t_{n+1}}\big|W_{t}^{j}-W_{t_{n}}^{j}\big|^{4} \nonumber\\[6pt]
&\leq 16\sup_{t\in[0,T]}\big|W_{t}^{j}-W_{\delta t\floor{t/\delta t}}^{j}\big|^{4}.
\end{align}
However, moments of the Euler modulus of continuity of a Brownian motion converge to $0$ as $\delta t\to0$ \citep{Fischer:2009}, which concludes the proof. \qed

\section{Proof of Proposition \ref{Prop3.9}}\label{sec:aux3.9}

For fixed, positive numbers $\epsilon$ and $\gamma$ such that $\log\left(1+\epsilon\right)>\gamma$, define the set
\begin{equation}\label{eq3.55}
B_{\epsilon,\gamma} = \big\{x \in \mathbb{R} \,\big|\, \exists\hspace{1pt} y \in \mathbb{R}\hspace{-1pt}:\, \left|x-y\right|<\gamma \,\text{ and }\, \left|e^{x}-e^{y}\right|\geq\epsilon \big\}.
\end{equation}
However, since the exponential function is strictly increasing,
\begin{equation*}
x \in B_{\epsilon,\gamma} \hspace{2pt}\Leftrightarrow\hspace{2pt}
\exists\hspace{1pt} y \in (x-\gamma, x+\gamma)\hspace{-1pt}:\, e^{\min\{x,y\}}\big(e^{|x-y|}-1\big)\geq\epsilon \hspace{2pt}\Leftrightarrow\hspace{2pt}
e^{x}\big(e^{\gamma}-1\big)>\epsilon\hspace{.5pt}.
\end{equation*}
Hence,
\begin{equation}\label{eq3.56}
B_{\epsilon,\gamma} = \big(\hspace{.5pt}a(\epsilon,\gamma),\hspace{.5pt}+\infty\hspace{.5pt}\big),\hspace{1pt}\text{ where }\hspace{1pt} a(\epsilon,\gamma)=\log\Big(\frac{\epsilon}{e^{\gamma}-1}\Big)>0.
\end{equation}
We have the following string of inclusions of events,
\begin{align}\label{eq3.57}
\bigg\{\hspace{-1pt}\sup_{t \in [0,T]}\hspace{-1pt}\big|S_{t}\hspace{-1pt}-\hspace{-1pt}\hspace{1.5pt}\oversymb{\hspace{-1.5pt}S}_{t}\big|\hspace{-1pt}>\hspace{-1pt}\epsilon\bigg\}
&\subseteq \bigg\{\hspace{-1pt}\sup_{t \in [0,T]}\hspace{-1pt}\big|x_{t}\hspace{-1pt}-\hspace{-1pt}X_{t}\big|\hspace{-1pt}\geq\hspace{-1pt}\gamma\bigg\} \cup \bigg\{\hspace{-1pt}\sup_{t \in [0,T]}\hspace{-1pt}\big|x_{t}\hspace{-1pt}-\hspace{-1pt}X_{t}\big|\hspace{-1pt}<\hspace{-1pt}\gamma, \sup_{t \in [0,T]}\hspace{-1pt}\big|e^{x_{t}}\hspace{-1.5pt}-\hspace{-1pt}e^{X_{t}}\big|\hspace{-1pt}>\hspace{-1pt}\epsilon\bigg\} \nonumber\\[1pt]
&\subseteq \bigg\{\hspace{-1pt}\sup_{t \in [0,T]}\hspace{-1pt}\big|x_{t}\hspace{-1pt}-\hspace{-1pt}X_{t}\big|\hspace{-1pt}\geq\hspace{-1pt}\gamma\bigg\} \cup \bigg\{\exists\hspace{1pt} t \in [0,T]\hspace{-1pt}:\, x_{t} \in B_{\epsilon,\gamma}\bigg\} \nonumber\\[1pt]
&\subseteq \bigg\{\hspace{-1pt}\sup_{t \in [0,T]}\hspace{-1pt}\big|x_{t}\hspace{-1pt}-\hspace{-1pt}X_{t}\big|\hspace{-1pt}\geq\hspace{-1pt}\gamma\bigg\} \cup \bigg\{\sup_{t \in [0,T]}x_{t}>a(\epsilon,\gamma)\bigg\}.
\end{align}
In terms of probabilities of events, the previous inclusion becomes:
\begin{equation}\label{eq3.58}
\Prob\bigg(\sup_{t \in [0,T]}\big|S_{t}-\hspace{1.5pt}\oversymb{\hspace{-1.5pt}S}_{t}\big|>\epsilon\bigg) \leq \Prob\bigg(\sup_{t \in [0,T]}\big|x_{t}-X_{t}\big|\geq\gamma\bigg) + \Prob\bigg(\sup_{t \in [0,T]}x_{t}>a(\epsilon,\gamma)\bigg).
\end{equation}
The convergence in probability of the log-process is a consequence of Proposition \ref{Prop3.8} and Markov's inequality. Therefore, all that we have left to prove is that the second probability on the right-hand side of \eqref{eq3.58} can be made arbitrarily small. However, if we fix $\epsilon>0$ and vary $\gamma$, then $\lim_{\gamma \to 0}a(\epsilon,\gamma) = \infty$. A simple application of Markov's inequality leads to
\begin{equation}\label{eq3.59}
\Prob\bigg(\sup_{t \in [0,T]}x_{t}>a(\epsilon,\gamma)\bigg) \leq \Prob\bigg(\sup_{t \in [0,T]}\left|x_{t}\right|>a(\epsilon,\gamma)\bigg) \leq \frac{1}{a(\epsilon,\gamma)}\E\bigg[\sup_{t \in [0,T]}\left|x_{t}\right|\bigg].
\end{equation}
On the other hand, using Jensen's inequality and Doob's martingale inequality,
\begin{align*}
\E\hspace{-.5pt}\bigg[\sup_{t \in [0,T]}\left|x_{t}\right|\bigg] &\leq \left|x_{0}\right| + T\hspace{-1pt}\sup_{t \in [0,T]}\hspace{-.5pt}\E\hspace{-.5pt}\big[r^{d}_{t}\big] \hspace{-.5pt}+ T\hspace{-1pt}\sup_{t \in [0,T]}\hspace{-.5pt}\E\hspace{-.5pt}\big[r^{f}_{t}\big] \hspace{-.5pt}+ \frac{T}{2}\sup_{t \in [0,T]}\hspace{-.5pt}\E\hspace{-.5pt}\big[v_{t}\big] \hspace{-.5pt}+ 2\sqrt{T}\hspace{-1pt}\sup_{t \in [0,T]}\hspace{-.5pt}\E\hspace{-.5pt}\big[v_{t}\big]^{\hspace{-.5pt}\frac{1}{2}}.
\end{align*}
However, the right-hand side is finite because the moments of the square root process are bounded and from Proposition \ref{LemF.2}, which concludes the proof. \qed

\section{Proof of Theorem \ref{Thm3.10}}\label{sec:aux3.10}

Fix $\epsilon > 0$ and define the event $A = \Big\{\big|S_{t}-\hspace{1.5pt}\oversymb{\hspace{-1.5pt}S}_{t}\big| > \epsilon\Big\}$. Since $S$ and $\hspace{1.5pt}\oversymb{\hspace{-1.5pt}S}$ are non-negative,
\begin{align}\label{eq3.62}
\sup_{t \in [0,T]}\E\Big[\big|S_{t}-\hspace{1.5pt}\oversymb{\hspace{-1.5pt}S}_{t}\big|^{\alpha}\Big]
&\leq \sup_{t \in [0,T]}\E\Big[\big|S_{t}-\hspace{1.5pt}\oversymb{\hspace{-1.5pt}S}_{t}\big|^{\alpha}\Ind_{A^{c}}\Big]
+ \sup_{t \in [0,T]}\E\Big[\big|S_{t}-\hspace{1.5pt}\oversymb{\hspace{-1.5pt}S}_{t}\big|^{\alpha}\Ind_{A}\Big] \nonumber\\[3pt]
&\leq \epsilon^{\alpha} + \sup_{t \in [0,T]}\E\big[S_{t}^{\alpha}\Ind_{A}\big] + \sup_{t \in [0,T]}\E\big[\hspace{.5pt}\hspace{1.5pt}\oversymb{\hspace{-1.5pt}S}_{\hspace{-.5pt}t}^{\alpha}\Ind_{A}\big].
\end{align}
Let $\alpha\hspace{-1pt}<\hspace{-1pt}\omega\hspace{-1pt}<\hspace{-1pt}\min\left\{\alpha_{1},\hspace{.5pt}\alpha_{2}\right\}$ and apply H\"older's inequality to the two expectations on the right-hand side of \eqref{eq3.62} with the pair $(p,q) = \big(\frac{\omega}{\alpha},\frac{\omega}{\omega-\alpha}\big)$. Hence,
\begin{align*}
\sup_{t \in [0,T]}\E\Big[\big|S_{t}-\hspace{1.5pt}\oversymb{\hspace{-1.5pt}S}_{t}\big|^{\alpha}\Big]
&\leq \epsilon^{\alpha} + \bigg\{\sup_{t \in [0,T]}\E\big[S_{t}^{\omega}\big]^{\hspace{-.5pt}\frac{\alpha}{\omega}}
+ \sup_{t \in [0,T]}\E\big[\hspace{.5pt}\hspace{1.5pt}\oversymb{\hspace{-1.5pt}S}_{\hspace{-.5pt}t}^{\omega}\big]^{\hspace{-.5pt}\frac{\alpha}{\omega}}\bigg\}\sup_{t \in [0,T]}\Prob\Big(\big|S_{t}-\hspace{1.5pt}\oversymb{\hspace{-1.5pt}S}_{t}\big|>\epsilon\Big)^{\hspace{-.5pt}1-\frac{\alpha}{\omega}}.
\end{align*}
The convergence of $\hspace{1.5pt}\oversymb{\hspace{-1.5pt}S}$ in probability is a consequence of Proposition \ref{Prop3.9}. Finally, employing Propositions \ref{Prop3.2} and \ref{Prop3.4} and then taking $\epsilon$ sufficiently small concludes the proof. \qed

\end{appendices}

\end{document}